\newcommand{\para}[1]{\subsection*{#1}}
\newcommand{
\scalebox{0.85}{\input{figures/.tikz}}
}[1]{
\scalebox{0.85}{\input{figures/#1.tikz}}
}
\newcommand{
\scalebox{0.8}{\input{figures/.tikz}}
}[1]{
\scalebox{0.8}{\input{figures/#1.tikz}}
}
\newcommand{\sta}{\psi}
\newcommand{\stb}{\phi}
\newcommand{\stc}{\omega}
\newcommand{\stl}{\xi}
\newcommand{\diagfont}[1]{{#1}}
\newcommand{\deff}[1]{\emph{#1}}
\newcommand{\strongpurification}{strong purification} 
\newcommand{\Rplus}{\mathbb{R}^+} 
\newcommand{\Rpos}{\Rplus} 
\newcommand{\notetoself}[1]{}
\newcommand{\omitfornow}[1]{}
\newcommand{\cat}[1]{\ensuremath{\mathbf{#1}}\xspace}
\newcommand{\HilbP}{\Hilb_\sim}
\newcommand{\FHilbP}{\FHilb_\sim}
\newcommand{\MatS}{\Mat_S}
\newcommand{\Quant}[1]{\cat{Quant}_{#1}}
\newcommand{\Class}{\cat{Class}}
\newcommand{\catA}{\cat{A}}
\newcommand{\catC}{\cat{C}}
\newcommand{\catH}{\cat{H}}
\newcommand{\catB}{\cat{B}}
\newcommand{\catD}{\cat{D}}
\newcommand{\FHilb}{\cat{FHilb}}
\newcommand{\Hilb}{\cat{Hilb}}
\newcommand{\Rel}{\cat{Rel}}
\newcommand{\CPM}{\ensuremath{\mathsf{CPM}}\xspace}
\newcommand{\pos}{\mathrm{pos}} 
\newcommand{\pure}{\mathsf{pure}}
\newcommand{\prepure}{\mathsf{p}} 
\newcommand{\selfadj}{\mathsf{sa}} 
\newcommand{\id}[1]{\ensuremath{\mathrm{id}_{#1}}}
\newcommand{\op}{\ensuremath{\mathrm{\rm op}}}
\newcommand{\discard}[1]{\ensuremath{\tinygroundnew_{#1}}}
\newcommand{\supp}{\mathsf{supp}}
\newcommand{\dring}[1]{D(#1)}
\DeclareMathOperator{\Tr}{Tr}
\newcommand{\Dbl}[1]{\widehat{#1}}
\newcommand{\ket}[1]{\ensuremath{| #1 \rangle}}
\newcommand{\Ker}{\mathrm{Ker}}
\newcommand{\Coker}{\mathrm{Coker}}
\newcommand{\img}{\mathrm{im}}
\newcommand{\Img}{\mathrm{Im}}
\newcommand{\CoIm}{\mathrm{Coim}}
\newcommand{\coim}{\mathrm{coim}}
\newcommand{\coker}{\mathrm{coker}}
\newcommand{\coproj}{\kappa}
\newcommand{\biprod}{\oplus}
\newcommand{\pbiprod}{\mathbin{\dot{\oplus}}}
\newcommand{\pcoproj}{\coproj}
\newcommand{\pproj}{\pi}
\newcommand{\GP}{\mathsf{GP}}
\newcommand{\plusI}[1]{\mathsf{GP}({#1})}
\newcommand{\plusIdagsym}{\mathsf{GP}^{\dagger}}
\newcommand{\obb}[1]{\mathbf{#1}}
\newcommand{\Mat}{\mathsf{Mat}}
\newcommand{\hilbH}{\mathcal{H}} 
\newcommand{\hilbK}{\mathcal{K}} 
\tikzstyle{cdot}=[circle, draw=black, fill=black!25, inner sep=.4ex] 
\tikzstyle{bigdot}=[dot, inner sep=0pt]
\tikzstyle{whitedot}=[circle, draw=black, fill=white, inner sep=.4ex]
\tikzstyle{blackdot}=[circle, draw=black, fill=black, inner sep=.4ex]
\tikzset{arrow/.style={decoration={
    markings,
    mark=at position #1 with \arrow{>[length=2pt, width=3pt]}},
    postaction=decorate},
    reverse arrow/.style={decoration={
    markings,
    mark=at position #1 with {{\arrow{<[length=2pt, width=3pt]}}}},
    postaction=decorate}
}
\newcommand{\tinymultflip}[1][cdot]{
\smash{\raisebox{-2pt}{\hspace{-5pt}\ensuremath{\begin{pic}[scale=0.4,yscale=1]
    \node (0) at (0,0) {};
    \node[#1, inner sep=1.5pt] (1) at (0,0.55) {};
    \node (2) at (-0.5,1) {};
    \node (3) at (0.5,1) {};
    \draw (0.center) to (1.center);
    \draw (1.center) to [out=left, in=down, out looseness=1.5] (2.center);
    \draw (1.center) to [out=right, in=down, out looseness=1.5] (3.center);
    \node[#1, inner sep=1.5pt] (1) at (0,0.55) {};
\end{pic}
}\hspace{-3pt}}}}
\newcommand{\tinycounit}[1][cdot]{
\smash{\raisebox{-3pt}{\hspace{-3pt}\ensuremath{\begin{pic}[scale=0.4,yscale=1]
    \node (0) at (0,0) {};
    \node[#1, inner sep=1.5pt] (1) at (0,0.55) {};
    \draw (0.center) to (1.north);
        \node[#1, inner sep=1.5pt] (1) at (0,0.55) {};
\end{pic}
}\hspace{-1pt}}}}
\newcommand{\tinycup}{\smash{\raisebox{2pt}{\hspace{-2pt}\ensuremath{\begin{pic}[scale=0.2]
   \pgftransformscale{1.5} \draw[arrow=.6, scale = 1] (0,0) to[out=-90,in=-90,looseness=1.5] (1.5,0);
\end{pic}}}}}
\newcommand{\tinycap}{\smash{\raisebox{-3pt}{\hspace{-2pt}\ensuremath{\begin{pic}[scale=0.2, yscale=-1]
   \pgftransformscale{1.5} \draw[arrow=.6, scale = 1] (0,0) to[out=-90,in=-90,looseness=1.5] (1.5,0);
\end{pic}}}}}
\newenvironment{pic}[1][] {\begin{aligned}\begin{tikzpicture}[scale=2.0, font=\tiny,#1]}{\end{tikzpicture}\end{aligned}} 
\newif\ifvflip\pgfkeys{/tikz/vflip/.is if=vflip}
\newif\ifhflip\pgfkeys{/tikz/hflip/.is if=hflip}
\newif\ifhvflip\pgfkeys{/tikz/hvflip/.is if=hvflip}
\newenvironment{picc}[1][]
{\begin{aligned}\begin{tikzpicture}[font=\tiny,#1]}
{\end{tikzpicture}\end{aligned}}
\newlength\minimummorphismwidth
\newlength\stateheight
\newlength\minimumstatewidth
\newlength\connectheight
\tikzset{colour/.initial=white}
\tikzstyle{pure}=[line width=.7pt]
\newcommand{\tinygroundnew}{
\smash{
\raisebox{-1pt}
{\hspace{-3pt}
\ensuremath{
\begin{picc}[scale=1.0] 
    \node[upground, xscale=0.8, yscale=0.7] (1) at (0,0.16) {};
    \draw (0,0.03) to (0,-0.25);
\end{picc}
}\hspace{-1pt}}}}
\tikzstyle{none}=[inner sep=0mm]
\tikzstyle{every loop}=[]
\tikzstyle{mark coordinate}=[inner sep=0pt,outer sep=0pt,minimum size=3pt,fill=black,circle]
\tikzset{arrow/.style={decoration={
    markings,
    mark=at position #1 with \arrow{>[length=2pt, width=3pt]}},
    postaction=decorate},
    reverse arrow/.style={decoration={
    markings,
    mark=at position #1 with {{\arrow{<[length=2pt, width=3pt]}}}},
    postaction=decorate}
}
\tikzset{
    roundbox/.style={
           rectangle,
           rounded corners,
           draw=black, 
           fill=white,
           text width=8em,
           minimum height=2em,
           text centered},
    bigedge/.style={
           ->,
           shorten <=2pt,
           shorten >=2pt,},
     incl/.style={
           left hook->,
           shorten <=2pt,
           shorten >=2pt,},
     incl2/.style={
           right hook->,
           shorten <=2pt,
           shorten >=2pt,},
}
\tikzstyle{upground}=[circuit ee IEC,thick,ground,rotate=90,scale=1.5]
\tikzstyle{downground}=[circuit ee IEC,thick,ground,rotate=-90,scale=1.5]
\tikzstyle{env}=[copoint,regular polygon rotate=0,minimum width=0.2cm, fill=black]
\tikzstyle{bigground}=[regular polygon,regular polygon sides=3,draw=gray,scale=0.50,inner sep=-0.5pt,minimum width=10mm,fill=gray]
\tikzstyle{probs}=[shape=semicircle,fill=white,draw=black,shape border rotate=180,minimum width=1.2cm]
\tikzstyle{box}=[draw,shape=rectangle,inner sep=2pt,minimum height=\mapminh,minimum width=6mm,fill=white] 
\tikzstyle{medium box}=[draw,shape=rectangle,inner sep=2pt,minimum height=\mapminh,minimum width=10mm,fill=white] 
\tikzstyle{dot}=[inner sep=0mm,minimum width=2mm,minimum height=2mm,draw,shape=circle]  
\tikzstyle{black dot}=[dot,fill=black]
\tikzstyle{white dot}=[dot,fill=white,,text depth=-0.2mm]
\tikzstyle{grey dot}=[dot,fill=black!25] 
\tikzstyle{corner1}=[box,fill=white, font=\footnotesize] %
\tikzstyle{corner2}=[dot,fill=white, font=\footnotesize] %
\tikzstyle{corner3}=[dot,fill=black!25, font=\footnotesize] %
\tikzstyle{corner4}=[dot,fill=black, font=\footnotesize] %
\tikzstyle{scalar}=[circle,draw,inner sep=1pt] 
\tikzstyle{dscalar}=[diamond,doubled, draw,inner sep=0.5pt,font=\small]
\tikzstyle{point}=[regular polygon,regular polygon sides=3,draw,scale=0.75,inner sep=-0.5pt,minimum width=9mm,fill=white,regular polygon rotate=180]
\tikzstyle{small dot}=[inner sep=0.5mm,minimum width=0pt,minimum height=0pt,draw,shape=circle]
\tikzstyle{small black dot}=[small dot,fill=black]
\tikzstyle{small white dot}=[small dot,fill=white]
\tikzstyle{copoint}=[regular polygon,regular polygon sides=3,draw,scale=0.75,inner sep=-0.5pt,minimum width=9mm,fill=white]
\tikzstyle{dpoint}=[point,doubled]
\tikzstyle{dcopoint}=[copoint,doubled]
\tikzstyle{wide copoint}=[fill=white,draw,shape=isosceles triangle,shape border rotate=90,isosceles triangle stretches=true,inner sep=0pt,minimum width=1.5cm,minimum height=6.12mm]
\tikzstyle{point}=[kpoint]
\tikzstyle{dagpoint}=[kpoint]
\tikzstyle{dagpointadj}=[kpointadj]
\tikzstyle{wide point}=[wide kpoint]
\tikzstyle{medium dagmap}=[medium map]
\tikzstyle{semilarge dagmap}=[semilarge map]
\tikzstyle{wide dpoint}=[fill=white,doubled,draw,shape=isosceles triangle,shape border rotate=-90,isosceles triangle stretches=true,inner sep=0pt,minimum width=1.5cm,minimum height=6.12mm,yshift=-0.0mm]
\tikzstyle{tinypoint}=[regular polygon,regular polygon sides=3,draw,scale=0.55,inner sep=-0.15pt,minimum width=6mm,fill=white,regular polygon rotate=180]
\newcommand{\boxshape}[3]{%
\pgfdeclareshape{#1}{
\inheritsavedanchors[from=rectangle] 
\inheritanchorborder[from=rectangle]
\inheritanchor[from=rectangle]{center}
\inheritanchor[from=rectangle]{north}
\inheritanchor[from=rectangle]{south}
\inheritanchor[from=rectangle]{west}
\inheritanchor[from=rectangle]{east}
\backgroundpath{
\southwest \pgf@xa=\pgf@x \pgf@ya=\pgf@y
\northeast \pgf@xb=\pgf@x \pgf@yb=\pgf@y

\@tempdima=#2
\@tempdimb=#3

\pgfpathmoveto{\pgfpoint{\pgf@xa - 5pt + \@tempdima}{\pgf@ya}}
\pgfpathlineto{\pgfpoint{\pgf@xa - 5pt - \@tempdima}{\pgf@yb}}
\pgfpathlineto{\pgfpoint{\pgf@xb + 5pt + \@tempdimb}{\pgf@yb}}
\pgfpathlineto{\pgfpoint{\pgf@xb + 5pt - \@tempdimb}{\pgf@ya}}
\pgfpathlineto{\pgfpoint{\pgf@xa - 5pt + \@tempdima}{\pgf@ya}}
\pgfpathclose
}
}}
\tikzstyle{cloud}=[shape=cloud,draw,minimum width=1.5cm,minimum height=1.5cm]
\newcommand{\mapminh}{5mm} 
\newcommand{\maplw}{0.7pt} 
\tikzstyle{map}=[draw,shape=NEbox,inner sep=2pt,minimum height=\mapminh,fill=white, line width = \maplw] %
\tikzstyle{dagmap}=[map]
\tikzstyle{dashedmap}=[draw,dashed,shape=NEbox,inner sep=2pt,minimum height=\mapminh,fill=white, line width = \maplw]
\tikzstyle{mapdag}=[draw,shape=SEbox,inner sep=2pt,minimum height=\mapminh,fill=white, line width = \maplw]
\tikzstyle{mapadj}=[draw,shape=SEbox,inner sep=2pt,minimum height=\mapminh,fill=white, line width = \maplw]
\tikzstyle{maptrans}=[draw,shape=SWbox,inner sep=2pt,minimum height=\mapminh,fill=white, line width = \maplw]
\tikzstyle{mapconj}=[draw,shape=NWbox,inner sep=2pt,minimum height=\mapminh,fill=white, line width = \maplw]
\tikzstyle{medium map}=[draw,shape=NEbox,inner sep=2pt,minimum height=\mapminh,fill=white,minimum width=7mm, line width = \maplw]
\tikzstyle{medium map dag}=[draw,shape=SEbox,inner sep=2pt,minimum height=\mapminh,fill=white,minimum width=7mm, line width = \maplw]
\tikzstyle{medium map adj}=[draw,shape=SEbox,inner sep=2pt,minimum height=\mapminh,fill=white,minimum width=7mm, line width = \maplw]
\tikzstyle{medium map trans}=[draw,shape=SWbox,inner sep=2pt,minimum height=\mapminh,fill=white,minimum width=7mm, line width = \maplw]
\tikzstyle{medium map conj}=[draw,shape=NWbox,inner sep=2pt,minimum height=\mapminh,fill=white,minimum width=7mm, line width = \maplw]
\tikzstyle{semilarge map}=[draw,shape=NEbox,inner sep=2pt,minimum height=\mapminh,fill=white,minimum width=9.5mm, line width = \maplw]
\tikzstyle{semilarge map trans}=[draw,shape=SWbox,inner sep=2pt,minimum height=\mapminh,fill=white,minimum width=9.5mm, line width = \maplw]
\tikzstyle{semilarge map adj}=[draw,shape=SEbox,inner sep=2pt,minimum height=\mapminh,fill=white,minimum width=9.5mm, line width = \maplw]
\tikzstyle{semilarge map dag}=[draw,shape=SEbox,inner sep=2pt,minimum height=\mapminh,fill=white,minimum width=9.5mm, line width = \maplw]
\tikzstyle{semilarge map conj}=[draw,shape=NWbox,inner sep=2pt,minimum height=\mapminh,fill=white,minimum width=9.5mm, line width = \maplw]
\tikzstyle{large map}=[draw,shape=NEbox,inner sep=2pt,minimum height=\mapminh,fill=white,minimum width=12mm, line width = \maplw]
\tikzstyle{large map conj}=[draw,shape=NWbox,inner sep=2pt,minimum height=\mapminh,fill=white,minimum width=12mm, line width = \maplw]
\tikzstyle{very large map}=[draw,shape=NEbox,inner sep=2pt,minimum height=\mapminh,fill=white,minimum width=17mm, line width = \maplw]
\pgfmathsetmacro{\pgf@shorten@left}{\pgfkeysvalueof{/tikz/shorten left}}
\pgfmathsetmacro{\pgf@shorten@right}{\pgfkeysvalueof{/tikz/shorten right}}
\pgfmathsetmacro{\pgf@shorten@left}{\pgfkeysvalueof{/tikz/shorten left}}
\pgfmathsetmacro{\pgf@shorten@right}{\pgfkeysvalueof{/tikz/shorten right}}
\tikzstyle{kpoint common}=[draw,fill=white,inner sep=1pt, line width = \maplw, minimum height = 4mm, yshift=1.2pt] 
\tikzstyle{kpoint sc}=[shape=cornerpoint,kpoint common]
\tikzstyle{kpoint adjoint sc}=[shape=cornercopoint,kpoint common]
\tikzstyle{kpoint}=[shape=cornerpoint,shorten left=4pt,kpoint common]
\tikzstyle{kpoint adjoint}=[shape=cornercopoint,shorten left=5pt,kpoint common]
\tikzstyle{kpoint conjugate}=[shape=cornerpoint,shorten right=5pt,kpoint common]
\tikzstyle{kpoint transpose}=[shape=cornercopoint,shorten right=5pt,kpoint common]
\tikzstyle{kpoint symm}=[shape=cornerpoint,shorten left=5pt,shorten right=5pt,kpoint common]
\tikzstyle{kpointdag}=[kpoint adjoint]
\tikzstyle{kpointadj}=[kpoint adjoint]
\tikzstyle{kpointconj}=[kpoint conjugate]
\tikzstyle{kpointtrans}=[kpoint transpose]
\tikzstyle{big kpoint}=[kpoint, minimum width=1.2 cm, minimum height=8mm, inner sep=4pt, text depth=3mm]
\tikzstyle{wide kpoint}=[kpoint, minimum width=1 cm, inner sep=2pt]
\tikzstyle{wide kpointdag}=[kpointdag, minimum width=1 cm, inner sep=2pt]
\tikzstyle{wide kpointconj}=[kpointconj, minimum width=1 cm, inner sep=2pt]
\tikzstyle{wide kpointtrans}=[kpointtrans, minimum width=1 cm, inner sep=2pt]
\tikzstyle{gray kpoint}=[kpoint,fill=gray!50!white]
\tikzstyle{gray kpointdag}=[kpointdag,fill=gray!50!white]
\tikzstyle{gray kpointadj}=[kpointadj,fill=gray!50!white]
\tikzstyle{gray kpointconj}=[kpointconj,fill=gray!50!white]
\tikzstyle{gray kpointtrans}=[kpointtrans,fill=gray!50!white]
\tikzstyle{gray dkpoint}=[kpoint,fill=gray!50!white,doubled]
\tikzstyle{gray dkpointdag}=[kpointdag,fill=gray!50!white,doubled]
\tikzstyle{gray dkpointadj}=[kpointadj,fill=gray!50!white,doubled]
\tikzstyle{gray dkpointconj}=[kpointconj,fill=gray!50!white,doubled]
\tikzstyle{gray dkpointtrans}=[kpointtrans,fill=gray!50!white,doubled]
\tikzstyle{white label}=[draw,fill=white,rectangle,inner sep=0.7 mm]
\tikzstyle{gray label}=[draw,fill=gray!50!white,rectangle,inner sep=0.7 mm]
\tikzstyle{black label}=[draw,fill=black,rectangle,inner sep=0.7 mm]
\tikzstyle{dkpoint}=[kpoint,doubled]
\tikzstyle{wide dkpoint}=[wide kpoint,doubled]
\tikzstyle{dkpointdag}=[kpoint adjoint,doubled]
\tikzstyle{wide dkpointdag}=[wide kpointdag,doubled]
\tikzstyle{dkcopoint}=[kpoint adjoint,doubled]
\tikzstyle{dkpointadj}=[kpoint adjoint,doubled]
\tikzstyle{dkpointconj}=[kpoint conjugate,doubled]
\tikzstyle{dkpointtrans}=[kpoint transpose,doubled]
\tikzstyle{kscalar}=[kpoint common, shape=EBox, inner xsep=-1pt, inner ysep=3pt,font=\small]
\tikzstyle{kscalarconj}=[kpoint common, shape=WBox, inner xsep=-1pt, inner ysep=3pt,font=\small]
\tikzstyle{spekpoint}=[kpoint sc,minimum height=5mm,inner sep=3pt]
\tikzstyle{spekcopoint}=[kpoint adjoint sc,minimum height=5mm,inner sep=3pt]
\tikzstyle{dspekpoint}=[spekpoint,doubled]
\tikzstyle{dspekcopoint}=[spekcopoint,doubled]
\newif\ifvflip\pgfkeys{/tikz/vflip/.is if=vflip}
\newif\ifhflip\pgfkeys{/tikz/hflip/.is if=hflip}
\newif\ifhvflip\pgfkeys{/tikz/hvflip/.is if=hvflip}
\newlength\morphismheight
\newlength\wedgewidth
\tikzset{width/.initial=1mm}
\tikzstyle{morphism}=[font=\small,morphismshape] 
\tikzstyle{morphismflip}=[morphism, hflip]
\tikzstyle{every picture}=[baseline=-0.25em,scale=0.5]
\tikzstyle{label}=[font=\footnotesize,text height=1ex, text depth=0.15ex]
\tikzset{stateshape/.style={append after command={
   \pgfextra
        \draw[sharp corners, fill=none]%
    (\tikzlastnode.west)%
    [rounded corners=0pt] |- (\tikzlastnode.north)%
    [rounded corners=0pt] -| (\tikzlastnode.east)%
    [rounded corners=5pt] |- (\tikzlastnode.south)%
    [rounded corners=5pt] -| (\tikzlastnode.west);
   \endpgfextra}}}
\tikzset{effectshape/.style={append after command={
   \pgfextra
        \draw[sharp corners, fill=none]%
    (\tikzlastnode.west)%
    [rounded corners=0pt] |- (\tikzlastnode.south)%
    [rounded corners=0pt] -| (\tikzlastnode.east)%
    [rounded corners=5pt] |- (\tikzlastnode.north)%
    [rounded corners=5pt] -| (\tikzlastnode.west);
   \endpgfextra}}}
\tikzstyle{state}=[stateshape,inner sep=.4ex, node on layer=foreground, minimum width = 0.6cm, minimum height = 0.4cm, font = \small]
\tikzstyle{effect}=[effectshape,inner sep=.4ex, node on layer=foreground, minimum width = 0.6cm, minimum height = 0.4cm, font = \small]
\tikzstyle{statewide}=[state, minimum width = 1.2cm]
\tikzstyle{stateextrawide}=[state, minimum width = 1.4cm]
\newtheorem{theorem}{Theorem}[section]
\newtheorem{proposition}[theorem]{Proposition}
\newtheorem{corollary}[theorem]{Corollary}
\newtheorem{lemma}[theorem]{Lemma}
\theoremstyle{definition}
\newtheorem{definition}[theorem]{Definition}
\newtheorem{example}[theorem]{Example}
\newtheorem{examples}[theorem]{Examples}
\newtheorem{remark}[theorem]{Remark}
\newtheorem{principle}{Principle}
\theoremstyle{remark}
\newtheorem*{remark*}{Remark}
\keywords{Quantum reconstruction, dagger compact category, purification, dagger kernels, phased biproducts}
\begin{document}

\title[A Categorical Reconstruction of Quantum Theory]{A Categorical Reconstruction of Quantum Theory}
\author[S.~Tull]{Sean Tull}	
\address{Department of Computer Science, University of Oxford}	
\email{sean.tull@cs.ox.ac.uk}

\begin{abstract}
We reconstruct finite-dimensional quantum theory from categorical principles.
That is, we provide properties ensuring that a given physical theory described by a dagger compact category in which one may `discard' objects is equivalent to a generalised finite-dimensional quantum theory over a suitable ring $S$. The principles used resemble those due to Chiribella, D'Ariano and Perinotti. Unlike previous reconstructions, our axioms and proof are fully categorical in nature, in particular not requiring tomography assumptions. Specialising the result to probabilistic theories we obtain either traditional quantum theory with $S$ being the complex numbers, or that over real Hilbert spaces with $S$ being the reals.
\end{abstract}

\maketitle

The Hilbert space formulation of quantum theory has remained difficult to interpret ever since it was first described by von Neumann~\cite{von1955mathematical}. Over the years this has led to numerous attempts to understand the quantum world from more basic, operational or logical principles. As well as being of physical interest, such new understandings are invaluable when developing logical frameworks for the modelling of quantum computation. Most recently there has been a great interest in `reconstructions' of quantum theory as a theory of information~\cite{Hardy2001QTFrom5,PhysRevA.84.012311InfoDerivQT,Hardy2011a,wilceRoyal,van2018reconstruction,selby2018reconstructing}. In these, quantum theory is singled out via operational axioms, referring to the likelihoods assigned to experimental procedures, from amongst all general probabilistic theories. 

Typically, a central aspect of any such theory is taken to be 
the specification of certain allowed physical \emph{systems} and \emph{processes} between them, which may be composed like pieces of apparatus in a laboratory. It is well-known that such `process theories' correspond precisely to \emph{monoidal categories}~\cite{coecke2011categories}, very general mathematical structures coming with an intuitive graphical calculus allowing one to reason using \emph{circuit diagrams}~\cite{selinger2011survey}. In the usual approach to reconstructions, one then explicitly adds further probabilistic structure using an  assumption known as \emph{finite tomography}, which enforces that the processes of any given type generate a finite-dimensional real vector space. Though physically motivated, such technical assumptions move these results further from formal logic.

 However, there is a second tradition in the literature characterised by avoiding these tomography assumptions, 
 and instead studying physical theories such as quantum theory purely in terms of their categorical or diagrammatic aspects. This programme, lying at the intersection of physics and computer science, is collectively referred to as 
 \emph{categorical quantum mechanics} (CQM)~\cite{abramskycoecke:categoricalsemantics}. The categorical approach is natural from a perspective in which processes are seen as fundamental to a physical theory, and provides an intuitive but formal logic for reasoning about quantum processes which is easily connected with other categorical formalisations of computation from across computer science. It has proven successful in studying numerous aspects of quantum foundations and quantum computation~\cite{CKbook,coecke2012strong,abramsky2009no,coecke2008interacting}.

  As such it is natural to ask whether one may recover quantum theory itself in this framework, providing us with a  categorical logic which fully axiomatises quantum processes. Indeed a reconstruction theorem for CQM has long been desired, with the need for such a result put forward by Coecke and Lal in~\cite{coecke2011categorical}. In this work, we present such a category-theoretic reconstruction of quantum theory.  

\para{The Principles}

Throughout we work in the framework of \emph{dagger theories}; symmetric monoidal categories coming with a \emph{dagger}, allowing us to `reverse' morphisms, and for which each object has a chosen \emph{effect} called \emph{discarding}, denoted $\discard{}$, which we can think of as the process of throwing away or ignoring a system. The theories we consider are also \emph{compact}, providing the ability to swap inputs and outputs of morphisms. Furthermore they meet the following axioms expressed using the graphical calculus. In the tradition of quantum reconstructions we refer to these as operational \emph{principles} our theory must satisfy. 

\begin{enumerate}[leftmargin=*]

\item \emph{Strong purification.} Any morphism $f$ has a dilation $g$ which is \emph{pure} in the sense that any dilation of $g$ is trivial. Diagrammatically: 
\[
\scalebox{0.85}{\input{figures/purif-simple1a.tikz}}

\quad
\text{ where  } 
\quad
\scalebox{0.85}{\input{figures/purif-simple2ia.tikz}}

\]
Furthermore we take pure morphisms to be compatible with composition and the dagger, making them an \emph{environment structure}~\cite{coecke2008axiomatic}, every suitable object to have a \emph{causal} pure state, and purifications to be \emph{essentially unique}~\cite{PhysRevA.84.012311InfoDerivQT}, meaning that any two of the same type are related by a suitable isomorphism. In our setting the above notion of purity in fact coincides with the usual one in terms of mixing (see Appendix~\ref{appendix:pure}).

\item \emph{Kernels.} Every morphism $f$ has a \emph{dagger kernel}, characterised by
\[
\scalebox{0.85}{\input{figures/kernel-def1.tikz}}
 \ 0 
\iff
(\exists ! h) \ 
\scalebox{0.85}{\input{figures/kernel-def2.tikz}}

\]
Kernels are closely related to axioms of~\cite{PhysRevA.84.012311InfoDerivQT} which allow one to associate a system to those states which are `perfectly distinguishable' from any given state. 

\item \emph{Pure exclusion.} Every causal pure state $\psi$ of a non-trivial system has $e \circ \psi = 0$ for some non-zero effect $e$.

\item \emph{Conditioning.}
For every pair of \emph{orthonormal} states $\ket{0}, \ket{1}$ and every pair of states $\rho, \sigma$ of some object there is a morphism $f$ with $f \circ \ket{0} = \rho$ and $f \circ \ket{1} = \sigma$. 
\end{enumerate}

Finally we also add basic compatibility requirements and make a mild assumption on the \emph{scalars} in our theory called \emph{boundedness}.

\para{Reconstruction}
We prove that any non-trivial dagger theory satisfying these principles is equivalent to a theory $\Quant{S}$ for a suitable ring $S$, a generalisation of the theory $\Quant{\mathbb{C}}$ of finite-dimensional complex Hilbert spaces and completely positive maps (Theorem~\ref{thm:main-reconstruction}). Moreover, we show that a further basic assumption on scalars provides $S$ with structure similar to the real or complex numbers (Lemma~\ref{lem:square roots}).

The statement and proof of these results are entirely category-theoretic, not assuming any form of tomography. When specialised to probabilistic theories, we obtain reconstructions of quantum theory $\Quant{\mathbb{C}}$ as well as the quantum theory $\Quant{\mathbb{R}}$ over real Hilbert spaces, and in fact we show that these are the unique probabilistic theories satisfying our principles. Recovering $\Quant{\mathbb{R}}$ is a pleasing consequence of our tomography-free approach, with most reconstructions ruling it out from the outset by assuming \emph{local tomography}; an example of a probabilistic reconstruction which does recover both theories is~\cite{hohn2017quantum}.

Our principles above were chosen to be as general as possible whilst allowing for this reconstruction result to hold. In future work it should be possible to arrive at more succinct reconstructions by deriving them from a smaller set of more natural, though potentially stronger, axioms.    

\para{The Proof}
More broadly, the heart of our proof lies in a very general approach to axiomatising quantum theories. This is based on a new description of superpositions from~\cite{superpos} using features called \emph{phased biproducts}. We use these to derive a general `recipe' for quantum reconstructions (Corollary~\ref{cor:recipe}) which we hope to be applicable to further such results in the future.

\para{Related work}
The above principles may be seen as a formulation of those of the Chiribella-D'Ariano Perinotti (CDP) reconstruction~\cite{PhysRevA.84.012311InfoDerivQT} for the setting of dagger compact categories.
\begin{center}
  \begingroup
  \def\arraystretch{1.2}
\begin{tabular}{@{}cc@{}}
\toprule
\textbf{CDP Axioms}             & \textbf{Categorical Features} \\ 
\hline
Coarse-graining                  & Conditioning                               \\ \hline
Causality                  & Discarding                                \\
\hline
Atomicity of composition   & \multirow{2}{*}{Environment structure}                 \\ \cline{1-1}
Purification               &                                                        \\ \hline
Perfect distinguishability &  Kernels, \\ \cline{1-1}
Ideal compressions         &  pure exclusion                                                      \\ \hline
\multicolumn{2}{c}{Essential uniqueness}                                  \\ \bottomrule
\end{tabular}
\endgroup
\smallskip
\end{center}
As such our result may be seen as an attempt to adapt CDP's result into one in the tradition of formal logic. Connections between categorical quantum mechanics and the CDP reconstruction were first considered in~\cite{coecke2011categorical}. Our work also draws on previous dagger-categorical treatments of (pure) quantum theory, including Jamie Vicary's characterisation of categories with complex numbers as scalars~\cite{vicary2011categorical}, and Chris Heunen's axiomatization of the category of Hilbert spaces~\cite{heunen2009embedding} which we discuss in detail in Appendix~\ref{apped:Hilbert}. Connections with other quantum reconstructions remain to be explored; notable are the recent probabilistic reconstructions due to Selby, Scandolo and Coecke~\cite{selby2018reconstructing} and van de Wetering~\cite{van2018reconstruction}.

\subsection*{Outline} 
The article is structured as follows. 
\begin{center}
\vspace{10pt}
\scalebox{0.8}{\input{figures/structure2.tikz}}

\vspace{10pt}
\end{center}

We begin by describing the general setting of dagger theories $\catC$, including the quantum theories $\Quant{S}$, in Section~\ref{sec:setup}. 

In Sections~\ref{sec:building} and~\ref{sec:recipe} we give our general recipe for quantum reconstructions, based on describing superpositions in the category $\catB:=\catC_\pure$. 

Those only interested in our main reconstruction can skip to Section~\ref{sec:axioms} where we introduce the operational principles for a dagger theory $\catC$. 

The next sections contain most of the proof of our main results. Section~\ref{sec:consequences} studies the consequences of our principles and Section~\ref{sec:derivingsups} derives the presence of superpositions from them, allowing us to apply our recipe. In Section~\ref{sec:PureProcessProperties} we find further properties of our theory by axiomatizing its extended category of pure morphisms $\catA := \plusI{\catB}$.

The main reconstruction results are deduced in Section~\ref{sec:reconstructions-listed}, where we return to the dagger theory $\catC$. We then quickly specialise these to probabilistic theories in Section~\ref{sec:probTheories}. We close by discussing our results and open questions, with details expanded on in the appendices. 

\section{Setup} \label{sec:setup}

In the process-theoretic approach, a theory of physics specifies certain \emph{systems} and composable \emph{processes} between them. In other words, a theory corresponds to a \emph{category} $\catC$ whose objects $A, B, C \dots$ are systems and whose morphisms $f \colon A \to B$ are processes. It is natural to suppose that we may also place systems and processes `side-by-side' via some operation $A, B \mapsto A \otimes B$ and $f, g \mapsto f \otimes g$, and it is well-known that this is captured by the notion of a \emph{(symmetric) monoidal category} $(\catC, \otimes, I)$. For an overview of symmetric monoidal categories in physics see~\cite{coecke2011categories}.  

Along with the notation $f \colon A \to B$ for morphisms in a category, we will make use of the \emph{graphical calculus} for monoidal categories~\cite{selinger2011survey}
 in which morphisms $f \colon A \to B$ are depicted 
$\setlength\morphismheight{3mm}\begin{pic}
  \node[morphismflip,font=\small] (f) at (0,0) {$f$};
  \draw (f.south) to +(0,-.15)node[below] {$A$};
  \draw (f.north) to +(0,.15) node[above] {$B$};
\end{pic}$, with identity morphisms as vertical wires and composition by 
\[ 
\scalebox{0.85}{\input{figures/composition-morphism.tikz}}

\qquad  \qquad
\scalebox{0.85}{\input{figures/tensor-morphism.tikz}}

\]
Any monoidal category comes with a \emph{unit object} $I$, and we call morphisms $\rho \colon I \to A$, $e \colon A \to I$ and $s \colon I \to I$ \emph{states, effects} and \emph{scalars} respectively. In diagrams (the identity on) $I$ is an empty picture so that these are depicted as:
\[
\scalebox{0.85}{\input{figures/state-effect-scalar.tikz}}

\]
The scalars form a commutative monoid under composition and come with a multiplication $f \mapsto s \cdot f$ on morphisms given graphically by drawing $s$ alongside $f$. 

As well as being symmetric monoidal, we will be interested in categories with extra structure. Recall that a \emph{dagger} monoidal category is one coming with a functor $(-)^{\dag} \colon \catC^{\op} \to \catC$ satisfying $A^{\dagger} = A$ and $(f^{\dagger})^ \dagger = f$ for all objects $A$ and morphisms $f$, and being compatible with $\otimes$ in a suitable sense. Graphically the dagger is represented by turning pictures upside-down: 
\[
\scalebox{0.85}{\input{figures/dagger-i.tikz}}

\]
Borrowing terminology from Hilbert spaces, in a dagger category an \emph{isometry} is a morphism $f \colon A \to B$ with $f^{\dagger} \circ f = \id{A}$, and a \emph{unitary} additionally has $f \circ f^{\dagger} = \id{B}$. A morphism $f$ is \emph{positive} when $f = g^{\dagger} \circ g$ for some morphism $g$.

A further manner in which we may relax our requirements on diagrams is to be free to exchange inputs and outputs of morphisms. A \emph{dagger compact category}~\cite{Selinger2007139} is a symmetric monoidal category with a suitably compatible dagger, and for which for every object $A$ there exists another object $A^*$ and a state $\tinycup \colon I \to A^* \otimes A$ satisfying the \emph{snake equations}:
\[
  \begin{pic}[scale=.5]
    \draw[arrow=.36, arrow=.66] (0,0) to (0,1) to[out=90,in=90,looseness=2] (1,1) to[out=-90,in=-90,looseness=2] (2,1) to (2,2);
  \end{pic}
  =
  \begin{pic}[scale=.5]
  \draw[arrow=.5] (0,0) to (0,2);
  \end{pic}
  \qquad \qquad
  \begin{pic}[scale=.5]
    \draw[reverse arrow=.37, reverse arrow=.67] (0,0) to (0,1) to[out=90,in=90,looseness=2] (-1,1) to[out=-90,in=-90,looseness=2] (-2,1) to (-2,2);
  \end{pic}
  =
  \begin{pic}[scale=.5]
  \draw[reverse arrow=.5] (0,0) to (0,2);
  \end{pic}
\]
where $\tinycap = \tinycup^{\dagger} \circ \sigma$ for the `swap' isomorphism $\sigma$ which exists in any symmetric monoidal category. In particular the identity morphism on the objects $A$ and $A^*$ are depicted with upward and downward arrows respectively. For any morphism $f \colon A \to B$ we may then define morphisms 
\[
\scalebox{0.85}{\input{figures/dual-morphisms.tikz}}

\]
Now we will be primarily interested in categories with an interpretation as operational procedures one may perform in some domain of physics. These are distinguished by an extra feature, common to all causal operational theories~\cite{PhysRevA.84.012311InfoDerivQT}; the ability to `discard' or `ignore' systems.

\begin{definition} \label{def:dagger-theory}
A \emph{monoidal category with discarding} is a monoidal category $\catC$ with a chosen effect $\discard{A}$ on each object $A$ such that 
\[
\scalebox{0.85}{\input{figures/discard_axioms.tikz}}

\]
By a \emph{dagger theory} $(\catC, \discard{})$ we then mean a dagger category with discarding. A dagger theory is \emph{compact} when $\catC$ is dagger compact and discarding satisfies
\[
\scalebox{0.85}{\input{figures/discard-compact.tikz}}

\]
\end{definition}

The presence of discarding morphisms in fact allows for a general treatment of causality~\cite{coecke2013causal}, and we call a morphism $f \colon A \to B$ \emph{causal} when $\discard{B} \circ f = \discard{A}$. 

Finally, all of our categories of interest will come with \emph{zero morphisms}, meaning that there is a morphism $0 \colon A \to B$ between any two objects, together satisfying
\[
0 \circ f = 0  \qquad f \circ 0 = 0
\]
and $0 \otimes f = 0$ in the monoidal case, for all morphisms $f$. A \emph{zero object} $0$ is then one with $\id{0} = 0$, or equivalently that is both initial and terminal.

Let us now meet our main examples of dagger categories and theories. 

\begin{example} \label{ex:MatS}
Let $S$ be a semi-ring (i.e.~a ring without a notion of subtraction) which is commutative and has an \emph{involution}, i.e.~an automorphism $s \mapsto s^{\dagger}$ with $s^{\dagger \dagger} = s$ for all $s \in S$. The dagger compact category $\Mat_{S}$ has as objects the natural numbers, with morphisms $M \colon n \to m$ being $n \times m$ matrices over $S$. Composition is that of matrices, with $(M^{\dagger})_{i,j} = M_{j,i}^{\dagger}$ and $M \otimes N$ given by the Kronecker product.
\end{example}

\begin{example} \label{ex:classical}
We can model classical probabilistic physics on finite systems by the compact dagger theory $\Class := \Mat_{\mathbb{R^+}}$, where morphisms are `unnormalised stochastic matrices'. Here discarding is given by $\discard{n} = (1, \dots 1) \colon n \to 1$. 

 Classical `possibilistic' physics is instead described by the compact dagger theory $\Rel$ whose objects are sets and morphisms $A \to B$ are relations $R \subseteq A \times B$. Here $\otimes$ is Cartesian product, the dagger is relational converse, and $\discard{A}$ is the unique function from $A$ into a singleton set $I=\{\ast\}$.
\end{example}

\begin{example}
Quantum theory may be described in terms of any of three dagger compact categories whose objects are finite-dimensional complex Hilbert spaces. 
\begin{center}
  \begingroup
  \def\arraystretch{1.1}
\vspace{1pt}
\hspace{-3pt}
\begin{tabular}{@{}lll@{}}
  \toprule
& Morphisms $\hilbH \to \hilbK$                           & Scalars                      \\\midrule
$\FHilb$  \hspace{-5pt} & \hspace{-3pt} linear maps $\hilbH \to \hilbK$                         & $\mathbb{C}$ \\ 
$\FHilbP$ \hspace{-5pt} & \hspace{-3pt} linear maps $\hilbH \to \hilbK$ up to global phase      & $\mathbb{R^+}$  \\ 
$\Quant{}$ \hspace{-5pt}
  & completely positive \hspace{-3pt} linear maps $B(\hilbH) \to B(\hilbK)$ \hspace{-7pt} & $\mathbb{R^+}$  \\ 
  \bottomrule
\end{tabular}
  \endgroup
\vspace{1pt}
\end{center}

Most simply, in the category $\FHilb$ morphisms are linear maps $f \colon \hilbH \to \hilbK$. Physically, however, any two such maps $f ,g$ are in fact indistinguishable whenever they are equal up to \emph{global phase} i.e.~$f = e^{i \theta} \cdot g$ for some $\theta \in [0,2\pi)$.
Taking equivalence classes $[f]$ of linear maps $f$ under this relation yields the category $\FHilbP$ of `pure' quantum processes. Extending this to consider arbitrary mixed processes between quantum systems leads to the category $\Quant{}$ where morphisms $\hilbH \to \hilbK$ are completely positive linear maps $f \colon B(\hilbH) \to B(\hilbK)$. 

In each case $\otimes$ is the tensor product of Hilbert spaces, $I= \mathbb{C}$, and $f^{\dagger}$ is the adjoint of the linear map $f$. In particular, states and effects on $\hilbH$ in $\FHilb$ correspond to vectors $\psi \in \hilbH$ while in $\Quant{}$ they are unnormalised density matrices $\rho \in B(\hilbH)$. Scalars in $\FHilb$ may be equated with elements of $\mathbb{C}$, while in $\FHilbP$ and $\Quant{}$ they are `unnormalised probabilities' $\in \mathbb{R^+}$. 

$\Quant{}$ is moreover a dagger theory, with $\discard{\hilbH} \colon B(\hilbH) \to \mathbb{C}$ being the map $a \mapsto \Tr(a)$. In particular a morphism here is causal precisely when it is trace-preserving as a completely positive map. 
\end{example}

\begin{example}
More broadly one may also consider the dagger symmetric monoidal category $\Hilb$ of (arbitrary) Hilbert spaces and bounded linear maps. However it is not dagger compact, and neither is its quotient $\HilbP$ under global phases.
\end{example}

\noindent
There is a well-known description of the dagger theory $\Quant{}$ in terms of its subcategory $\FHilbP$ due to Selinger~\cite{Selinger2007139} which allows it to be generalised. 

\begin{definition}
Let $\catB$ be a dagger compact category. The category $\CPM(\catB)$ is defined as having the same objects as $\catB$, with morphisms $A \to B$ being morphisms in $\catB$ of the form
\[
\scalebox{0.85}{\input{figures/CPM-map.tikz}}

\]
for some $f \colon A \to C \otimes B$. Then $\CPM(\catB)$ forms a compact dagger theory, with
\[
\scalebox{0.85}{\input{figures/CPM-discarding.tikz}}

\]
\end{definition}
\noindent The motivating example is as follows. 

\begin{example}
$\Quant{} \simeq \CPM(\FHilb) \simeq \CPM(\FHilbP)$.
\end{example}
\noindent
Noting that $\FHilb \simeq \Mat_{\mathbb{C}}$ suggests the following generalisation.

\begin{definition} \label{def:QuantS}
For any commutative involutive semi-ring $S$ we define a dagger theory 
\[
\Quant{S} := \CPM(\Mat_S)
\]
Explicitly, morphisms $n \to m$ are $S$-valued matrices of the form ${\sum^k_{i=1} (M_i)_* \otimes M_i}$, where each $M_i$ is an $n \times m$ matrix over $S$. 
\end{definition}

\begin{example} 
Standard quantum theory is $\Quant{} \simeq \Quant{\mathbb{C}}$, with the above being the Kraus decomposition of a completely positive map. Another physically interesting example is $\Quant{\mathbb{R}}$ which describes quantum theory \emph{on real Hilbert spaces}~\cite{stueckelberg1960quantum,Hardy2012Holism}. For more on generalised quantum theories see~\cite{fantastic}. 
\end{example}

Our main goal in this article will be to find conditions ensuring that a dagger theory is equivalent to $\Quant{}$, or more generally $\Quant{S}$ for some semi-ring $S$. 

\section{Superpositions} \label{sec:building}

Key to our understanding of theories such as $\Quant{}$ will be answering the following basic question: given Hilbert spaces $\hilbH$ and $\hilbK$, how do we describe the space $\hilbH \oplus \hilbK$, modelling superpositions from each, categorically? In fact a well-known notion seemingly answers this question.


In a dagger category we call morphisms $f \colon A \to B$ and $g \colon C \to B$ \emph{orthogonal} when $g^{\dagger} \circ f = 0$. 

\begin{definition} \label{def:biproducts}~\cite{selinger2008idempotents}
In a dagger category $\catB$ with a zero morphisms, a \emph{dagger biproduct} of objects $A, B$ is another object $A \biprod B$ together with orthogonal isometries $\begin{tikzcd}
A \rar{\coproj_A} & A \biprod B & B \lar[swap]{\coproj_B}
\end{tikzcd}$
such that for all $f \colon A \to C$ and $g \colon B \to C$ there exists a unique morphism $h$ making the following commute:
\begin{equation} \label{eq:biprod-eq}
\begin{tikzcd}
A \rar{\coproj_A} \drar[swap]{f} & A \biprod B
\dar[dashed]{h}
& B \lar[swap]{\coproj_B} \dlar{g} \\ 
& C &
\end{tikzcd} 
\end{equation}
We denote the unique such morphism by $h = [f,g]$. 
\end{definition}

\noindent
In categorical language, the condition above states that $A \biprod B$ is a coproduct of $A, B$, and the dagger makes it also their product compatibly. Finite dagger biproducts $A_1 \biprod \dots \biprod A_n$ may be defined similarly, and in fact exist precisely when $\catB$ has binary biproducts and a zero object.

\begin{example}
Each category $\Mat_{S}$ has dagger biproducts given on objects by $n \biprod m = n + m$. In particular, $\FHilb$ has dagger biproducts given by direct sum $\hilbH \biprod \hilbK$ of Hilbert spaces, as does $\Hilb$.
\end{example}

Biproducts are a general way to describe addition and matrix-like features internally to a category, providing a well-behaved addition operation on morphisms $f, g \colon A \to B$ by
\begin{equation*}
f + g := (
\begin{tikzcd}
A \rar{\Delta} & A \biprod A \rar{[f,g]} & B
\end{tikzcd}
)
\end{equation*}
where $\Delta^{\dagger} = [\id{A}, \id{A}]$. In any monoidal category with finite biproducts which are `distributive' (as in any dagger compact category) this makes the scalars $S = \catC(I,I)$ a commutative semi-ring, and there is a full embedding $\MatS \hookrightarrow \catC$ sending each object $n$ to the object
\[
 n \cdot I:= \overbrace{I \biprod \dots \biprod I}^n
\]
In $\Hilb$ these additive features describe quantum superpositions. However, there is a problem: while $\Hilb$ has biproducts, $\HilbP$ does not, and only the morphisms of the latter category have a direct physical interpretation. Nonetheless, $\HilbP$ has a similar feature describing superpositions, introduced in~\cite{superpos}.

\begin{definition} \label{def:ph_coprod}
A \emph{phased dagger biproduct} of $A, B$ is an object $A \pbiprod B$ with orthogonal isometries
$\begin{tikzcd}
A \rar{\coproj_A} & A \pbiprod B & B \lar[swap]{\coproj_B}
\end{tikzcd}$
 such that:
\begin{enumerate}
\item For all 
$f \colon A \to C$ and $g \colon B \to C$ there exists some morphism $h$ making~\eqref{eq:biprod-eq} commute (writing $\pbiprod$ in place of $\biprod$);
\item 
Any pair of such morphisms $h, h'$ have $h' = h \circ U$ for some endomorphism $U$ which is a \emph{phase}, meaning that $U \circ \pcoproj_A = \pcoproj_A$ and $U \circ \pcoproj_B = \pcoproj_B$;
\[
\begin{tikzcd}
 \arrow[loop left, "U",distance = 2em] A \pbiprod B \rar[shift right = 2, swap]{h'} \rar[shift left = 2]{h}   & D
\end{tikzcd}
\] 
\item 
Whenever $U$ is a phase then so is $U^\dagger$.
\end{enumerate}
\end{definition}

Again a zero object along with these suffices to describe finite phased dagger biproducts $A_1 \pbiprod \dots \pbiprod A_n$. A biproduct is then simply a phased biproduct whose only phase is the identity. 

\begin{example}
$\HilbP$ and $\FHilbP$ have phased dagger biproducts given by the biproducts in $\Hilb$, i.e.~the direct sum of Hilbert spaces. More precisely, for any such direct sum
$\begin{tikzcd}
\hilbH \rar{\coproj_\hilbH} & \hilbH \biprod \hilbK & \hilbK \lar[swap]{\coproj_\hilbK}
\end{tikzcd}$
 it is easy to see the induced morphisms $[\coproj_\hilbH]$ and $[\coproj_\hilbK]$ in $\HilbP$ form a phased dagger biproduct with all phases of the form $[U]$ where $U$ is a unitary on $\hilbH \oplus \hilbK$ in $\Hilb$ of the form 
\[
U = 
  \begin{pmatrix}
    e^{i \theta} \cdot \id{\hilbH} & 0 \\
    0 & \id{\hilbK} 
  \end{pmatrix}
\]
for some $\theta \in [0,2\pi)$.
\end{example}

\begin{example}
More generally, by a choice of \emph{global phases} on a dagger symmetric monoidal category $\catA$ we mean a subgroup $\mathbb{P}$ of its unitary scalars. We write $\catA_{\sim}$ for its category of equivalence classes under $f \sim g$ whenever $f = u \cdot g$ for some $u \in \mathbb{P}$. Then when $\catA$ has distributive dagger biproducts, these form phased dagger biproducts in $\catA_{\sim}$.
\end{example}

\noindent
In fact there is a general way of reversing this example for phased biproducts with some extra  properties. Call a morphism $f \colon A \pbiprod B \to C \pbiprod D$ \emph{diagonal} when $f \circ \pcoproj_A = \pcoproj_C \circ f_A$ and $f \circ \pcoproj_B = \pcoproj_D \circ f_B$ for some $f_A, f_B$. 

In general we will call a dagger monoidal category \emph{state-inhabited} when every non-zero object $A$ has at least one isometry $I \to A$. 
 
\begin{definition} \label{def:sup_properties}
A dagger compact category $\catB$ has the \emph{superposition properties} when it is state-inhabited and has finite phased dagger biproducts such that for all phases $U$ and positive diagonal morphisms $p, q$ on $A \pbiprod B$ we have 
\begin{equation} \label{eq:positive-cancellation}
p = q \circ U \implies p = q
\end{equation}
\end{definition}

By an \emph{embedding} of dagger (symmetric) monoidal categories we mean a faithful dagger (symmetric) monoidal functor $F \colon \catC \to \catD$. It is an \emph{equivalence} $\catC \simeq \catD$ when it is full and every object of $\catD$ is unitarily isomorphic to $F(A)$ for some $A \in \catC$.

\begin{theorem} \label{thm:onPHConstr}\cite{superpos}
Let $\catB$ be a dagger compact category with the superposition properties. Then there is a dagger compact category\footnote{Here we always use the $\GP$ construction for dagger categories, which in \cite{superpos} is denoted $\plusIdagsym$.} $\plusI{\catB}$ with finite dagger biproducts and a choice of global phases such that 
\[
\catB \ \simeq \ \plusI{\catB}_{\sim}
\]
Further, positive morphisms $p, q \in \plusI{\catB}$ satisfy $p \sim q \implies p = q$.
\end{theorem}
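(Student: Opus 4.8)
The plan is to give an explicit construction of $\plusI{\catB}$ that reinstates the global-phase degree of freedom which the phased biproducts of $\catB$ pin down only up to a phase, and then to check that quotienting this freedom back out recovers $\catB$. First I would isolate the group $\mathbb{P}$ destined to become the global phases of $\plusI{\catB}$: the phases $U$ appearing in Definition~\ref{def:ph_coprod} are closed under composition and, by condition (3) there, under the dagger, and I would show they cohere — compatibly with the coprojections, the tensor, and the dagger — into a single group that will sit inside the unitary scalars of the constructed category. The objects of $\plusI{\catB}$ are taken to be those of $\catB$, while a morphism $A \to B$ carries a morphism of $\catB$ together with a coherent choice of phase lift, arranged so that two such data represent the same morphism of $\catB$ precisely when they differ by an element of $\mathbb{P}$. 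This yields an identity-on-objects quotient functor $Q \colon \plusI{\catB} \to \catB$ whose fibres are exactly the $\simP$-classes.

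Next I would transport the dagger, symmetric monoidal, and compact structure, together with a zero object, from $\catB$ up to $\plusI{\catB}$, the axioms being inherited since $Q$ reflects equalities up to phase. The crux — and the main obstacle — is to promote the phased dagger biproducts of $\catB$ to honest dagger biproducts in $\plusI{\catB}$. Given $f \colon A \to C$ and $g \colon B \to C$, Definition~\ref{def:ph_coprod} supplies a mediating morphism $[f,g]$ in $\catB$ unique only up to precomposition with a phase $U$; after lifting to $\plusI{\catB}$ this ambiguity should be absorbed exactly by the extra phase data carried by the morphisms, making the mediator genuinely unique and turning $A \pbiprod B$ into a dagger biproduct $A \biprod B$. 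Verifying this requires matching, on the nose, the phase freedom built into the morphisms of $\plusI{\catB}$ with the phase freedom in the universal property, and checking these matchings cohere across composition, $\otimes$, and the dagger; this is where the delicate bookkeeping lies.

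With the structure in place, $Q$ descends to a dagger symmetric monoidal functor $\plusI{\catB}_{\simP} \to \catB$ that is the identity on objects. Fullness and faithfulness are then immediate from the definition of morphisms together with the $\simP$-quotient, and essential surjectivity is trivial, so this functor is an equivalence, giving $\catB \simeq \plusI{\catB}_{\simP}$, and the choice of $\mathbb{P}$ is exactly the one making this hold.

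Finally, for the positivity claim I would invoke the cancellation axiom~\eqref{eq:positive-cancellation}. If $p, q \in \plusI{\catB}$ are positive with $p \simP q$, then $p = u \cdot q$ for some global phase $u \in \mathbb{P}$; realising $u$ as (the scalar induced by) a phase $U$ on a suitable biproduct and presenting $p, q$ as positive diagonal morphisms there, \eqref{eq:positive-cancellation} forces $p = q$. The only real work is the reduction of an arbitrary pair of positive morphisms to the positive-diagonal situation to which~\eqref{eq:positive-cancellation} literally applies, which I would accomplish by embedding them into a phased biproduct along the coprojections, using state-inhabitedness to guarantee the ambient objects are non-zero.
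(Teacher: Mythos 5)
There is a genuine gap, and it sits exactly at the crux of your construction. First, note that the paper does not prove this theorem itself: it is imported from~\cite{superpos}, and the paper only records the construction in the remark following the theorem. That construction is quite specific: the objects of $\plusI{\catB}$ are phased biproducts $\obb{A} = A \pbiprod I$, and a morphism $\obb{A} \to \obb{B}$ is a \emph{diagonal} morphism $f \colon A \pbiprod I \to B \pbiprod I$ of $\catB$ satisfying $f \circ \pcoproj_I = \pcoproj_I$. The whole trick is that tagging each object with a reference copy of $I$ and demanding that morphisms fix the reference leg rigidifies the phase ambiguity: a lift of a morphism of $\catB$ to such a diagonal morphism exists by the universal property of the phased biproduct, and the residual freedom in that lift is precisely what becomes the global-phase group. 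Your proposal replaces this mechanism with the phrase ``a morphism carries a morphism of $\catB$ together with a coherent choice of phase lift, arranged so that two such data represent the same morphism of $\catB$ precisely when they differ by an element of $\mathbb{P}$.'' This is circular: the existence of a workable notion of ``phase lift'' with exactly these properties is the theorem's content, not a device you may posit. Without specifying what the extra data is (and the $A \pbiprod I$ device, or an equivalent one, is the only known way to do it), none of the subsequent steps --- uniqueness of mediating morphisms, transport of the monoidal and compact structure, the equivalence after quotienting --- can even be stated precisely, let alone checked.

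Two further points are assumed where they require proof. Your first step takes for granted that the phases of the various phased biproducts ``cohere into a single group $\mathbb{P}$''; but phases of $A \pbiprod B$ for different $A, B$ are endomorphisms of different objects with no a priori relation. In the actual development this coherence is itself a lemma: one shows $A \pbiprod A \cong A \otimes (I \pbiprod I)$ so that every phase arises as $\id{A} \otimes u$ for a phase $u$ of $I \pbiprod I$ (compare Lemma~\ref{lem:IplusIgivesaplusa} in the appendix), and $\mathbb{P}$ is then \emph{defined} as a group of unitary scalars of the constructed category, not assembled beforehand. Second, your positivity argument requires reducing arbitrary positive $p \simP q$ in $\plusI{\catB}$ to the positive-\emph{diagonal} situation of~\eqref{eq:positive-cancellation}; in the cited construction this reduction is vacuous, because morphisms of $\plusI{\catB}$ already \emph{are} diagonal morphisms on phased biproducts of $\catB$, a positive morphism there is a positive diagonal morphism, and $p \simP q$ unwinds directly to $p = q \circ U$ for a phase $U$, so~\eqref{eq:positive-cancellation} applies verbatim. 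In your setup the reduction is nontrivial (for instance, one must know that daggers and composites of the relevant morphisms remain diagonal, which does not follow formally since pairings into phased biproducts are unique only up to phase), and ``embedding along the coprojections'' does not by itself produce positive diagonal endomorphisms of the required form.
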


We write $[-] \colon \plusI{\catB} \to \catB$ for the dagger monoidal functor which takes equivalence classes under $\sim$.

\begin{example}
$\FHilbP$ has the superposition properties and we have 
\[
\plusI{\FHilbP} \simeq \FHilb
\]
with global phases $ \{ z \in \mathbb{C} \mid |z| = 1\}$.
\end{example}

\begin{remark*}
We will not require the explicit description of the category $\plusI{\catB}$ here, but it is constructed as follows~\cite{superpos};
objects are phased biproducts $\obb{A} = A \pbiprod I$ and morphisms $f \colon \obb{A} \to \obb{B}$ are diagonal morphisms with $f \circ \pcoproj_I = \pcoproj_I$.

In fact the $\GP(\catB)$ construction extends more generally to (dagger symmetric) monoidal categories with phased biproducts which are `distributive' and satisfy a weakening of~\eqref{eq:positive-cancellation}; see\cite{superpos}. In particular we similarly have
\[
\plusI{\HilbP} \simeq \Hilb 
\]
\end{remark*}

\section{A Recipe for Reconstructions} \label{sec:recipe}

Our description of superpositions provides a general result for reconstructing quantum-like theories. For this we now need to know when, like $\Quant{S}$, a dagger theory is of the form $\CPM(\catB)$.

In fact those theories arising from the $\CPM$ construction have been axiomatized by Coecke~\cite{coecke2008axiomatic}. Firstly, in any category with discarding, a morphism $g \colon A \to B \otimes C$ is said to be a \emph{dilation} of a morphism $f \colon A \to B$ when 
\[
\scalebox{0.85}{\input{figures/dilation.tikz}}

\]
Now let $\catC$ be a compact dagger theory. An \emph{environment structure}~\cite{coecke2010environment} on $\catC$ is a choice of dagger compact subcategory $\catC_{\prepure}$ such that every morphism of $\catC$ has a dilation in $\catC_{\prepure}$, and additionally all morphisms $f \colon A \to B$, $g \colon A \to C$ in $\catC_{\prepure}$ satisfy the \emph{CP axiom}:
\begin{equation} \label{eq:CP axiom}
\scalebox{0.85}{\input{figures/CP.tikz}}

\end{equation}

\begin{example}
The dagger theory $\Quant{}$ has an environment structure given by the subcategory $\FHilbP$. The ability to dilate processes in this way is referred to as \emph{Stinespring dilation}~\cite{stinespring1955positive} or `purification'~\cite{PhysRevA.84.012311InfoDerivQT}.
\end{example}

If $\catB$ is a dagger compact category then $\CPM(\catB)$ has an environment structure with $\CPM(\catB)_{\prepure}$ being the image $\Dbl{\catB}$ of the dagger monoidal functor $\Dbl{(-)}  \colon \catB \to \CPM(\catB)$ which preserves objects and is given on morphisms by 
\[
\scalebox{0.85}{\input{figures/DBL-2.tikz}}

\]
\noindent
Conversely, by an \emph{embedding} or \emph{equivalence} of dagger theories $F \colon \catC \to \catD$ we mean one of dagger symmetric monoidal categories which also preserves discarding in that $\discard{F(I)} = \phi \circ F(\discard{I})$ holds for its given isomorphism $\phi \colon F(I) \simeq I$, and which preserves zero morphisms when they exist. Then for any compact dagger theory $\catC$ with an environment structure $\catC_\prepure$ there is an equivalence of dagger theories $\CPM(\catC_{\prepure})\simeq \catC$ given by
\[
\scalebox{0.85}{\input{figures/CPM-map-smaller.tikz}}

\mapsto
\scalebox{0.85}{\input{figures/env_struc_map.tikz}}

\]

\begin{lemma} \label{lem:CPMsCoincide}
Let $\catB$ be a dagger compact category with the superposition properties. Then the functor $[-] \colon \plusI{\catB} \to \catB$ extends to an equivalence of compact dagger theories 
\[
 \CPM(\plusI{\catB})
 \simeq
\CPM(\catB)
\]
\end{lemma}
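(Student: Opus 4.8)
The plan is to show that applying the $\CPM$ construction to the phase-quotient functor $[-] \colon \plusI{\catB} \to \catB$ yields the desired equivalence, the guiding idea being that the $\CPM$ construction is insensitive to global phases. First I would record that $\CPM$ is functorial: any dagger symmetric monoidal functor $F \colon \catB \to \catD$ between dagger compact categories sends a CPM map built from some $f \colon A \to C \otimes B$ to the CPM map built from $F(f)$, since $F$ preserves $\otimes$, the dagger and the compact structure (cups and caps). This $\CPM(F)$ is again a dagger symmetric monoidal functor, and since the discarding effect of $\CPM(\catD)$ is built from a cap while zero morphisms are preserved by $F$, it is in fact a functor of compact dagger theories. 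Taking $F = [-]$ gives $\Phi := \CPM([-]) \colon \CPM(\plusI{\catB}) \to \CPM(\catB)$, which extends $[-]$ along the doublings in the sense that $\Phi \circ \Dbl{(-)} = \Dbl{(-)} \circ [-]$.

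Next I would verify the conditions for an equivalence of dagger theories. For essential surjectivity, note that $\Phi$ acts as $[-]$ on objects; since $\catB \simeq \plusI{\catB}_{\sim}$ by Theorem~\ref{thm:onPHConstr}, every object of $\catB$ is unitarily isomorphic to some $[A]$, and such a unitary doubles to a unitary of $\CPM(\catB)$, so $\Phi$ is essentially surjective up to unitary isomorphism. For fullness, the functor $[-]$ is itself full, as it factors as the full quotient $\plusI{\catB} \to \plusI{\catB}_{\sim}$ followed by the equivalence of Theorem~\ref{thm:onPHConstr}; hence any CPM map in $\CPM(\catB)$, built from some $\hat f \colon A \to C \otimes B$, has $\hat f = [f]$ for a morphism $f$ of $\plusI{\catB}$, and the CPM map built from $f$ is sent by $\Phi$ onto it.

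The crux is faithfulness, that is, that $\Phi$ does not identify distinct CPM maps. Suppose $\phi_1, \phi_2 \colon A \to B$ in $\CPM(\plusI{\catB})$, regarded as morphisms $A^* \otimes A \to B^* \otimes B$ of $\plusI{\catB}$, satisfy $[\phi_1] = [\phi_2]$; then $\phi_1 \sim \phi_2$, say $\phi_1 = u \cdot \phi_2$ for a global phase $u$. I would bend $\phi_1, \phi_2$ into their Choi states via the compact structure, obtaining states $\chi_i \colon I \to \Dbl{Y}$ into a doubled object, with $\chi_1 = u \cdot \chi_2$ since scalars are central. I then use the standard fact that a $\CPM$ state $I \to \Dbl{Y}$ is the name of a positive endomorphism $\rho_i \colon Y \to Y$; as bending and naming are pre- and post-composition with fixed cups and caps, this yields $\rho_1 = u \cdot \rho_2$ with $\rho_1, \rho_2$ positive, so $\rho_1 \sim \rho_2$, and the final clause of Theorem~\ref{thm:onPHConstr} forces $\rho_1 = \rho_2$, whence $\chi_1 = \chi_2$ and $\phi_1 = \phi_2$.

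I expect this last step to be the main obstacle, since it is the only place where the superposition properties are genuinely used. The well-definedness and fullness of $\Phi$ only reflect that $\CPM$ collapses global phases, formally because $\CPM(u \cdot f) = u^\dagger u \cdot \CPM(f) = \CPM(f)$ for a unitary scalar $u$; faithfulness instead requires the converse, that a phase relating two completely positive maps must act trivially, and this rests on the positivity of $\CPM$ states together with the positive-cancellation property~\eqref{eq:positive-cancellation}.
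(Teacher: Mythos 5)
Your proposal is correct and follows essentially the same route as the paper: lift $[-]$ to the $\CPM$ categories using its fullness, monoidality and surjectivity on objects up to unitary, and obtain faithfulness by bending wires into (positive) Choi states and invoking the final clause of Theorem~\ref{thm:onPHConstr}. Your write-up just makes explicit the bending-wires step and the positivity of $\CPM$ states that the paper's one-line faithfulness argument leaves implicit.
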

\begin{proof}
Since $[-]$ is a wide full dagger symmetric monoidal functor and is surjective on objects up to unitary it lifts to such a functor $\CPM(\plusI{\catB}) \to \CPM(\catB)$. For faithfulness we require that
\[
\scalebox{0.85}{\input{figures/CPM-mapsmall.tikz}}

\sim
\scalebox{0.85}{\input{figures/CPM-mapsmall2.tikz}}

\implies
\scalebox{0.85}{\input{figures/CPM-mapsmall.tikz}}

=
\scalebox{0.85}{\input{figures/CPM-mapsmall2.tikz}}

\]
which after bending wires is just the last line of Theorem~\ref{thm:onPHConstr}.
\end{proof}

\noindent
For a monoid $S$ with involution $(-)^\dagger$ we write $S^{\pos}$ for the collection of elements which are positive, i.e.~of the form $s^{\dagger} \cdot s$ for some $s \in S$. 

\begin{corollary} \label{cor:recipe}
Let $\catC$ be a compact dagger theory with an environment structure $\catC_{\prepure}$ which has the superposition properties. Then there is an embedding of dagger theories 
\begin{equation*} 
\Quant{S} \hookrightarrow \catC
\end{equation*}
for some commutative involutive semi-ring $S$ with $\catC_{\prepure}(I,I) \simeq S^{\pos}$ as monoids.
\end{corollary}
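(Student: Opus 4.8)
The plan is to assemble the desired embedding as a composite of three maps, exploiting that every category in sight is of $\CPM$ form. Write $\catB := \catC_{\prepure}$, which by hypothesis is a dagger compact category with the superposition properties, and set $\catA := \plusI{\catB}$ and $S := \catA(I,I)$. Since $\catA$ is dagger compact with finite dagger biproducts, its scalars $S$ form a commutative involutive semi-ring (commutativity is automatic for scalars in a symmetric monoidal category, and the involution is the restriction of the dagger), so $\Quant{S} = \CPM(\Mat_S)$ is defined. The skeleton of the argument is the chain
\[
\Quant{S} \;=\; \CPM(\Mat_S) \;\hookrightarrow\; \CPM(\catA) \;=\; \CPM(\plusI{\catB}) \;\simeq\; \CPM(\catB) \;=\; \CPM(\catC_{\prepure}) \;\simeq\; \catC,
\]
where the central equivalence is Lemma~\ref{lem:CPMsCoincide} and the final one is the equivalence of dagger theories $\CPM(\catC_{\prepure}) \simeq \catC$ recorded in Section~\ref{sec:recipe}. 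A composite of an embedding with equivalences is again an embedding of dagger theories, so it suffices to produce the first arrow.

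For this I would first invoke the full dagger symmetric monoidal embedding $\Mat_S \hookrightarrow \catA$ from Section~\ref{sec:building}, sending $n \mapsto n \cdot I$; it preserves the dagger, the tensor (Kronecker product matching $\otimes$), and zero morphisms. The key step is then that $\CPM$ is functorial on such maps: any dagger symmetric monoidal functor $F$ induces $\CPM(F)$, acting as $F$ on the underlying morphisms, since $F$ preserves the cups, caps and doublings out of which $\CPM$-maps are built, and it automatically preserves $\CPM$-discarding and zeros, so $\CPM(F)$ is a functor of dagger theories. Faithfulness of $\CPM(F)$ is immediate from faithfulness of $F$: a morphism of $\CPM(\Mat_S)$ \emph{is} a morphism of $\Mat_S$ of a special form, and $\CPM(F)$ sends it to its $F$-image, so $\CPM(F)(x) = \CPM(F)(y)$ forces $x = y$. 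I expect the main subtlety to be resisting the temptation to prove $\CPM(F)$ full: the embedding $\Mat_S \hookrightarrow \catA$ need not be essentially surjective, a $\CPM$-map between biproducts of $I$ may require a dilating object of $\catA$ outside the image of $\Mat_S$, and correspondingly $\catC$ is in general strictly larger than $\Quant{S}$. Since Corollary~\ref{cor:recipe} claims only an embedding, faithfulness is all that is needed, and the composite furnishes the required $\Quant{S} \hookrightarrow \catC$.

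It remains to identify the scalars. Every arrow in the chain is a functor of dagger theories and hence carries environment structures to environment structures; tracing the two equivalences on their pure parts gives $\catC_{\prepure} \simeq \CPM(\catA)_{\prepure} = \Dbl{\catA}$, the image of the doubling functor $\Dbl{(-)} \colon \catA \to \CPM(\catA)$. On the $\CPM(\catA) \simeq \CPM(\catB)$ step this uses that $\CPM([-])$ restricts to an equivalence $\Dbl{\catA} \simeq \Dbl{\catB}$, whose faithfulness is exactly the positivity statement $p \sim q \implies p = q$ of Theorem~\ref{thm:onPHConstr} driving Lemma~\ref{lem:CPMsCoincide}. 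Reading off scalars, $\Dbl{\catA}(I,I) = \{\, s^{\dagger} \cdot s \mid s \in S \,\} = S^{\pos}$, the monoid operation being composition, i.e.\ scalar multiplication. Hence $\catC_{\prepure}(I,I) \simeq S^{\pos}$ as monoids, completing the proof; a sanity check is $\catB = \FHilbP$, where $S = \mathbb{C}$ and $S^{\pos} = \mathbb{R}^{+} = \FHilbP(I,I)$.
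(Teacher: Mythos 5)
Your proposal is correct and takes essentially the same route as the paper: the identical chain $\CPM(\MatS) \hookrightarrow \CPM(\plusI{\catC_\prepure}) \simeq \CPM(\catC_\prepure) \simeq \catC$, with the middle equivalence given by Lemma~\ref{lem:CPMsCoincide} and the last by the environment-structure equivalence. Your identification of the scalars by tracing the pure subcategories through the equivalences ($\catC_\prepure \simeq \Dbl{\plusI{\catC_\prepure}}$, whose scalars are exactly $S^{\pos}$) is just a repackaging of the paper's two-line argument, which uses the same two ingredients---the CP axiom and the final statement of Theorem~\ref{thm:onPHConstr}---to show that $r \mapsto r^{\dagger} \circ r \colon R \to R^{\pos}$ and $[-] \colon S^{\pos} \to R^{\pos}$ are monoid isomorphisms.
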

\begin{proof}
Since $\plusI{\catC_{\prepure}}$ is dagger compact with dagger biproducts, we've seen that its scalars $S$ form a commutative involutive semi-ring, giving an embedding $\Mat_{S} \hookrightarrow \plusI{\catC_{\prepure}}$ and hence embeddings
\[
\begin{tikzcd}
\CPM(\Mat_{S})
\rar[hook,swap]{}
&
\CPM(\plusI{\catC_{\prepure}}) 
\rar{\sim}[swap]{\ref{lem:CPMsCoincide}}
&
\CPM(\catC_{\prepure})
\rar{\sim}
&
\catC
\end{tikzcd}
\]
Finally, let $R=\catC_{\prepure}(I,I)$. By the CP axiom and the last line of Theorem~\ref{thm:onPHConstr} the map $R \to R^{\pos}$ sending $r \mapsto r^\dagger \circ r$ and the map $[-] \colon S^{\pos} \to R^{\pos}$ are both monoid isomorphisms. 
\end{proof}

To obtain a full reconstruction, it remains to find further conditions making the embedding an equivalence. 

\begin{example}
Let $S$ be a commutative involutive semi-ring in which every non-zero element is invertible and for all positive elements $p$ we have $p^2 = 1 \implies p =1$. Then the above environment structure on $\Quant{S}$ has the superposition properties and is state-inhabited. Examples of such $S$ include $\mathbb{C}, \mathbb{R}, \mathbb{R^+}$ and the Booleans $\mathbb{B}$.
\end{example}

\section{The Operational Principles} \label{sec:axioms}

For now let us leave aside superpositions and our recipe for reconstructions, and return to the general setting of dagger theories outlined in Section \ref{sec:setup}. We now present axioms for theories of the form $\Quant{S}$ with a clearer operational meaning than the superposition properties, motivated by the CDP reconstruction for probabilistic theories~\cite{PhysRevA.84.012311InfoDerivQT}. We will consider dagger theories $(\catC,\discard{})$ coming with zero morphisms and the following features. 

\subsection{Purification}
Firstly, crucial to~\cite{PhysRevA.84.012311InfoDerivQT} is the ability in quantum theory to dilate any process to one of the following form. 

\begin{definition} \label{def:pure}
In any monoidal category with discarding and zero morphisms, we call a morphism $f \colon A \to B$ \emph{pure} when either $f = 0$ or $f$ satisfies 
\begin{equation} \label{eq:whole}
\scalebox{0.85}{\input{figures/whole-i.tikz}}

\end{equation}
for some causal state $\rho$, for all such dilations $g$.
\end{definition}

This notion of purity is due to Chiribella~\cite{chiribella2014distinguishability}. In the probabilistic setting, such as in~\cite{chiribella2010purification}, purity is more typically defined in terms of mixtures of processes. Recently there have also been several proposed alternative diagrammatic definitions of purity~\cite{selby2017leaks,cunningham2017purity}. However, for theories satisfying our principles all of these notions coincide, as we show in Appendix~\ref{appendix:pure}. Like~\cite{PhysRevA.84.012311InfoDerivQT} we consider theories in which morphisms have well-behaved pure dilations, as follows. 

\begin{principle}[Strong purification]
The collection $\catC_\pure$ of pure morphisms forms an environment structure on $\catC$. Moreover, every non-zero object of $\catC$ has a causal pure state, and for all pure morphisms $f, g$ we have 
\begin{equation}
\scalebox{0.85}{\input{figures/EU-purif.tikz}}
 \label{eq:essential-uniqueness}
\end{equation}
for some pure unitary $U \colon C \to C$. We refer to any pure dilation $g$ of a process $f$ a \emph{purification} of $f$, and condition \eqref{eq:essential-uniqueness} as stating that these are \emph{essentially unique}.
\end{principle}

In particular the above tells us that pure morphisms are closed under composition and contain all identity morphisms, giving the rule
\begin{equation} \label{eq:noleaks}
\scalebox{0.85}{\input{figures/noleaks.tikz}}

\end{equation}
for some causal state $\rho$, for all non-zero morphisms $f$. This is referred to as having `no leaks' in~\cite{selby2017leaks}. In particular it follows that all unitaries in $\catC$ are pure.

\subsection{Kernels}
Our next principle draws on other axioms of~\cite{PhysRevA.84.012311InfoDerivQT} encoding the following feature of both quantum and classical theory. 

Given a (causal) state $\rho$ of some system we can consider the other states $\sigma$ which may be `perfectly distinguished' from $\rho$ by some experimental test; in fact this holds iff~$\rho^{\dagger} \circ \sigma = 0$. Crucially, the collection of such states itself forms a system; indeed in the quantum case $\sigma$ is of this form iff it factors over the inclusion $\supp(\rho)^{\bot} \hookrightarrow \hilbH$. An existing categorical notion extends these ideas to arbitrary morphisms. 

\begin{definition}
Let $\catC$ be a dagger category with zero morphisms. A \emph{dagger kernel} of $f \colon A \to B$ is an isometry ${\ker(f) \colon \Ker(f) \to A}$ with ${f \circ \ker(f) = 0}$, such that every $g \colon C \to A$ with $f \circ g = 0$ has $g = k \circ h$ for some (necessarily unique) morphism $h$.
\[
\begin{tikzcd}
\Ker(f) \rar{\ker(f)} & A \rar[shift left=2.5]{f} \rar[shift right=2.5,swap]{0} & B \\ 
 C \arrow[ur,swap, "g"] \arrow[u, dashed, "\exists ! h"] 
\end{tikzcd}
\]
We say $\catC$ \emph{has dagger kernels} when every morphism has a dagger kernel. 
Dually, a \emph{dagger cokernel} of $f$ is a morphism $\coker(f) \colon B \to \Coker(f)$ which any morphism with $g \circ f = 0$ factors over, and for which $\coker(f)^{\dagger}$ is an isometry.
\end{definition}

Dagger kernels were first studied in detail by Heunen and Jacobs in~\cite{heunen2010quantum}. Whenever we use the term `(co)kernel' we will always mean `dagger (co)kernel'. 

Any two kernels of the same morphism are equal up to a unique unitary on their domain, and so we may identify them and speak of `the' kernel of a morphism. A category with dagger kernels also has cokernels via $\coker(f) = \ker(f^{\dagger})^{\dagger}$ and a zero object given by $0 = \Ker(\id{A})$ for any object $A$. Any kernel $k$ comes with a \emph{complement} defined by  
\[
k^{\bot} := \coker(k)^{\dagger}
\]
In fact this makes the collection of kernels on any fixed object into an \emph{orthomodular lattice}, just like the subspaces of a Hilbert space~\cite{heunen2010quantum}. 

We will need an extra assumption about kernels and their complements. Let us call a pair of effects $d, e \colon A \to I$ \emph{total} when for all morphisms $f, g \colon B \to A$ whenever $d \circ f = d \circ g$ and $e \circ f = e \circ g$ we have $\discard{A} \circ f = \discard{A} \circ g$. 

\begin{principle}[Kernels]
The category $\catC$ has dagger kernels, which moreover are \emph{causally complemented} in that for all dagger kernels $k \colon K \to A$ the effects
\begin{equation} \label{eq:caus-comp-pair}
\scalebox{0.85}{\input{figures/causal-comp-pair1.tikz}}

\quad
\text{ and }
\ \ 
\quad
\scalebox{0.85}{\input{figures/causal-comp-pair2.tikz}}

\end{equation}
are total.
\end{principle}

Causal complementation is a natural requirement if we are to think of the above effects as the two distinct outcomes of some binary test we may perform on the system $A$. 

\begin{example}
$\Quant{}$ has dagger kernels which are causally complemented. Since any morphism satisfies $\discard{} \circ f = 0 \implies f = 0$, it suffices to consider kernels of effects $e$. For a state $\rho \in B(\hilbH)$, $\ker(\rho^{\dagger})$ is precisely (the CP map determined by) the inclusion $\supp(\rho)^{\bot} \hookrightarrow \hilbH$, with complement given by $\supp(\rho) \hookrightarrow \hilbH$. 

 A state $\sigma$ in fact factors over the latter morphism  precisely when it appears in some convex decomposition of $\rho$, making it an \emph{ideal compression scheme} for $\rho$ in the sense of~\cite{PhysRevA.84.012311InfoDerivQT}.
These kernels are all pure and restrict to give dagger kernels in $\FHilb$ and $\FHilbP$. More broadly $\Hilb$ and $\HilbP$ have kernels, being the motivating example in~\cite{heunen2010quantum}.
\end{example} 

\begin{example} \label{ex:dag-ker-in-ClassProb}
Dagger kernels also exist in our classical theories $\Class$ and $\Rel$. In $\Class$, $\ker(M)$ is the usual kernel of a matrix $M$. In $\Rel$ a morphism $R \colon A \to B$ has $\ker(R) = \{a \in A \mid \nexists b \in B \ \  R(a,b)\}$.
\end{example}

\subsection{Pure exclusion}

In order for pure processes to behave with respect to zero morphisms as they do in $\Quant{}$ we require an extra principle. Let us call an object $A$ \emph{trivial} when $\discard{A} \colon A \to I$ is a unitary or $A$ is a zero object, and a theory \emph{trivial} when every object is trivial.

\begin{principle}[Pure Exclusion]
For every causal pure state $\psi$ of a non-trivial object $A$ there is a non-zero effect $e$ with
\[
\scalebox{0.85}{\begin{tikzpicture}
	\begin{pgfonlayer}{nodelayer}
		\node [style=kpointadj] (0) at (0, 1) {$e$};
		\node [style=kpoint] (1) at (0, -0.75) {$\psi$};
		\node [style=none] (2) at (1.5, -0) {$=$};
	\end{pgfonlayer}
	\begin{pgfonlayer}{edgelayer}
		\draw [style=none] (1) to (0);
	\end{pgfonlayer}
\end{tikzpicture}}
 \ 0
\]
\end{principle}

Pure exclusion encodes the idea that any pure state of a non-trivial system may be ruled out (excluded) by some experimental test (following from `perfect distinguishability' in \cite{PhysRevA.84.012311InfoDerivQT}). We will meet some natural equivalent conditions to pure exclusion shortly, in Section~\ref{sec:consequences}.

\subsection{Conditioning}

Finally, we will need to consider one very mild extra property, which allows us to form processes whose output is `conditional' on their input, in the following sense. We call a pair of states $\ket{0}, \ket{1}$ of the same object \emph{orthonormal} when they are orthogonal isometries, i.e.~$ \ket{0}^\dagger \circ \ket{0} = \id{I} = \ket{1}^\dagger \circ \ket{1}$ and $\ket{1}^\dagger \circ \ket{0} = 0$.

\begin{principle}[Conditioning]
For every pair of orthonormal states $\ket{0}, \ket{1}$ of an object $A$ and any states $\rho, \sigma$ of an object $B$ there is a morphism $f \colon A \to B$ with
\[
\scalebox{0.85}{\input{figures/conditioning1.tikz}}

\quad
\text{ and }
\qquad
\scalebox{0.85}{\input{figures/conditioning2.tikz}}

\]
\end{principle}

Intuitively, this principle simply states that there is some process $f$ which prepares either the state $\rho$ or $\sigma$ depending on whether the system is given in the state $\ket{0}$ or $\ket{1}$ (and may behave in any manner otherwise). The ability to consider such conditional procedures is a typical background assumption in the study of (causal) probabilistic theories, see e.g.~\cite[p.12]{chiribella2010purification}.

\begin{definition}[Operational Principles]
We say that a dagger theory $\catC$ satisfies the \emph{operational principles} when it is non-trivial, dagger compact, has zero morphisms, and satisfies principles 1-4.
\end{definition}

Along with these principles, we will later introduce some mild extra assumptions on the scalars in our theory, most notably `boundedness'. 

\begin{example} \label{ex:QuantCAndRearly}
$\Quant{\mathbb{C}}$ and $\Quant{\mathbb{R}}$ are dagger theories satisfying the operational principles, as we'll prove in Section~\ref{sec:PureProcessProperties}. The dagger theories $\Class$ and $\Rel$ may be seen to satisfy all of the operational principles aside from \strongpurification. 
\end{example}

\section{Consequences of the Principles} \label{sec:consequences}

Let us now establish some basic consequences of the operational principles. 

We begin with kernels. First, note that any category with dagger kernels in fact comes with some further useful maps. For any morphism $f \colon A \to B$, we define the \emph{image} of $f$ by
\[
\img(f) := \ker(\coker(f)) \colon \Img(f) \to B
\]
Then $f$ factors as $f = \img(f) \circ g$ where $\coker(g) = 0$, and a morphism $k$ is a kernel iff $k = \img(k)$ \cite{heunen2010quantum}. Dually, the \emph{coimage} of a morphism is given by $\coim(f) := \coker(\ker(f)) \colon A \to \CoIm(f)$. 

\begin{lemma} \label{lem:dag-ker-in-compact}
In any dagger compact category $\catC$ with dagger kernels:
\begin{enumerate} 
\item \label{enum:ker-extra-rule}
Kernels satisfy the rule 
\begin{equation} \label{eq:kernels}
\scalebox{0.85}{\input{figures/kernel-compact-intro1.tikz}}
 \ \implies \exists ! h  \text{ s.t. } \ 
\scalebox{0.85}{\input{figures/kernel-compact-intro2.tikz}}

\end{equation}
\item \label{enum:ker_under_tensor}
If $k$ and $l$ are kernels then so is $k \otimes l$;
\item \label{enum:zero-cancell}
If $\catC$ is state-inhabited then all morphisms $f, g$ satisfy
\begin{equation} \label{eq:zero-cancel}
f \otimes g = 0 \implies f = 0 \text{ or } g = 0
\end{equation}
\end{enumerate}
\end{lemma}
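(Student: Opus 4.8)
The plan is to handle the three parts in order, in each case exploiting the compact structure to reduce to the ordinary one-sided universal property of a dagger kernel.

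For the first part I would transport the defining property of $\ker(f)$ across the compact structure by bending wires. Given $p$ with the hypothesis that $(f \otimes \id{D}) \circ p = 0$, bend the spare $D$-wire down to the input using $\tinycap$, turning $p$ into a morphism $p'$ into $A$ alone and the hypothesis into $f \circ p' = 0$ (bending sends a zero morphism to a zero morphism, since $0 \circ (-) = 0 = (-) \circ 0$, and the snake equations are used only as rewrites). The ordinary kernel property of $\ker(f)$ then yields a unique $h'$ with $p' = \ker(f) \circ h'$, and bending $h'$ back up produces $h$ with $p = (\ker(f) \otimes \id{D}) \circ h$. Uniqueness transfers because bending a wire is a bijection on the relevant hom-sets, inverse to bending the other way, and is compatible with composing by $\ker(f) \otimes \id{D}$; so any two factorisations of $p$ bend to two factorisations of $p'$, which ordinary uniqueness forces to agree. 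Thus \eqref{eq:kernels} says precisely that $\ker(f) \otimes \id{D}$ is a dagger kernel of $f \otimes \id{D}$, and the symmetric argument shows $\id{D} \otimes \ker(f)$ is a kernel of $\id{D} \otimes f$.

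For the second part I would write $k \otimes l = (k \otimes \id{B}) \circ (\id{K} \otimes l)$, where $k \colon K \to A$ and $l \colon L \to B$. By the first part both $k \otimes \id{B}$ and $\id{K} \otimes l$ are dagger kernels. Since dagger kernels are closed under composition in any dagger kernel category~\cite{heunen2010quantum}, their composite $k \otimes l$ is again a dagger kernel.

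For the third part, suppose $f \otimes g = 0$ with $f \neq 0$; I will deduce $g = 0$, giving \eqref{eq:zero-cancel}. Factor $f \otimes g = (f \otimes \id{D}) \circ (\id{A} \otimes g)$, so by \eqref{eq:kernels} the morphism $\id{A} \otimes g$ factors as $(\ker(f) \otimes \id{D}) \circ h$ for some $h$. Post-composing with $\coim(f) \otimes \id{D}$ and using $\coim(f) \circ \ker(f) = \coker(\ker(f)) \circ \ker(f) = 0$ gives $\coim(f) \otimes g = 0$. Now $\coim(f) \colon A \to \CoIm(f)$ is a coisometry, since its dagger is a kernel and hence an isometry; precomposing with $\coim(f)^{\dagger} \otimes \id{C}$ and using $\coim(f) \circ \coim(f)^{\dagger} = \id{\CoIm(f)}$ yields $\id{\CoIm(f)} \otimes g = 0$. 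Finally, as $f \neq 0$ we have $\CoIm(f) \neq 0$, so state-inhabitedness provides an isometry $s \colon I \to \CoIm(f)$; sandwiching $\id{\CoIm(f)} \otimes g$ between $s \otimes \id{C}$ and $s^{\dagger} \otimes \id{D}$ and using $s^{\dagger} \circ s = \id{I}$ recovers $g$, which is therefore $0$.

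The delicate point in the first part is checking that the wire-bending really is an honest bijection that preserves both the factorisation structure and zero morphisms, so that existence and uniqueness of $h$ both transfer; the only external input to the second part is the cited closure of dagger kernels under composition. I expect the main conceptual obstacle to be the third part: the natural attempt to detect $g$ with a state and effect leaves one needing a product of nonzero scalars to be nonzero, which is not available in a general semi-ring. The key idea that avoids this is to strip off the nonzero factor $f$ using its coimage, a coisometry, together with state-inhabitedness, so that $g$ is recovered directly without any reasoning about scalars or zero divisors.
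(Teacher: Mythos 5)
Your proof is correct. Parts (1) and (2) follow essentially the paper's own route: your part (1) is just the paper's one-line ``bending wires and using the definition of a kernel'' written out in full, and your part (2) uses the identical decomposition $k \otimes l = (k \otimes \id{B}) \circ (\id{K} \otimes l)$ together with the same citation of~\cite{heunen2010quantum} for closure of dagger kernels under composition. The only difference in (2) is how one sees that $k \otimes \id{B}$ is a kernel: you read it off directly from \eqref{eq:kernels}, since any kernel is $\ker(f)$ for some $f$, whereas the paper verifies $k \otimes \id{B} = \img(k \otimes \id{B})$ by showing that any morphism killed by everything orthogonal to $k \otimes \id{B}$ factors over it (composing with $k^{\bot} \otimes \id{B}$ and then applying \eqref{eq:kernels}); your shortcut is sound and a little cleaner. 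Part (3) is where you genuinely diverge, and your argument is valid. The paper bends $g$ into an effect $e$ and case-splits on its image: if $\Img(e) \cong 0$ then $e = 0$ and hence $g = 0$; otherwise state-inhabitedness gives an isometric state $\psi$ of $\Img(e)$, so that $\img(e) \circ \psi$ is an isometric scalar, hence unitary because scalars commute, which forces $\img(e)$ to be unitary, so $\coker(e) = 0$ and $f = 0$ follows from the bending equivalence. You instead assume $f \neq 0$, use \eqref{eq:kernels} to factor $\id{A} \otimes g$ through $\ker(f) \otimes \id{D}$, kill it with $\coim(f) \otimes \id{D}$, and then cancel the coisometry $\coim(f)$ and an isometric state of $\CoIm(f)$ to recover $g = 0$. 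Both proofs invoke state-inhabitedness at the same point, but yours reuses part (1) and entirely avoids the paper's appeal to commutativity of scalars (isometric scalars being unitary), a modest but genuine simplification; the paper's argument for (3), by contrast, is self-contained and does not depend on parts (1)--(2). One nitpick: your parenthetical claim that bending preserves zero morphisms ``since $0 \circ (-) = 0 = (-) \circ 0$'' actually also needs $0 \otimes \id{D^*} = 0$, i.e.\ the monoidal compatibility of zero morphisms; this is an explicit ambient assumption in the paper's setup, so nothing breaks, but it is the tensor axiom rather than the composition axioms that does the work there.
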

\begin{proof}
\ref{enum:ker-extra-rule} follows from bending wires and using the definition of a kernel.

\ref{enum:ker_under_tensor}. 
It suffices to show for any kernel $k \colon K \to A$ and object $B$ that $k \otimes \id{B} = \img(k \otimes \id{B})$ i.e.~is a kernel. Then the morphism $k \otimes l = (k \otimes \id{}) \circ (\id{K} \otimes l)$ is a composition of kernels, and hence a kernel~\cite{heunen2010quantum}, as required.

 Now suppose that $f \colon C \to A \otimes B$ has that $g \circ f = 0$ for any $g \colon A \otimes B \to D$ orthogonal to $k \otimes \id{{B}}$. Then composing $f$ with $(k^\bot \otimes \id{B})$ and using the previous part shows that $f$ factors over $k \otimes \id{B}$, as desired.

\ref{enum:zero-cancell}. Using compactness we have
\[
\scalebox{0.85}{\input{figures/kernel-bend1.tikz}}
 \ 0
\iff 
\scalebox{0.85}{\input{figures/kernel-bend2.tikz}}
 \ 0
\]
Suppose that $f \otimes g = 0$ and let the effect $e$ be given by bending the output of $g$ to an input as above. If $\Img(e) \simeq 0$ then since $e$ factors over $\Img(e)$ we have $e = 0$. Otherwise, let $\psi$ be an isometric state of $\Img(e)$. Then $\img(e) \circ \psi$ is an isometry $I \to I$ and hence is unitary since scalars are commutative. So $\coker(e) = 0$ and then by the above we have $f = 0$. 
\end{proof}

We next explore some consequences of the presence of purification and kernels. 

\begin{proposition} \label{prop:OpTheoryProperties}
Let $\catC$ be a non-trivial compact dagger theory satisfying principles 1 and~2. Then the following hold.
\begin{enumerate} 
\item \label{enum:zero-mon}
$\discard{} \circ f = 0 \implies f = 0$ for all morphisms $f$.
\item \label{enum:dag-ker-pure}
Every dagger kernel in $\catC$ is pure and causal and is a kernel in $\catC_{\pure}$.
\item \label{enum:pairOfPure}
Every non-trivial object has an orthonormal pair of pure states.
\end{enumerate}
\end{proposition}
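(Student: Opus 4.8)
The plan is to handle the three claims separately, using only purification and kernels. For part~\ref{enum:zero-mon} I would reduce to pure morphisms and compare against the zero map. Given $f\colon A\to B$ with $\discard{B}\circ f=0$, take a pure dilation $g\colon A\to E\otimes B$ of $f$ from the environment structure $\catC_{\pure}$, so discarding the environment returns $f$. Since discarding is monoidal, discarding the whole output of $g$ equals $\discard{B}\circ f=0$, so $g$ is a pure dilation of $0\colon A\to I$ with environment $E\otimes B$. The zero map $0\colon A\to E\otimes B$ is a second such pure dilation with the same marginal, so essential uniqueness~\eqref{eq:essential-uniqueness} supplies a pure unitary $U$ with $g=(U\otimes\id{I})\circ 0=0$; hence $f=0$. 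I expect this to be routine once essential uniqueness is allowed to apply to the zero morphism.

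For part~\ref{enum:dag-ker-pure} the substantive point is purity, from which causality and the $\catC_{\pure}$ statement will follow. Let $k\colon K\to A$ be a dagger kernel; the case $K\simeq 0$ is trivial, so assume $K$ nonzero. To show $k$ is pure I verify that an arbitrary dilation $g\colon K\to E\otimes A$ of $k$ is trivial. First I would push $g$ through the cokernel: the morphism $h:=(\id{E}\otimes\coker(k))\circ g$ has $\discard{E}$-marginal equal to $\coker(k)\circ k=0$, and bending the $\Coker(k)$-output to an input by compactness turns this into $\discard{E}\circ\hat h=0$, whence $h=0$ by part~\ref{enum:zero-mon}. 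Since $\id{E}\otimes k$ is a kernel (of $\id{E}\otimes\coker(k)$) by Lemma~\ref{lem:dag-ker-in-compact}\ref{enum:ker_under_tensor}, the vanishing of $h$ lets $g$ factor as $g=(\id{E}\otimes k)\circ g'$; composing with $\discard{E}$ and cancelling the monic $k$ shows $g'$ is a dilation of $\id{K}$, which is trivial because identities are pure, giving $g'=\rho\otimes\id{K}$ and thus $g=\rho\otimes k$. With $k$ pure, causality is clean via $\catC\simeq\CPM(\catC_{\pure})$: the pure map $k$ is a doubling $\widehat{v}$ of some $v\in\catC_{\pure}$, and $k^{\dagger}\circ k=\widehat{v^{\dagger}\circ v}=\id{K}$ forces $v^{\dagger}\circ v=\id{K}$ (doubling reflects equality of positive morphisms, by the last line of Theorem~\ref{thm:onPHConstr}), so computing $\discard{A}\circ\widehat{v}$ in $\CPM(\catC_{\pure})$ returns $\discard{K}$ and $k$ is causal. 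That $k$ is a kernel in $\catC_{\pure}$ follows since $\coker(k)=\ker(k^{\dagger})^{\dagger}$ is again pure (its kernel is pure by the same argument applied to $k^{\dagger}$), and any pure $g$ with $\coker(k)\circ g=0$ factors as $g=k\circ(k^{\dagger}\circ g)$ with $k^{\dagger}\circ g$ pure.

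For part~\ref{enum:pairOfPure}, let $A$ be non-trivial. Purification gives a causal pure state $\psi_0\colon I\to A$, and the $\CPM(\catC_{\pure})$ description shows it is an isometry: writing $\psi_0=\widehat{\phi}$, causality forces the positive scalar $\phi^{\dagger}\circ\phi$ to be the unit, so $\psi_0^{\dagger}\circ\psi_0=\id{I}$. I would then take $k:=\ker(\psi_0^{\dagger})\colon K\to A$. Here $K$ is nonzero: otherwise $\psi_0$ has zero cokernel, an isometry with zero cokernel is unitary, and together with causality of $\psi_0$ this makes $\discard{A}$ unitary, so $A$ trivial, a contradiction. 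Choosing a causal (hence isometric) pure state $\psi_1$ of $K$ and setting $\ket{0}:=\psi_0$ and $\ket{1}:=k\circ\psi_1$ yields a pure isometry (by part~\ref{enum:dag-ker-pure}, $k$ is pure and isometric) that is orthogonal to $\ket{0}$, since $\psi_0^{\dagger}\circ k=0$ by definition of $k$. Thus $\{\ket{0},\ket{1}\}$ is the desired orthonormal pair of pure states.

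The main obstacle is the purity argument in part~\ref{enum:dag-ker-pure}: making the compactness `bend-and-cancel' step precise so that part~\ref{enum:zero-mon} applies, and correctly identifying $\id{E}\otimes k$ as the kernel of $\id{E}\otimes\coker(k)$. Once purity of kernels is established, the causal and $\catC_{\pure}$ statements, as well as the normalization of causal pure states needed in part~\ref{enum:pairOfPure}, reduce to bookkeeping inside $\CPM(\catC_{\pure})$.
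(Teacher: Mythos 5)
Your parts~\eqref{enum:zero-mon} and~\eqref{enum:dag-ker-pure} are correct and essentially reproduce the paper's own argument: part~\eqref{enum:zero-mon} compares a purification of $f$ with the zero morphism (which is pure by definition) via essential uniqueness, and part~\eqref{enum:dag-ker-pure} pushes an arbitrary dilation of $k$ through $\id{E}\otimes\coker(k)$, kills the result using part~\eqref{enum:zero-mon}, factors the dilation through $\id{E}\otimes k$, and finishes with purity of identities, exactly as in the paper. Two small repairs: the factorisation you need is supplied by Lemma~\ref{lem:dag-ker-in-compact}~\eqref{enum:ker-extra-rule} (the bent kernel rule), not by part~\eqref{enum:ker_under_tensor}, which only says $\id{E}\otimes k$ is \emph{a} kernel without identifying what it is the kernel of; and your causality argument is an unnecessary detour through $\CPM(\catC_\pure)$ and Theorem~\ref{thm:onPHConstr} --- since $k$ is a pure isometry, the CP axiom applied to the pure morphisms $k$ and $\id{K}$ gives $\discard{A}\circ k=\discard{K}$ in one line (this is also how the paper uses it elsewhere). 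The bending step in part~\eqref{enum:dag-ker-pure} is likewise avoidable: just discard the remaining output and apply part~\eqref{enum:zero-mon}.

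Part~\eqref{enum:pairOfPure}, however, has a genuine gap at the step ``an isometry with zero cokernel is unitary.'' This is not a consequence of dagger-kernel structure: in $\Rel$, the state $I\to\{a,b\}$ relating $*$ to both elements is an isometry whose cokernel is $0$, yet it is not unitary. Nor do principles 1 and 2 supply it. What would make it true is that the causal pure state $\psi_0$ is itself a \emph{kernel}: a kernel with zero cokernel equals its image $\ker(0)$, which is unitary. But by Lemma~\ref{lem:exclusion-equiv-new}, ``every causal pure state is a kernel'' (together with normalisation) is \emph{equivalent} to pure exclusion, i.e.\ to Principle 3, which is not among the stated hypotheses. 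Indeed the paper's own proof of part~\eqref{enum:pairOfPure} invokes pure exclusion at precisely this point (``By pure exclusion, $\Coker(\psi)$ is non-zero\dots''), despite the proposition's hypotheses mentioning only principles 1 and 2; so the printed hypotheses appear to be missing Principle 3, and your attempt to get by without it cannot be rescued by the claim you use --- that claim is essentially the content of the missing principle. Once pure exclusion is granted, both your construction and the paper's coincide and are correct: $\Coker(\psi_0)$ is non-zero, one takes a causal pure state $\eta$ of it, and $\coker(\psi_0)^{\dagger}\circ\eta$ is the required second orthonormal pure state.
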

\begin{proof}
\ref{enum:zero-mon}.
Let $g$ be a purification of $f$. Then $\discard{} \circ g = 0 = \discard{} \circ 0$. Then since $0$ is pure by definition we have $g = U \circ 0$ for some unitary $U$. Then $g = 0$ and so $f = 0$ also.  

\ref{enum:dag-ker-pure}.
We claim that any kernel $k$ satisfies~\eqref{eq:whole}. Indeed if $g$ is a dilation of $k$ then by the previous part we have implications:
\[
\scalebox{0.85}{\input{figures/kernelsarepurei.tikz}}

\]
for some unique $f$. Then since $k$ is an isometry, $f$ satisfies the left hand side of~\eqref{eq:noleaks}. If $K \simeq 0$ then $k = 0$ and so it is pure, otherwise $f = \id{A} \otimes \rho$ for some state $\rho$, as required. Hence all kernels are pure.

Now for any pure morphism $f$, $k = \ker(f)$ is also the kernel of $f$ in $\catC_{\pure}$, since if $g$ is pure with $f \circ g = 0$, then $g = k \circ h$ where $h = k^{\dagger} \circ g$ is pure since $g$ and $k$ are.

\ref{enum:pairOfPure}.
Let $A$ be any non-trivial object in $\catC$, and $\psi$ any causal pure state of $A$. By pure exclusion, $\Coker(\psi)$ is non-zero and so has a causal pure state $\eta$. Then $\phi = \coker(\psi)^{\dagger} \circ \eta$ is also a causal pure state of $A$ and $\psi$ and $\phi$ are orthonormal.
\end{proof}

Next let us turn to some consequences of pure exclusion, which in the presence of our other principles has a natural equivalent form. Say that a dagger theory has \emph{normalisation} when it has that every non-zero state $\rho$ is of the form $\rho = \sigma \circ r$ for some causal state $\sigma$ and scalar $r$. For example this holds in theories whose scalars are $\Rpos$ or the Booleans $\mathbb{B}$.

\begin{lemma}\label{lem:exclusion-equiv-new}
The following are equivalent for a compact dagger theory satisfying principles 1 and 2:
\begin{enumerate}
\item \label{enum:PEa}
Pure exclusion holds;
\item \label{enum:trivImage}
Every pure state has trivial image;
\label{enum:ImPsitrivial}
\item \label{enum:NormAndExtraa}
Normalisation holds, and every causal pure state is a kernel.
\end{enumerate}
Moreover, when these hold:
\begin{enumerate}[label=\roman*), ref=\roman*]
\item \label{enum:scalpure}
All scalars are pure;
\item  \label{enum:puretensor}
For all non-zero morphisms $f, g$, if $f \otimes g$ is pure then so are $f$ and $g$.
\end{enumerate}
\end{lemma}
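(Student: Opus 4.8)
The plan is to prove the three conditions equivalent via the cycle $\ref{enum:trivImage}\Rightarrow\ref{enum:NormAndExtraa}\Rightarrow\ref{enum:PEa}\Rightarrow\ref{enum:trivImage}$, using ``trivial image'' (condition~\ref{enum:trivImage}) as the hub, and then to read off the two \emph{moreover} clauses from it. The recurring object of study is, for a pure state $\psi$, its image factorisation $\psi=\img(\psi)\circ g$: since $\img(\psi)$ is a dagger kernel it is pure and causal by Proposition~\ref{prop:OpTheoryProperties}(\ref{enum:dag-ker-pure}), so the corestriction $g=\img(\psi)^{\dagger}\circ\psi\colon I\to\Img(\psi)$ is again pure and moreover \emph{cyclic} in the sense that $\coker(g)=0$; the latter says precisely that no non-zero effect is orthogonal to $g$. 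Besides the purity and causality of kernels I would lean on $\discard{}\circ f=0\implies f=0$ (Proposition~\ref{prop:OpTheoryProperties}(\ref{enum:zero-mon})) and the zero-cancellation rule (Lemma~\ref{lem:dag-ker-in-compact}(\ref{enum:zero-cancell})).

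Two of the three arrows should be routine. For $\ref{enum:trivImage}\Rightarrow\ref{enum:NormAndExtraa}$, normalisation follows by purifying an arbitrary non-zero state $\rho$ to a pure $p\colon I\to A\otimes E$; condition~\ref{enum:trivImage} forces $\Img(p)\cong I$, and factoring $p$ through this trivial image (whose inclusion $m$ is a causal kernel, so $\discard{A\otimes E}\circ m=\discard{\Img(p)}$ is unitary) yields $\rho=\sigma\circ r$ with $\sigma$ causal and $r$ a scalar. That a causal pure state $\psi$ is a kernel then follows because its cyclic corestriction $g$ is itself causal --- here I use that $\img(\psi)$ is causal, so $\discard{\Img(\psi)}\circ g=\discard{A}\circ\psi=\id{I}$ --- and a causal map into the trivial object $\Img(\psi)\cong I$ is forced to be unitary, whence $\psi=\img(\psi)\circ g$ is a kernel. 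For $\ref{enum:NormAndExtraa}\Rightarrow\ref{enum:PEa}$ a causal pure state that is a kernel has $\Img(\psi)\cong I$, so for non-trivial $A$ the cokernel $\coker(\psi)$ is non-zero and, since it is a coisometry, $e:=\discard{}\circ\coker(\psi)$ is a non-zero effect with $e\circ\psi=0$; this same cokernel computation also gives the direct implication $\ref{enum:trivImage}\Rightarrow\ref{enum:PEa}$.

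The real work is $\ref{enum:PEa}\Rightarrow\ref{enum:trivImage}$, deriving that every pure state has trivial image from pure exclusion. The \emph{causal} case is immediate: if $\psi$ is causal pure then (kernels being causal) its corestriction $g$ is a causal pure state of $\Img(\psi)$ with $\coker(g)=0$, and were $\Img(\psi)$ non-trivial pure exclusion would produce a non-zero effect orthogonal to $g$, a contradiction; hence $\Img(\psi)$ is trivial. The difficulty is a \emph{non-causal} cyclic pure state $g\colon I\to X$, where one cannot simply renormalise because $\discard{}$ is not pure. My plan here is to show $X$ cannot contain an orthonormal pair of pure states: such a pair $u_0,u_1$ exists on any non-trivial object by Proposition~\ref{prop:OpTheoryProperties}(\ref{enum:pairOfPure}), both are kernels, so the inclusion $q$ of the subobject they span is a (pure) kernel and the corestriction $q^{\dagger}\circ g$ is a pure state of this two-level object; a pure state there should be orthogonal to some non-zero effect $e_Q$, which pulls back along $q$ to a non-zero effect $e_Q\circ q^{\dagger}$ orthogonal to $g$, contradicting $\coker(g)=0$. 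I expect the genuine obstacle to be exactly this last step --- that a pure state of a height-two object is never cyclic --- which I would attack by induction on the height of the image lattice, the irreducible core being the height-two case itself; the natural tool to close it, not yet used above, is the \emph{causal complementation} part of the Kernels principle, which controls $\discard{}$ through the two complementary effects and should compensate for the absence of an additive resolution-of-identity on $\catC$.

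Finally, the two \emph{moreover} clauses are consequences of~\ref{enum:trivImage}. For~\ref{enum:scalpure}, purifying a scalar $s$ to a pure $p$ and using its trivial image writes $s=\discard{}\circ p$ as a composite of the pure corestriction of $p$ with the discarding of a trivial object, which is a unitary scalar and hence pure; so $s$ is pure. For~\ref{enum:puretensor}, given $f\otimes g$ pure with $f,g$ non-zero, any dilation $d$ of $f$ makes $d\otimes g$ a dilation of $f\otimes g$, which purity forces to be trivial of the form $(f\otimes\rho)\otimes g$; cancelling the non-zero tensor factor $g$ --- for instance by contracting with the non-zero effect $\discard{}\circ g$ (non-zero by Proposition~\ref{prop:OpTheoryProperties}(\ref{enum:zero-mon})) and invoking the relevant cancellation, cf.\ Lemma~\ref{lem:dag-ker-in-compact}(\ref{enum:zero-cancell}) --- shows $d$ is trivial, so $f$ is pure, and symmetrically $g$.
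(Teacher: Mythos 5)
Your handling of the easy implications matches the paper: your (2)$\Rightarrow$(3), your (3)$\Rightarrow$(1) (the paper proves (2)$\Rightarrow$(1) instead, but both are routine), your derivation of clause (i), and your causal case of (1)$\Rightarrow$(2) are all essentially the paper's arguments. However, the proposal has two genuine gaps, and they sit exactly where the paper's proof does its real work.

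First, (1)$\Rightarrow$(2) is not actually proved. You set up precisely the factorisation the paper uses, $\psi=\img(\psi)\circ g$ with $g$ pure and $\coker(g)=0$, but you only close the case where $\psi$ is causal. The paper closes the implication right at this point: $\coker(g)=0$ says that no non-zero effect is orthogonal to $g$, and it invokes pure exclusion on the corestriction $g$ itself to conclude that $\Img(\psi)$ is trivial, making no causal/non-causal case split. Your worry that $g$ need not be causal is a fair reading of the principle as stated, but the repair you attempt does not escape it: the restricted state $q^{\dagger}\circ g$ on the two-level subobject is again a pure state that need not be causal, so pure exclusion (on your reading) still cannot be applied to it, you give no argument that causal complementation manufactures the missing orthogonal effect, and you explicitly leave the base case of your induction open. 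So on your route the equivalence is simply not established. Note also that the paper has a second, independent way into (2), namely (3)$\Rightarrow$(2); but that route is powered by clause (ii), which brings us to the second gap.

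Second, your proof of (ii) rests on an invalid cancellation. From $d\otimes g=(f\otimes\rho)\otimes g$ you propose to cancel the non-zero factor $g$ by contracting with a state and an effect, ``invoking the relevant cancellation, cf.\ Lemma~\ref{lem:dag-ker-in-compact}~\eqref{enum:zero-cancell}''. That lemma only gives freeness from zero divisors ($f\otimes g=0$ implies $f=0$ or $g=0$); it does not let you pass from $s\cdot d=s\cdot d'$ with $s\neq 0$ to $d=d'$. There is no subtraction in $\catC$, so absence of zero divisors does not imply cancellativity; indeed cancellativity of $+$ and $\otimes$ in a theory satisfying the principles is only obtained \emph{after} the full reconstruction (Proposition~\ref{cor:cancellative}). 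This is exactly why the paper's proof of (ii) is the longest part of the lemma: it first treats the case where the spectator factor $\psi$ is \emph{causal} --- there one cancels legitimately by applying $\discard{}$, since $\discard{}\circ\psi=\id{I}$ --- and then reduces the general case to the causal one using normalisation, essential uniqueness of purification, the fact (from condition (3)) that causal pure states are kernels, and zero cancellation. So (ii) genuinely needs the strength of (3), not of (2) alone, and it cannot be had by the shortcut you sketch. With both (1)$\Rightarrow$(2) and (ii) broken, neither your cycle nor the fallback through (3)$\Rightarrow$(2) closes, so the proposal as it stands does not prove the lemma.
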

\begin{proof}
\ref{enum:PEa} $\implies$ \ref{enum:trivImage}: 
 Let $\psi$ be any non-zero pure state. Now we can write $\psi = \img(\psi) \circ \phi$ for some state $\phi$ with $\coker(\phi) = 0$. Since $\img(\psi)$ is an isometry it is easy to see that $\phi$ is pure also. Hence by pure exclusion $\Img(\psi)$ is trivial.

\ref{enum:trivImage} $\implies$ \ref{enum:PEa}: 
Let $\psi$ be a non-zero pure state of some object $A$. By assumption $\psi$ has trivial image and so we may take $\Img(\psi) = I$. Now suppose that $e \circ \psi = 0$ for all effects $e$. Thanks to Proposition~\ref{prop:OpTheoryProperties}~\eqref{enum:zero-mon}, we have $\coker(\psi) = 0$. This makes $\img(\psi)$ an isomorphism $I \simeq A$, which is causal by Proposition~\ref{enum:dag-ker-pure}, making $A$ trivial. 

\ref{enum:trivImage} $\implies$ \ref{enum:NormAndExtraa}: 
By purification, for normalisation it suffices to be able to normalise any non-zero pure state. By assumption any such state $\psi$ has $\Img(\psi) = I$, so that
\[
\scalebox{0.85}{\input{figures/img-pic.tikz}}

\]
and since $\img(\psi)$ is a kernel it is causal. Moreover if $\psi$ is causal then by applying discarding we see that $r = \id{I}$, making $\psi=\img(\psi)$ a kernel.  

\ref{enum:NormAndExtraa} $\implies$ \ref{enum:scalpure}: Normalisation is precisely the statement that all scalars are pure.

\ref{enum:NormAndExtraa} $\implies$ \ref{enum:puretensor}: By compactness, it suffices to show that if $\sta \otimes \stb$ is pure for non-zero states $\sta, \stb$, of $A, B$ respectively, then so is $\stb$. 

We first consider when $\sta$ is causal. In this case suppose that some state $\rho$ of $B \otimes C$ dilates $\stb$. Then $\sta \otimes \rho$ dilates the pure state $\sta \otimes \stb$ and so is given by  $\sta \otimes \stb \otimes \nu$ for some causal state $\nu$ of $C$. But then 
\[
\scalebox{0.85}{\input{figures/closed-under-statesargii.tikz}}

\]
Hence $\stb$ is pure. 

In the general case, by normalisation we have $\sta = \tau \circ r$ for some scalar $r$ and causal state $\tau$. Then since $\tau$ is causal, by the case above the state defined by $\stc := \stb \circ r$ must be pure. Now any purification $\sigma \colon I \to B \otimes D$ of $\stb$ satisfies 
\[
\scalebox{0.85}{\input{figures/imgarg-1.tikz}}

\]
for any causal pure state $\mu$ of $D$. Since we've seen that every scalar is pure, by essential uniqueness we have
\[
\scalebox{0.85}{\input{figures/imgarg-new.tikz}}

\]
for some pure unitary $U$. Since any unitary is causal by the CP axiom, letting $\eta$ be the causal pure state $U \circ \mu$ we have implications
\[
\scalebox{0.85}{\input{figures/imgarg-2.tikz}}
 \ 0 \  \ 
\implies
\scalebox{0.85}{\input{figures/imgarg-2a.tikz}}
 \ 0 \  \ 
\implies
\scalebox{0.85}{\input{figures/imgarg-2b.tikz}}

\]
for some state $\stl$, since by assumption $\eta$ is a kernel and so $\eta = \img(\eta)$. Note that the first implication uses Lemma~\ref{lem:dag-ker-in-compact}~\eqref{enum:zero-cancell}. Then applying $\discard{D}$ gives $\stl = \stb$.  Hence $\stb \otimes \eta$ is pure. Since $\eta$ is causal, we have seen above that this makes $\stb$ pure as required.  

{\ref{enum:NormAndExtraa}} $\implies$ \ref{enum:trivImage}: Let $\psi$ be any non-zero pure state, and let $\psi = \phi \circ r$ for some scalar $r$ and causal state $\phi$. Then by Lemma~\ref{lem:dag-ker-in-compact}~\eqref{enum:zero-cancell} $\coker(r) = 0$, so that $\img(\psi) = \img(\phi)$. But since $\phi$ is pure by \eqref{enum:puretensor}, it is a kernel by assumption and so $\img(\phi)$ is trivial. 
\end{proof}

Another useful fact for us will be the following.

\begin{lemma} \label{lem:specialObjectC}
Any dagger theory satisfying the operational principles contains an object with a pair of causal pure states $\ket{0}, \ket{1}$ with $\ket{0} = \ket{1}^\bot$ as kernels. 
\end{lemma}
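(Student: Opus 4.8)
The plan is to construct a ``qubit'': an object on which the two states jointly span everything, so that the orthocomplement of one is forced to be exactly the other. I would begin with \emph{any} non-trivial object $A$, which by Proposition~\ref{prop:OpTheoryProperties}~\eqref{enum:pairOfPure} carries an orthonormal pair of causal pure states $\ket{0},\ket{1}$. Since pure exclusion holds, Lemma~\ref{lem:exclusion-equiv-new}~\eqref{enum:NormAndExtraa} tells us every causal pure state is a dagger kernel, so $\ket{0}$ and $\ket{1}$ are elements of the orthomodular lattice $\KSub(A)$ of kernels. On $A$ itself the complement $\ket{1}^\bot$ may be strictly larger than $\ket{0}$, so the key idea is to cut $A$ down to the subobject the two states jointly generate.

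Concretely, I would let $m\colon M\to A$ be the join $m:=\ket{0}\vee\ket{1}$ in $\KSub(A)$, which is again a kernel and hence pure and causal by Proposition~\ref{prop:OpTheoryProperties}~\eqref{enum:dag-ker-pure}. As $\ket{0},\ket{1}\le m$, each factors through $m$, giving $\ket{0}'=m^\dagger\circ\ket{0}$ and $\ket{1}'=m^\dagger\circ\ket{1}$ on $M$, with $m\circ\ket{0}'=\ket{0}$ and $m\circ\ket{1}'=\ket{1}$ since $m m^\dagger$ fixes anything below $m$. I would then verify that $\ket{0}',\ket{1}'$ are again orthonormal causal pure states on $M$: orthonormality follows from $m^\dagger\circ m=\id{M}$ together with $m m^\dagger\circ\ket{0}=\ket{0}$; causality from $\discard{M}\circ\ket{0}'=\discard{A}\circ m\circ\ket{0}'=\discard{A}\circ\ket{0}=\id{I}$, using that the kernel $m$ is causal; and purity because $m^\dagger$ is pure (pure morphisms form a dagger subcategory), so $\ket{0}'=m^\dagger\circ\ket{0}$ is a composite of pure maps, and likewise for $\ket{1}'$.

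The heart of the argument is to show that $\ket{0}'\vee\ket{1}'=\top$ in $\KSub(M)$. Writing $n$ for this join, the composite $m\circ n$ is a kernel below $m$ with $\ket{0},\ket{1}\le m\circ n$, so $m=\ket{0}\vee\ket{1}\le m\circ n\le m$; cancelling the monomorphism $m$ forces $n$ to be invertible, i.e.\ $n=\top$. Now $\ket{1}'$ is a causal pure state, hence a kernel, and from $\ket{1}'^\dagger\circ\ket{0}'=0$ we get $\ket{0}'\le\ker(\ket{1}'^\dagger)=\ket{1}'^\bot$. Applying the orthomodular law to $\ket{0}'\le\ket{1}'^\bot$ gives $\ket{1}'^\bot=\ket{0}'\vee(\ket{1}'^\bot\wedge\ket{0}'^\bot)$, and by De Morgan $\ket{1}'^\bot\wedge\ket{0}'^\bot=(\ket{1}'\vee\ket{0}')^\bot=\top^\bot=0$, whence $\ket{1}'^\bot=\ket{0}'$. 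Thus $M$ together with $\ket{0}',\ket{1}'$ is the desired object.

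I expect the main obstacle to lie in the bookkeeping around the orthomodular lattice of kernels: justifying that joins exist and that $m\circ(-)$ identifies $\KSub(M)$ with the downset $\{k\in\KSub(A)\mid k\le m\}$, both standard consequences of~\cite{heunen2010quantum}, and above all checking that the restricted states $\ket{0}',\ket{1}'$ remain simultaneously causal \emph{and} pure. Once $\ket{0}'\vee\ket{1}'=\top$ is established, the orthomodular law delivers the complement condition $\ket{0}'=\ket{1}'^\bot$ almost immediately.
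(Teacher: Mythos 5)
Your proposal is correct and follows essentially the same route as the paper's proof: both take the orthonormal pair of causal pure states from Proposition~\ref{prop:OpTheoryProperties}, restrict to the join of their images in the orthomodular lattice of kernels, and deduce $\ket{0} = \ket{1}^\bot$ from the fact that the restricted states are kernels whose join is the top element. The paper compresses the final step into ``general properties of orthomodular lattices,'' which you have simply spelled out via the orthomodular law and De Morgan.
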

\begin{proof}
Let $A$ have a pair of orthonormal pure states $\psi_0, \psi_1$, as in Proposition~\ref{prop:OpTheoryProperties}~\eqref{enum:pairOfPure}. As noted, the dagger kernels on $A$ form an orthomodular lattice, so we may define 
\[
B = \Img(\psi_0) \vee \Img(\psi_1)
\] 
and $i := \img(\psi_0) \vee \img(\psi_1) \colon B \to A$. Then $\psi_0 = i \circ \ket{0}$ and $\psi_1 = i \circ \ket{1}$ for some unique pure isometric states $\ket{0}, \ket{1}$, which are kernels by the CP axiom and Lemma~\ref{lem:exclusion-equiv-new}. Furthermore $\ket{0}$, $\ket{1}$ are then orthogonal and by construction satisfy $\ket{0} \vee \ket{1} = \id{B}$. By general properties of orthomodular lattices this means that $\ket{0} = \ket{1}^{\bot}$. 
\end{proof}

\subsection{Coarse-graining}

In fact the principles we have considered provide another basic feature typical to operational physical theories. 

\begin{definition}
A dagger theory $\catC$ has \emph{coarse-graining} when it is dagger monoidally enriched in commutative monoids. That is, it has zero morphisms and each homset $\catC(A,B)$ comes with an associative commutative operation $+$ such that 
\begin{align*}
f \circ (g + h) &= f \circ g + f \circ h 
&
f + 0 &= f
\\ 
f \otimes (g + h) &= f \otimes g + f \otimes h 
&
(f + g)^{\dagger} &= f^{\dagger} + g^{\dagger} 
\end{align*}
for all morphisms $f, g, h$.
\end{definition}

\begin{examples} \label{example:CPM-mixing}
Our classical examples $\Class$ and $\Rel$ each come with coarse-graining given by addition of matrices and union of relations, respectively. If $\catA$ is a dagger compact category with finite dagger biproducts then $\CPM(\catA)$ has coarse-graining defined by
\[
\scalebox{0.85}{\input{figures/CPM-map-smaller.tikz}}

+
\scalebox{0.85}{\input{figures/CPM-map-smaller2.tikz}}

:=
\scalebox{0.85}{\input{figures/CPM-map-smaller3.tikz}}

\]
where $\langle f, g \rangle$ is the unique such morphism with 
$(\coproj_C^{\dagger} \otimes \id{B}) \circ \langle f, g \rangle = f$ and $(\coproj_D^{\dagger} \otimes \id{B}) \circ \langle f, g \rangle = g$.
In particular $\Quant{S}$ has coarse-graining, for any commutative involutive semi-ring $S$.
\end{examples}

\begin{proposition} \label{prop:coarse-graining}
In the presence of the other operational principles, $\catC$ satisfies conditioning iff it has coarse-graining. In this case its coarse-graining operation is unique. Moreover, in a theory with coarse-graining, causal complementation holds iff all dagger kernels $k \colon K \to A$ are causal and satisfy 
\begin{equation} \label{eq:causally-complemented}
\scalebox{0.85}{\input{figures/causal-comp.tikz}}

\end{equation}
\end{proposition}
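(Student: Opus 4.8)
The plan is to treat states first and extend to arbitrary morphisms through the compact structure, proving the assertions in the order: easy direction, existence of coarse-graining, the decomposition \eqref{eq:causally-complemented}, and finally uniqueness. First I would dispatch coarse-graining $\Rightarrow$ conditioning: given orthonormal $\ket{0}, \ket{1}$ and states $\rho, \sigma$, the morphism $f := \rho\circ\ket{0}^{\dagger} + \sigma\circ\ket{1}^{\dagger}$ satisfies $f\circ\ket{0} = \rho$ and $f\circ\ket{1} = \sigma$, since $\ket{0}^{\dagger}\circ\ket{0} = \id{I} = \ket{1}^{\dagger}\circ\ket{1}$ and $\ket{1}^{\dagger}\circ\ket{0} = 0$. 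For everything else I fix the object $B$ of Lemma~\ref{lem:specialObjectC}, with causal pure states $\ket{0} = \ket{1}^{\bot}$, and set $s := \discard{B}^{\dagger}$, noting $\ket{0}^{\dagger}\circ s = \id{I} = \ket{1}^{\dagger}\circ s$ by causality.

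For conditioning $\Rightarrow$ coarse-graining I would define, for states $\rho, \sigma\colon I \to X$, the sum $\rho + \sigma := m\circ s$ where $m\colon B \to X$ is any morphism with $m\circ\ket{0} = \rho$ and $m\circ\ket{1} = \sigma$ supplied by conditioning, and extend to general morphisms by naming them as states of $A^{*}\otimes B$. The crucial point is well-definedness, and this is precisely where causal complementation enters: applied to the kernel $\ket{1}$ (whose complement is $\ket{0}$) it says the pair $(\ket{0}^{\dagger}, \ket{1}^{\dagger})$ is total, so if $m, m'$ both condition $\rho, \sigma$ then $\ket{i}^{\dagger}\circ m^{\dagger} = \ket{i}^{\dagger}\circ m'^{\dagger}$ forces $\discard{B}\circ m^{\dagger} = \discard{B}\circ m'^{\dagger}$, i.e.\ $m\circ s = m'\circ s$. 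The enrichment axioms then follow: the unit law $\rho + 0 = \rho$ because $m\circ\ket{1} = 0$ makes $m$ factor through $\coker(\ket{1}) = \ket{0}^{\dagger}$; commutativity using a unitary $V$ swapping $\ket{0}, \ket{1}$, which is causal and hence fixes $s$; and left distributivity because $h\circ m$ conditions $h\circ\rho, h\circ\sigma$, with the $\otimes$- and $\dagger$-laws coming from the symmetry of the construction.

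For the decomposition \eqref{eq:causally-complemented} I would, given a kernel $k\colon K \to A$ in a coarse-grained theory, form $R := k\circ k^{\dagger} + k^{\bot}\circ(k^{\bot})^{\dagger}$. Using $k^{\dagger}\circ k = \id{K}$ and $(k^{\bot})^{\dagger}\circ k = \coker(k)\circ k = 0$, the two candidate effects $\discard{K}\circ k^{\dagger}$ and $\discard{K^{\bot}}\circ(k^{\bot})^{\dagger}$ are unchanged under precomposition with $R$. If causal complementation holds they form a total pair, so testing $R$ against $\id{A}$ yields $\discard{A}\circ R = \discard{A}$; expanding and using that kernels are causal (Proposition~\ref{prop:OpTheoryProperties}) gives \eqref{eq:causally-complemented}, whose converse is immediate from distributivity. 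Uniqueness then falls out: in any coarse-graining, \eqref{eq:causally-complemented} applied to $B$ reads $\discard{B} = \ket{0}^{\dagger} + \ket{1}^{\dagger}$, so $\ket{0} + \ket{1} = s$ and hence $\rho + \sigma = m\circ\ket{0} + m\circ\ket{1} = m\circ s$, a formula independent of the chosen operation.

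I expect the main obstacle to be well-definedness together with the full enrichment axioms in the existence step. The well-definedness is the conceptual crux and is exactly what forces the use of causal complementation, packaged as totality of $(\ket{0}^{\dagger}, \ket{1}^{\dagger})$; the same totality trick, applied to $R$ versus $\id{A}$, is what proves \eqref{eq:causally-complemented}. The most tedious part is associativity, which I would obtain by promoting the two-outcome object $B$ to a three-outcome analogue built from the orthonormal pure states of Proposition~\ref{prop:OpTheoryProperties}.
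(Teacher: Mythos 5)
Your proposal is correct and follows essentially the same route as the paper's proof: the same two-outcome object of Lemma~\ref{lem:specialObjectC}, the sum defined by collapsing a conditioned morphism with well-definedness secured by totality of the pair $(\ket{0}^{\dagger}, \ket{1}^{\dagger})$, the same test morphism $k \circ k^{\dagger} + k^{\bot} \circ k^{\bot \dagger}$ played against the total pair to obtain \eqref{eq:causally-complemented}, and the same uniqueness argument via $\discard{} = \ket{0}^{\dagger} + \ket{1}^{\dagger}$ on the special object. The only differences are cosmetic: you feed the state $\discard{B}^{\dagger}$ into the conditioning \emph{input} where the paper discards a conditioned \emph{output} (the two agree up to wire-bending), and you spell out the enrichment axioms --- including flagging associativity, which is genuinely the fiddliest point --- that the paper compresses into ``straightforward to verify''.
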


\begin{proof}
Suppose that $\catC$ has coarse-graining. Then conditioning follows automatically, since $f \circ \ket{0} = \rho$ and $f \circ \ket{1} = \sigma$ is satisfied by setting 
\[
\scalebox{0.85}{\input{figures/condition-dagger.tikz}}

\]
Conversely, suppose that $\catC$ satisfies the operational principles. By Lemma~\ref{lem:specialObjectC} it contains an object $C$ with a pair of causal kernel states $\ket{0}, \ket{1}$ with $\ket{0} = \ket{1}^\bot$. Now given any $f, g \colon A \to B$, using conditioning and compactness let $h \colon A \to B \otimes C$ be any morphism with 
\begin{equation} \label{eq:cg-def1}
\scalebox{0.85}{\input{figures/cg-1.tikz}}

\end{equation}
We then define 
\begin{equation} \label{eq:cg-def2}
\scalebox{0.85}{\input{figures/cg-2i.tikz}}

\end{equation}
By causal completeness this is independent of our choice of $h$. Moreover it is straightforward to verify that it respects $\circ, \otimes$ and $\dagger$ and has unit $0$, and so indeed gives $\catC$ coarse-graining.

For the second statement, firstly note that~\eqref{eq:causally-complemented} is easily seen to ensure causal complementation. Conversely, for any kernel $k$ as above let $f = k \circ k^\dagger + k^{\bot} \circ k^{\bot \dagger} \colon A \to A$. Then we have 
\[
\scalebox{0.85}{\input{figures/cg-8.tikz}}

\]
and so by causal complementation, $f$ is causal. But since all dagger kernels are causal we have $\discard{} \circ f 
= 
\discard{} \circ k^\dagger + \discard{} \circ k^{\bot \dagger}$
and so~\eqref{eq:causally-complemented} holds. 

Finally let us show that $+$ as defined above is unique. Indeed if $\catC$ comes with any other such operation then by~\eqref{eq:causally-complemented} for any object $C$ as above we have 
\[
\scalebox{0.85}{\input{figures/cg-7.tikz}}

\]
It follows that any morphism $h$ satisfying~\eqref{eq:cg-def1} will then automatically have marginal $f + g$, and so $+$ coincides with our definition above. In particular $+$ is independent of our choice of $C$.
\end{proof}

In particular, the operation $+$ makes the scalars $R = \catC(I,I)$ of our theory into a semi-ring. Thanks to the above we may have equivalently originally stated the operational principles in terms of the presence of such a coarse-graining operation; for more on this see Appendix~\ref{append:equiv}. 

\section{Deriving Superpositions} \label{sec:derivingsups}

Let us now show that any theory $\catC$ satisfying our principles has the superposition-like features of Section~\ref{sec:building}.

Firstly, note that thanks to essential uniqueness every pair of causal pure states of the same object in our theory are connected by a unitary. In fact we are able to strengthen this property. 

\begin{lemma} \label{lem:strong_homog}
In any dagger theory satisfying the operational principles, for any pairs $\{\ket{0}, \ket{1}\}$ and $\{\ket{0'}, \ket{1'}\}$ of orthonormal pure states of an object $A$, there is a unitary $U$ on $A$ with $U \circ \ket{0} = \ket{0'}$ and $U \circ \ket{1} = \ket{1'}$. 
\end{lemma}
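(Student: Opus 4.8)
The plan is to bootstrap from the one-state homogeneity already noted in the text (any two causal pure states of the same object are related by a unitary) to this two-state version, by first aligning the first components and then working inside an orthocomplement for the second. Concretely, I would first apply homogeneity to obtain a unitary $U_0$ on $A$ with $U_0 \circ \ket{0} = \ket{0'}$. Since $U_0$ is a pure unitary it preserves orthonormality, so $\ket{0'} = U_0 \circ \ket{0}$ and $U_0 \circ \ket{1}$ again form an orthonormal pure pair. Thus it suffices to treat the case $\ket{0} = \ket{0'}$ and produce a unitary fixing $\ket{0}$ and sending $\ket{1} \mapsto \ket{1'}$; composing with $U_0$ then gives the general statement.

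So assume $\ket{0} = \ket{0'}$. Both $\ket{1}$ and $\ket{1'}$ are orthogonal to $\ket{0}$, i.e.~$\ket{0}^{\dagger} \circ \ket{1} = 0 = \ket{0}^{\dagger} \circ \ket{1'}$, so each factors through the dagger kernel $j := \ker(\ket{0}^{\dagger}) = \ket{0}^{\bot} \colon A' \to A$, say $\ket{1} = j \circ \eta$ and $\ket{1'} = j \circ \eta'$ with $\eta = j^{\dagger} \circ \ket{1}$ and $\eta' = j^{\dagger} \circ \ket{1'}$. Since $j$ is a dagger kernel it is pure and causal by Proposition~\ref{prop:OpTheoryProperties}\,\eqref{enum:dag-ker-pure}, and $j^{\dagger}$ is pure because pure morphisms form a dagger subcategory; using $j^{\dagger} \circ j = \id{A'}$ and causality of $j$, a short check shows $\eta, \eta'$ are causal pure isometric states of $A'$ (in particular $A'$ is non-zero, as it carries $\eta$). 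Applying homogeneity on $A'$ yields a unitary $U'$ with $U' \circ \eta = \eta'$.

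Finally I would lift $U'$ to $A$ using coarse-graining by setting $U := \ket{0} \circ \ket{0}^{\dagger} + j \circ U' \circ j^{\dagger}$. The crucial input is the resolution of the identity $\ket{0} \circ \ket{0}^{\dagger} + j \circ j^{\dagger} = \id{A}$, which is exactly equation~\eqref{eq:causally-complemented} applied to the kernel $\ket{0}$ with complement $\ket{0}^{\bot} = j$, available by Proposition~\ref{prop:coarse-graining}. Using $\ket{0}^{\dagger} \circ j = 0$, $j^{\dagger} \circ \ket{0} = 0$, $\ket{0}^{\dagger} \circ \ket{0} = \id{I}$ and $j^{\dagger} \circ j = \id{A'}$ together with unitarity of $U'$, one computes $U^{\dagger} \circ U = \ket{0} \circ \ket{0}^{\dagger} + j \circ j^{\dagger} = \id{A} = U \circ U^{\dagger}$, so $U$ is unitary; and $U \circ \ket{0} = \ket{0}$ while $U \circ \ket{1} = j \circ U' \circ \eta = j \circ \eta' = \ket{1'}$, as required.

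I expect the main obstacle to be the lifting step: it relies both on coarse-graining to even form $U$ as a sum of two pure pieces, and on the orthogonal resolution $\ket{0} \circ \ket{0}^{\dagger} + j \circ j^{\dagger} = \id{A}$ from causal complementation via~\eqref{eq:causally-complemented}. The only other point needing genuine care is verifying that the contracted states $\eta, \eta'$ are indeed causal and pure, so that one-state homogeneity legitimately applies on $A'$; once that is in place the rest is routine algebra in the dagger compact setting.
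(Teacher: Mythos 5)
Your setup — reducing to the case $\ket{0}=\ket{0'}$, factoring $\ket{1}$ and $\ket{1'}$ through the complement kernel $j=\ket{0}^{\bot}$, checking that the contracted states $\eta,\eta'$ are causal pure isometries, and applying essential uniqueness on $A'$ to get $U'$ — is sound and is essentially the route the paper takes. The gap is entirely in your lifting step, and it is fatal as written. Equation~\eqref{eq:causally-complemented} is \emph{not} the operator-level resolution of the identity $\ket{0}\circ\ket{0}^{\dagger}+j\circ j^{\dagger}=\id{A}$; it is the effect-level statement $\ket{0}^{\dagger}+\discard{A'}\circ j^{\dagger}=\discard{A}$ (inspect the proof of Proposition~\ref{prop:coarse-graining}: the morphism $f=k\circ k^{\dagger}+k^{\bot}\circ k^{\bot\dagger}$ is shown there only to be \emph{causal}, not to be the identity). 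Moreover the operator-level equation is genuinely false in the motivating model $\Quant{\mathbb{C}}$: taking $A=\mathbb{C}^2$ with the computational basis, the sum $\ket{0}\circ\ket{0}^{\dagger}+\ket{1}\circ\ket{1}^{\dagger}$ of completely positive maps is the dephasing channel $a\mapsto\ket{0}\bra{0}a\ket{0}\bra{0}+\ket{1}\bra{1}a\ket{1}\bra{1}$, not the identity. Hence your computation $U^{\dagger}\circ U=\ket{0}\circ\ket{0}^{\dagger}+j\circ j^{\dagger}=\id{A}$ fails at the second equality, and your $U$ is not unitary. There is also a structural obstruction showing no variant of this sum construction can work: in theories satisfying the principles, pure coincides with atomic (Appendix~\ref{appendix:pure}), and unitaries are pure, so a unitary can never be written as a sum of two morphisms that are not scalar multiples of it. Your $U$ does satisfy $U\circ\ket{0}=\ket{0}$ and $U\circ\ket{1}=\ket{1'}$, but it is a decohering (measure-and-reprepare style) channel, not a reversible process, which is exactly what the lemma must rule out.

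The repair is the step the paper actually uses: extend $U'$ to $A$ not by coarse-graining but by essential uniqueness of purification. The morphisms $j$ and $j\circ U'\colon A'\to A$ are both pure, and both dilate the same causal morphism, since $\discard{A}\circ j\circ U'=\discard{A'}\circ U'=\discard{A'}=\discard{A}\circ j$ (kernels are causal, and $U'$ is causal being a pure unitary). Viewing $A$ as the purifying system, \eqref{eq:essential-uniqueness} yields a pure unitary $W$ on $A$ with $W\circ j=j\circ U'$, whence $W\circ\ket{1}=j\circ U'\circ\eta=j\circ\eta'=\ket{1'}$. One must then still check $W$ fixes $\ket{0}$, which is not automatic: $W^{\dagger}\circ\ket{0}$ is orthogonal to $j$ (since $j^{\dagger}\circ W^{\dagger}\circ\ket{0}=U'^{\dagger}\circ j^{\dagger}\circ\ket{0}=0$), so it factors over $j^{\bot}=\ket{0}^{\bot\bot}=\ket{0}$, giving $W^{\dagger}\circ\ket{0}=\ket{0}\circ z$ for some scalar $z$; this $z$ is an isometry, and since all scalars are pure (Lemma~\ref{lem:exclusion-equiv-new}) the CP axiom forces $z=\id{I}$. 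Then $W\circ U_0$ is the desired unitary. With your sum construction replaced by this argument, the rest of your proposal goes through.
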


\begin{proof}
By essential uniqueness there is a unitary $U$ on $A$ with $U \circ \ket{0} = \ket{0'}$. Since every causal pure state is a dagger kernel we may define a new dagger kernel $k = \ket{0'}^\bot \colon K \to A$. 

Since unitaries preserve orthogonality, $U \circ \ket{1}$ is orthogonal to $\ket{0'}$, so that $U \circ \ket{1}  = k \circ \psi$ for the causal pure state $\psi = k^\dagger \circ U \circ \ket{1}$. Similarly we always have $\ket{1'} = k \circ \phi$ for some causal pure state $\phi$. By essential uniqueness there is then a unitary $V$ on $K$ with $V \circ \psi = \phi$, and in turn a unitary $W$ on $A$ with $W \circ k = k \circ V$. 

One may then verify that $W^{\dagger} \circ \ket{0'}$ is orthogonal to $k$ and so factors over $k^{\bot} = \ket{0'}^{\bot \bot} = \ket{0'}$. Hence we have $W^{\dagger} \circ \ket{0'} = \ket{0'} \circ z$ for some scalar $z$. Then since $\ket{0'}$ is an isometry so is the scalar $z$, and so, since all scalars are pure, by the CP axiom we have $z = \id{I}$. Finally since $W$ preserves $\ket{0'}$ we have that $W \circ U$ is the desired unitary on $A$.
\end{proof}

In just the same way one may in fact show that there is a unitary relating any two collections of orthonormal states of the same size $\{\ket{i}\}^n_{i=1}$ and $\{\ket{i'}\}^n_{i=1}$; this property is called `strong symmetry' in~\cite{barnum2014higher}.
The result also allows us to extend conditioning to pure morphisms as follows.

\begin{lemma} \label{lem:existence}
In any dagger theory satisfying the operational principles, for any orthonormal pure states $\ket{0}, \ket{1}$ of an object $A$ and pair of pure states $\psi, \phi$ of an object $B$ there is a pure $f \colon A \to B$ with $f \circ \ket{0} = \psi$ and $f \circ \ket{1} = \phi$. 
\end{lemma}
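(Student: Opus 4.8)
The plan is to reduce the statement to a construction involving \emph{orthogonal} target states, and to produce $f$ by deleting an auxiliary `pointer' system. First, if $\psi$ or $\phi$ is zero the result is immediate, since e.g.\ $\phi \circ \ket{1}^\dagger$ is pure with $(\phi\circ\ket{1}^\dagger)\circ\ket{0} = 0$ and $(\phi\circ\ket{1}^\dagger)\circ\ket{1} = \phi$. Otherwise, using normalisation (Lemma~\ref{lem:exclusion-equiv-new}) we may write $\psi = \psi' \circ r$ and $\phi = \phi' \circ s$ for causal pure states $\psi',\phi'$ and non-zero scalars $r,s$, so it suffices to treat causal $\psi',\phi'$ and afterwards reinsert the scalars. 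Let $C$ be the object of Lemma~\ref{lem:specialObjectC}, with causal pure states $\ket{0_C} = \ket{1_C}^\bot$. The key idea is to build a pure map $g \colon A \to B \otimes C$ with $g \circ \ket{0} = \psi' \otimes \ket{0_C}$ and $g\circ\ket{1} = \phi'\otimes\ket{1_C}$, and then set $f := (\id{B} \otimes e)\circ g$ for a suitable pure effect $e$ on $C$; the point of the pointer $C$ is that it renders the two target states orthogonal.

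To build $g$ I would use that its targets $\psi'\otimes\ket{0_C}$ and $\phi'\otimes\ket{1_C}$ are orthonormal causal pure states of $B\otimes C$ (orthogonal because $\ket{1_C}^\dagger \circ \ket{0_C} = 0$), so this is an instance of the problem for \emph{orthonormal} targets, which can be solved by restricting a pure unitary. Concretely, writing $P := \Img(\ket{0})\vee\Img(\ket{1})$ and $Q := \Img(\psi'\otimes\ket{0_C}) \vee \Img(\phi'\otimes\ket{1_C})$ for the $2$-dimensional subobjects spanned by the sources and targets, with kernel inclusions $p \colon P \to A$ and $q \colon Q \to B\otimes C$, each carries a complementary pair of causal pure states. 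Essential uniqueness~\eqref{eq:essential-uniqueness}, applied to these states as pure dilations of $\id{I}$, yields a pure unitary $U \colon P \to Q$ matching the first pair; Lemma~\ref{lem:strong_homog} then provides a pure unitary on $Q$ fixing the first state and correcting the second, so that after composing we obtain a pure unitary $W \colon P \to Q$ sending each source basis state to the corresponding target. Setting $g := q \circ W \circ p^\dagger$ gives the required pure map, using $p^\dagger \circ p = \id{P}$.

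The main obstacle is the construction of the pure effect $e \colon C \to I$ with $e \circ \ket{0_C} = e\circ\ket{1_C} = \id{I}$ (and, for the scalar step, a weighted variant with prescribed values $r,s$). This is genuinely the heart of the lemma, because such an $e$ is the adjoint of a \emph{coherent superposition} of $\ket{0_C}$ and $\ket{1_C}$, whereas coarse-graining only ever produces mixtures; the superposition must instead be synthesised from the principles. I would do this using Lemma~\ref{lem:strong_homog} to obtain the swap unitary $S$ on $C$ with $S\circ\ket{0_C} = \ket{1_C}$ and $S\circ\ket{1_C} = \ket{0_C}$; since $S$ interchanges the complementary pair spanning $C$ it satisfies $S^2 = \id{C}$ and is thus a self-adjoint involution, so that $\ket{+} := \img(\id{C}+S)$ is a causal pure state fixed by $S$. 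The symmetry then forces $\ket{+}^\dagger\circ\ket{0_C} = \ket{+}^\dagger\circ\ket{1_C} =: c$ with $c$ an invertible scalar, and $e := c^{-1}\cdot \ket{+}^\dagger$ is the desired pure effect. Reinserting the normalisation scalars requires the analogous effect with $c$ replaced by the prescribed values, which is where mild assumptions on the scalars of the theory enter.
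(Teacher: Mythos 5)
Your route is genuinely different from the paper's, and two of its steps fail. The first gap is the pure unitary $U \colon P \to Q$. Essential uniqueness \eqref{eq:essential-uniqueness} compares two pure dilations of the same morphism through the \emph{same} ancilla object, and yields a unitary \emph{on that one object}; it says nothing about dilations via different objects, so it cannot produce a unitary between $P \leq A$ and $Q \leq B \otimes C$. At this stage of the development there is no reason why two objects, each spanned by a complementary pair of causal pure kernel states, should be unitarily isomorphic -- that is essentially what is being proved. Making comparisons \emph{across} different objects is exactly the job of the conditioning principle, which your argument never uses directly (only through coarse-graining, via Proposition~\ref{prop:coarse-graining}). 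The paper runs the comparison the other way around: conditioning provides a (generally non-pure) $h \colon A \to B$ with $h \circ \ket{0} = \psi$ and $h \circ \ket{1} = \phi$; a purification $g$ of $h$ then lives over a \emph{single} ancilla $C$; purity of $\psi, \phi$ in the sense of \eqref{eq:whole} forces $g \circ \ket{0} = \psi \otimes a$ and $g \circ \ket{1} = \phi \otimes b$ with $a, b$ causal, and pure by Lemma~\ref{lem:exclusion-equiv-new}; strong homogeneity (Lemma~\ref{lem:strong_homog}) and the final assembly of $f$ are then applied over that single ancilla, where they are valid. Note that this handles arbitrary non-zero pure $\psi, \phi$ directly, with no normalisation step, no division, and no square roots of scalars.

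The second gap is your effect $e$. The dagger-categorical image $\img(\id{C} + S) = \ker(\coker(\id{C}+S))$ does not compute eigenspaces: in $\Quant{}$ the morphism $\id{C} + S$, i.e.\ $\rho \mapsto \rho + \hat{S}\rho\hat{S}^{\dagger}$, has zero kernel and cokernel (applying $\discard{C}$ to $(\id{C}+S)\circ k = 0$ and using causality of kernels and unitaries forces $k = 0$), so $\img(\id{C}+S) = \id{C}$; your ``$\ket{+}$'' is the identity on $C$, not a state at all. You are conflating the image of the operator $\tfrac{1}{2}(1+\hat{S})$ in $\FHilb$ with the dagger image of the CP map $\id{C}+S$ in the theory. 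The claim $S^2 = \id{C}$ is also unavailable here: that endomorphisms of $C$ are determined up to phase by their values on $\ket{0_C}, \ket{1_C}$ is the phased-biproduct uniqueness property of Proposition~\ref{prop:PhBiprodII}, which is proved \emph{using} the present lemma -- as is the existence of a pure effect with $e \circ \ket{0_C} = e \circ \ket{1_C} = \id{I}$, so you have the dependency backwards: that effect is a consequence of this lemma, not an ingredient one may assume for its proof. Finally, dividing by $c$, and building the ``weighted variant'' needed to reinsert the normalisation scalars, presuppose invertibility and square roots in the scalar semi-ring; the operational principles grant neither, and the lemma as stated must hold without them.
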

\begin{proof}
If $\psi = 0$ then we may take $f = \phi \circ \ket{1}^\dagger$, and similarly if $\phi = 0$ the result is trivial. Otherwise assume that $\psi$ and $\phi$ are non-zero. Using conditioning, let $h$ be any morphism satisfying 
\[
\scalebox{0.85}{\input{figures/purif-in-proof-program-depure.tikz}}

\]
and let $g$ be any purification of $h$ via some object $C$. Then since all morphisms involved are pure it follows that 
\[
\scalebox{0.85}{\input{figures/purif-in-proof-program-2.tikz}}

\]
for some causal states $a, b$. By Lemma~\ref{lem:exclusion-equiv-new}~\eqref{enum:puretensor} these must be pure since each of the left-hand sides above are. Hence by Lemma~\ref{lem:strong_homog} there is a unitary $U$ with 
\[
\scalebox{0.85}{\input{figures/U-function.tikz}}

\]
Finally the pure morphism defined by
\[
\scalebox{0.85}{\input{figures/proof-fancy-construction.tikz}}

\]
satisfies
\[
\scalebox{0.85}{\input{figures/proof-fancy-long.tikz}}

\] and $f \circ \ket{1} = \phi$ similarly. 
\end{proof}

We are now able to show that $\catC_{\pure}$ has a qubit-like object.

\begin{proposition} \label{prop:PhBiprodII}
Let $\catC$ be a dagger theory satisfying the operational principles. Then
$\catC_{\pure}$ has a phased dagger biproduct $B = I \pbiprod I$ for which all phases are unitary. 
\end{proposition}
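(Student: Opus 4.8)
The plan is to realise $B = I \pbiprod I$ concretely as the object $Q$ produced by Lemma~\ref{lem:specialObjectC}, which carries a pair of causal pure kernel states $\ket{0}, \ket{1}$ with $\ket{0} = \ket{1}^{\bot}$, and to take these two states as the coprojections $\ket{0}, \ket{1} \colon I \to B$. They are orthogonal isometries by construction, so they have the correct form, and they lie in $\catC_{\pure}$, where they remain kernels by Proposition~\ref{prop:OpTheoryProperties}; from $\ket{0} = \ket{1}^{\bot}$ we moreover obtain the spanning identity $\ket{0} \vee \ket{1} = \id{B}$. The first clause of Definition~\ref{def:ph_coprod} is then immediate from Lemma~\ref{lem:existence}: given pure states $\psi, \phi \colon I \to D$, that lemma supplies a pure $h \colon B \to D$ with $h \circ \ket{0} = \psi$ and $h \circ \ket{1} = \phi$, which is exactly a copairing into $D$.

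The crux is the claim that every phase is unitary, which I would prove as follows. Let $U \colon B \to B$ be a phase, so that $U$ is pure with $U \circ \ket{0} = \ket{0}$ and $U \circ \ket{1} = \ket{1}$; taking daggers gives $\ket{0}^{\dagger} \circ U^{\dagger} = \ket{0}^{\dagger}$ and $\ket{1}^{\dagger} \circ U^{\dagger} = \ket{1}^{\dagger}$. Now causal complementation of the kernel $\ket{1}$ (whose complement is $\ket{1}^{\bot} = \ket{0}$) makes the pair of effects $\ket{0}^{\dagger}, \ket{1}^{\dagger} \colon B \to I$ \emph{total}; applying totality to the maps $U^{\dagger}$ and $\id{B}$ yields $\discard{B} \circ U^{\dagger} = \discard{B}$, so that $U^{\dagger}$ is causal. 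Both $U^{\dagger}$ and $\id{B}$ are then pure causal maps $B \to B$, i.e.\ purifications of the same effect $\discard{B} \colon B \to I$ with purifying system $B$, so essential uniqueness~\eqref{eq:essential-uniqueness} provides a pure unitary $W$ on $B$ with $U^{\dagger} = W$. Hence $U^{\dagger}$, and therefore $U$, is unitary.

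Granting this, the remaining clauses follow. For clause~2 of Definition~\ref{def:ph_coprod}, given two copairings $h, h'$ of a pair $(\psi, \phi)$ I would bend the input wire to regard $\tilde{h}, \tilde{h}' \colon I \to B^{*} \otimes D$ as pure states whose marginals on $D$ both equal the coarse-grained mixture of $\psi$ and $\phi$ (after normalising if necessary) and hence coincide; essential uniqueness then relates them by a pure unitary on $B^{*}$, which transposes to an endomorphism $U$ of $B$ with $h' = h \circ U$, and a short check using that $\tilde{h}, \tilde{h}'$ agree along $\ket{0}, \ket{1}$ shows that $U$ fixes the two coprojections, i.e.\ is a phase. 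Clause~3 is then automatic: since every phase $U$ is unitary we have $U^{\dagger} = U^{-1}$, so $U \circ \ket{i} = \ket{i}$ gives $U^{\dagger} \circ \ket{i} = \ket{i}$, and $U^{\dagger}$ is again a phase. The main obstacle throughout is the unitarity claim of the second paragraph, namely converting the lattice-theoretic datum $\ket{0} = \ket{1}^{\bot}$ into causality of $U^{\dagger}$ via totality and then invoking essential uniqueness; the bending argument for clause~2 is a variation on the same idea.
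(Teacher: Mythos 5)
Your choice of $B$ via Lemma~\ref{lem:specialObjectC}, the existence clause via Lemma~\ref{lem:existence}, and your proof that every phase is unitary (totality of the pair $\ket{0}^{\dagger}, \ket{1}^{\dagger}$ from causal complementation makes $U^{\dagger}$ causal, then essential uniqueness applied to $U^{\dagger}$ and $\id{B}$ as purifications of $\discard{B}$ makes it unitary) are all correct, and the unitarity argument is exactly the paper's; your derivation of clause~3 from unitarity is also fine.

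The gap is in clause~2. Your marginal computation is fine (it uses \eqref{eq:causally-complemented} and the compactness axiom for discarding), so essential uniqueness does give \emph{some} pure unitary $U$ on $B^{*}$ with $\tilde{h}' = (U \otimes \id{D}) \circ \tilde{h}$. But the principle only asserts existence of a relating unitary, so your argument must work for an arbitrary witness $U$, and nothing forces such a $U$ to transpose to a phase. Your ``short check'' would have to pass from $h \circ U^{*} \circ \ket{i} = h' \circ \ket{i} = h \circ \ket{i}$ (where $U^{*}$ is the transpose) to $U^{*} \circ \ket{i} = \ket{i}$, and this cancellation of $h$ is invalid: a copairing $h$ need not be monic. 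Concretely, in $\FHilbP$ take $B = D = \mathbb{C}^{2}$, $\psi = \phi = \ket{0}$, and $h = h'$ both induced by the linear map $V = \ket{0}\bra{0} + \ket{0}\bra{1}$. Then $\tilde{h} = \tilde{h}'$ is induced by the vector $(\overline{\ket{0}} + \overline{\ket{1}}) \otimes \ket{0}$, and the swap unitary $X = \ket{0}\bra{1} + \ket{1}\bra{0}$ on $B^{*}$ fixes this vector, so $U = X$ is a legitimate witness for essential uniqueness; its transpose satisfies $h' = h \circ X^{*}$ (indeed $V X = V$) yet exchanges $\ket{0}$ and $\ket{1}$, so it is not a phase. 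Thus your route only establishes that $h' = h \circ W$ for some unitary $W$, not for a phase, which is strictly weaker than clause~2.

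The paper circumvents precisely this obstacle. Instead of bending the copairings themselves, it first uses the existence property to build a copying map $\delta \colon B \to B \otimes B$ with $\delta \circ \ket{i} = \ket{i} \otimes \ket{i}$, shows $(\ket{0}^{\dagger} \otimes \id{B}) \circ \delta = \ket{0} \circ \ket{0}^{\dagger}$ (and likewise for $\ket{1}$) using that $\ket{0} = \ket{1}^{\bot}$ is a kernel, and then applies essential uniqueness to the dilations $(f \otimes \id{B}) \circ \delta$ and $(g \otimes \id{B}) \circ \delta$, whose marginals agree by causal complementation. The relating unitary $U$ now acts on the \emph{copied} factor, and composing with $\ket{0}$ gives $g \circ \ket{0} \otimes \ket{0} = f \circ \ket{0} \otimes (U \circ \ket{0})$; since $f \circ \ket{0} \neq 0$, zero-cancellativity (Lemma~\ref{lem:dag-ker-in-compact}) forces $\ket{1}^{\dagger} \circ U \circ \ket{0} = 0$, so $U \circ \ket{0}$ factors through the kernel $\ket{0}$ and the CP axiom pins the scalar to $\id{I}$ — this time the structure of $\delta$ genuinely forces $U$ to be a phase. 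One then recovers $f$ and $g$ from their copied versions by composing with a counit-like effect and uses invertibility (unitarity) of phases to conclude $g = f \circ U$. You would need this copier device, or something playing the same role, to close the gap.
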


\begin{proof}
Let $B$ be any object with a pair of pure causal states $\ket{0}, \ket{1}$ with $\ket{0} = \ket{1}^\bot$ as dagger kernels, as in Lemma~\ref{lem:specialObjectC}. Then $\ket{0}, \ket{1} \colon I \to B$ satisfy the existence property of a phased biproduct in $\catC_\pure$ by Lemma~\ref{lem:existence}. 

We now establish the uniqueness property. Let $\tinymultflip[whitedot] \colon B \to B \otimes B$ be a pure morphism with $\tinymultflip[whitedot] \circ \ket{i} = \ket{i} \otimes \ket{i}$ for $i = 0,1$. Then 
\[
\scalebox{0.85}{\input{figures/copier-zero-1.tikz}}
 \  0 
\]
Hence the morphism above the state $\ket{1}$ on the left-hand side above factors over $\coker(\ket{1}) = \ket{0}^\dagger$, and so is invariant under composition with  $\ket{0} \circ \ket{0}^\dagger$ since $\ket{0}$ is an isometry. In other words we have 
\begin{equation} \label{eq:zero-eq}
\scalebox{0.85}{\input{figures/copier-top-0.tikz}}

\end{equation}
The corresponding equation for $\ket{1}$ holds similarly.

Now let $f, g \colon B \to A$ be pure with $f \circ \ket{i} = g \circ \ket{i}$ for $i= 0, 1$. If $f \circ \ket{0} = 0$ then  since $\ket{1} = \ket{0}^\bot$ we get $f = f \circ \ket{1} \circ \ket{1}^\dagger = g$, and similarly $f = g$ if $f \circ \ket{1} = 0$. So now suppose that $f \circ \ket{i} \neq 0$ for $i = 0,1$. By \eqref{eq:zero-eq} we have 
\[
\scalebox{0.85}{\input{figures/copier-EUPurif-1a.tikz}}

\]
and so bending wires and using causal complementation we get
\[
\scalebox{0.85}{\input{figures/copier-EUPurif-1.tikz}}

\quad
\text{ and so }
\quad
\scalebox{0.85}{\input{figures/copier-EUPurif2.tikz}}

\]
for some unitary $U$ by essential uniqueness. But then 
\[
0 \ \ 
\scalebox{0.85}{\input{figures/copier-fapply-merge.tikz}}

\]
Hence by zero cancellativity we have $\ket{1}^\dagger \circ U \circ \ket{0} = 0$ and so $U \circ \ket{0} = \ket{0} \circ z$ for some scalar $z$. But then $z$ is an isometry and so by the CP axiom $z = \id{I}$. Hence $U$ preserves the states $\ket{0}$, and $\ket{1}$, i.e.~is a phase. Since $B$ has the existence property of a phased biproduct, there exists a pure effect $\tinycounit$ on $B$ with $\tinycounit \circ \ket{0} = \id{I} = \tinycounit \circ \ket{1}$. Then for any such effect $\tinycounit$ we have 
\begin{equation} \label{eq:witjcounit}
\scalebox{0.85}{\input{figures/copier-last.tikz}}
 
\end{equation}
where each of the endomorphisms of $B$ below $f, g$ above are also phases.

Finally note that any phase $W$ is unitary, since we have that $\ket{i}^\dagger \circ W^\dagger = \ket{i}^\dagger$ for $i =0,1$ and so $W^\dagger$ is causal by causal complementation, and hence unitary by essential uniqueness. In particular this makes phases invertible, so that by~\eqref{eq:witjcounit} $f$ and $g$ are equal up to phase, and closed under the dagger, so that $B$ is a phased dagger biproduct.      
\end{proof}

To construct more general phased dagger biproducts in $\catC_{\pure}$ we use the following abstract results, each proven in Appendix~\ref{appendix:Proofs}.

\begin{lemma} \label{lem:IplusIgivesaplusa}
Let $\catB$ be a dagger compact category with zero morphisms and a phased dagger biproduct $I \pbiprod I$. Then $\catB$ has phased dagger biproducts of the form $A \pbiprod A$, for all objects $A$.
\end{lemma}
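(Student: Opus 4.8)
The plan is to build the phased dagger biproduct on the object $A \pbiprod A := (I \pbiprod I) \otimes A$, with coprojections $\coproj_i^A := \coproj_i \otimes \id{A} \colon A \to (I \pbiprod I) \otimes A$ for $i \in \{0,1\}$, by transporting the phased biproduct structure of $I \pbiprod I$ across the compact structure of $\catB$. First I would record that these are orthogonal isometries: isometry is $(\coproj_i \otimes \id{A})^\dagger \circ (\coproj_i \otimes \id{A}) = (\coproj_i^\dagger \circ \coproj_i) \otimes \id{A} = \id{A}$, and orthogonality is $(\coproj_1 \otimes \id{A})^\dagger \circ (\coproj_0 \otimes \id{A}) = (\coproj_1^\dagger \circ \coproj_0) \otimes \id{A} = 0 \otimes \id{A} = 0$, using that $\coproj_0, \coproj_1$ are orthogonal isometries and that $0 \otimes f = 0$ in the presence of zero morphisms.

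The key device is the currying bijection supplied by compactness,
\[
\lambda_X \colon \catB(X \otimes A, C) \isomto \catB(X, A^* \otimes C),
\]
obtained by bending the $A$-wire, which is natural in $X$; explicitly, for any $u \colon X' \to X$ one has $\lambda_{X'}(h \circ (u \otimes \id{A})) = \lambda_X(h) \circ u$. Taking $X = I \pbiprod I$ and $u = \coproj_i \colon I \to I \pbiprod I$ shows that $\lambda(h) \circ \coproj_i$ is precisely the name $\lambda_I(h \circ (\coproj_i \otimes \id{A}))$, so the $\coproj_i^A$-restrictions of $h$ correspond bijectively to the $\coproj_i$-restrictions of $\lambda(h)$. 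Property~(1) then transfers directly: given $f, g \colon A \to C$, apply the existence property of $I \pbiprod I$ to the names $\lambda_I(f), \lambda_I(g)$ to obtain some $H \colon I \pbiprod I \to A^* \otimes C$ restricting to these names, and set $h := \lambda^{-1}(H)$. For property~(2), if $h, h'$ share their $\coproj_i^A$-restrictions then $\lambda(h), \lambda(h')$ share their $\coproj_i$-restrictions, so by the uniqueness property of $I \pbiprod I$ we get $\lambda(h') = \lambda(h) \circ U$ for a phase $U$ of $I \pbiprod I$; naturality then gives $h' = h \circ (U \otimes \id{A})$, and $U \otimes \id{A}$ is a phase of $A \pbiprod A$ since $U$ fixes each $\coproj_i$.

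The part requiring the most care is property~(3), closure of the phases under the dagger, since a priori a phase $V$ of $(I \pbiprod I) \otimes A$ need not be a tensor product. The plan is to show that every phase does in fact have this form. Given a phase $V$, both $V$ and $\id{}$ are endomorphisms of $(I \pbiprod I) \otimes A$ with the same $\coproj_i^A$-restrictions, so applying $\lambda$ (at $X = I \pbiprod I$, $C = (I \pbiprod I) \otimes A$, using naturality at $u = \coproj_i$) shows that $\lambda(V)$ and $\lambda(\id{})$ share their $\coproj_i$-restrictions; uniqueness for $I \pbiprod I$ yields a phase $U$ with $\lambda(V) = \lambda(\id{}) \circ U$, whence $V = U \otimes \id{A}$. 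Now $V^\dagger = U^\dagger \otimes \id{A}$, and $U^\dagger$ is a phase of $I \pbiprod I$ by property~(3) for $I \pbiprod I$, so $V^\dagger$ is again a phase of $A \pbiprod A$. This completes the verification that $(I \pbiprod I) \otimes A$ is a phased dagger biproduct $A \pbiprod A$.
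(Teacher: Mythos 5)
Your proof is correct and follows essentially the same route as the paper: realising $A \pbiprod A$ as $(I \pbiprod I) \otimes A$ with coprojections $\coproj_i \otimes \id{A}$, transferring existence and uniqueness across the wire-bending (currying) bijection given by compactness, and showing every phase has the form $U \otimes \id{A}$ so that dagger-closure follows from that of $I \pbiprod I$. The only difference is that you spell out the naturality of the currying bijection and the treatment of property~(3) explicitly, which the paper leaves as "straightforward to see".
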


\begin{lemma} \label{lem:kernelsGiveBiproducts}
Let $\catB$ be a category with dagger kernels and phased dagger biproducts $A \pbiprod A$ for all objects $A$. Then $\catB$ has finite phased dagger biproducts iff for every pair of objects $A, B$ there is an object $C$ and orthogonal kernels $k \colon A \to C$ and $l \colon B \to C$.  
\end{lemma}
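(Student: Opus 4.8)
The plan is to prove the two directions separately, treating the construction of universal morphisms in the backward direction as the crux.

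For the forward direction, suppose $\catB$ has finite phased dagger biproducts and fix objects $A, B$. I would take $C = A \pbiprod B$ with $k = \coproj_A$ and $l = \coproj_B$; these are orthogonal isometries by definition, so it remains only to see they are dagger kernels. The key observation is that, dualising the defining universal property of a phased biproduct by the dagger (using that phases are closed under $\dagger$), the pair $(\coproj_A^{\dagger}, \coproj_B^{\dagger})$ becomes \emph{jointly monic}: any two maps into $A \pbiprod B$ with equal marginals differ by a phase, so in particular a map killed by both $\coproj_A^{\dagger}$ and $\coproj_B^{\dagger}$ is zero. I would then identify $\coproj_A$ with $\ker(\coproj_B^{\dagger})$: writing $\coproj_A = \ker(\coproj_B^{\dagger}) \circ a$ and setting $b = \coproj_A^{\dagger}\circ \ker(\coproj_B^{\dagger})$, the maps $\ker(\coproj_B^{\dagger})$ and $\coproj_A \circ b$ have the same marginals, hence agree up to a phase fixing $\coproj_A$, which forces them equal; a short calculation then gives $b a = \id{A}$ and $a b = \id{}$ with $a$ an isometry, so $a$ is unitary and $\coproj_A$ is a kernel, and symmetrically for $\coproj_B$.

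For the backward direction, given orthogonal kernels $k \colon A \to C$ and $l \colon B \to C$ I would first pass to $C' = \Img(k) \vee \Img(l)$ in the orthomodular lattice of kernels on $C$, restricting $k, l$ to complementary orthogonal kernels $k' \colon A \to C'$ and $l' \colon B \to C'$ with $k' \vee l' = \id{C'}$, so that $l' = k'^{\bot}$ and $k' = l'^{\bot}$. I claim $(C', k', l')$ is a phased dagger biproduct $A \pbiprod B$. The complementary relations $k'^{\dagger}\circ l' = 0 = l'^{\dagger}\circ k'$ make the \emph{partial} matchings immediate: $f \circ k'^{\dagger}$ realises the marginals $(f,0)$ and $g \circ l'^{\dagger}$ realises $(0,g)$. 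The content is to combine these into a single $h$ with $h \circ k' = f$ and $h \circ l' = g$. Here I would invoke the hypothesis, using the phased biproduct $C' \pbiprod C'$ together with its dagger-dual product structure to form a diagonal and a copairing, and combine $f \circ k'^{\dagger}$ and $g \circ l'^{\dagger}$ exactly as addition is built from a genuine biproduct, the complementary relations being what let the phases be chosen so the marginals come out correctly. Uniqueness up to phase, and closure of phases under the dagger, then follow from the same structure, since two maps agreeing on $k'$ and $l'$ differ by an endomorphism of $C'$ fixing both, i.e.\ a phase.

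The main obstacle is precisely this existence step. Because $\catB$ carries no global addition on homsets --- indeed that is the whole point of working with phased rather than ordinary biproducts --- one cannot simply set $h = f \circ k'^{\dagger} + g \circ l'^{\dagger}$. The universal morphism must instead be transported from the phased biproducts assumed to exist, and the delicate part is verifying that the composite built from the diagonal and copairing has the prescribed marginals despite each factor being defined only up to a phase. Controlling these phases using $k' \vee l' = \id{C'}$ and the orthogonality of $k', l'$ is where the real work lies.
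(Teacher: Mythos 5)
Your forward direction is correct and is essentially the paper's own argument (Lemma~\ref{lem:dag-ker-coproj}): you dualise the uniqueness property through the dagger to get that morphisms into $A \pbiprod B$ with equal projections agree up to a left phase, and conclude that $\pcoproj_A$ agrees with $\ker(\pproj_B)$ up to a unitary. The backward direction, however, has a genuine gap, and it sits exactly where you yourself locate the ``real work''. Your candidate copairing is $h = [\,f\circ k'^{\dagger},\ g\circ l'^{\dagger}\,]\circ\langle \id{},\id{}\rangle$, a composite of two morphisms each determined only up to a phase of $C'\pbiprod C'$. To verify $h\circ k' = f$ you must evaluate the copairing $[\,f\circ k'^{\dagger}, g\circ l'^{\dagger}\,]$ on $\langle \id{},\id{}\rangle\circ k'$; but that morphism is merely \emph{some} pairing of $(k',k')$, itself defined only up to a left phase, and it factors through neither coprojection, so the defining property of the copairing gives no way to compute the composite. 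Orthogonality and $k'\vee l'=\id{C'}$ do not repair this: in $\catB$ there is no addition with which to expand the composite into $f\circ k'^{\dagger}\circ k' + g\circ l'^{\dagger}\circ k'$, which is the computation your plan implicitly invokes (it is valid in $\FHilbP$ only because one can lift to $\FHilb$ and add there). Your uniqueness step is also circular as stated: that two maps agreeing on $k'$ and $l'$ ``differ by an endomorphism of $C'$ fixing both'' \emph{is} the uniqueness property you are trying to establish for $(C',k',l')$; it does not follow from the ambient structure without an argument.

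The paper's proof is engineered to avoid precisely this phase-control problem, by never composing two phase-ambiguous universal morphisms. Working with the given orthogonal kernels $k,l$ into $C$ directly (your reduction to the complementary case is not needed), it chooses an endomorphism $f$ of $C\pbiprod C$ with $f\circ\pcoproj_1 = \pcoproj_1\circ k\circ k^{\dagger}$ and $f\circ\pcoproj_2 = \pcoproj_2\circ l\circ l^{\dagger}$, and takes the new object to be $\Img(f)$, with coprojections $\pcoproj_A,\pcoproj_B$ \emph{defined} by the exact factorisations $\img(f)\circ\pcoproj_A = \pcoproj_1\circ k$ and $\img(f)\circ\pcoproj_B = \pcoproj_2\circ l$ through the image. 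A copairing of $g \colon A \to D$, $h \colon B \to D$ is then $j\circ\img(f)$, where $j$ is any copairing on $C\pbiprod C$ of $g\circ k^{\dagger}$ and $h\circ l^{\dagger}$; its marginals are computable on the nose because $\img(f)\circ\pcoproj_A$ is pinned down exactly, not up to phase. Uniqueness and dagger-closure of phases are then transported from $C\pbiprod C$ via the retraction $\img(f)^{\dagger}$, using the identity $\img(f)^{\dagger}\circ\pcoproj_1 = \pcoproj_A\circ k^{\dagger}$. If you try to complete your own plan you will find the repair is essentially this construction: the kernel/image machinery is exactly what converts ``up to phase'' data into the exact equations your marginal computations require.
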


\begin{corollary} \label{cor:daggerphasedbiproducts}
Let $\catC$ be a dagger theory satisfying the operational principles. Then $\catC_{\pure}$ has the superposition properties.
\end{corollary}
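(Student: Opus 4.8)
The plan is to verify the three requirements of the superposition properties (Definition~\ref{def:sup_properties}) for $\catB = \catC_{\pure}$: that it is dagger compact and state-inhabited, that it has finite phased dagger biproducts, and that the positive-cancellation law~\eqref{eq:positive-cancellation} holds. Dagger compactness is immediate, since $\catC_{\pure}$ is an environment structure and hence a dagger compact subcategory by \strongpurification. State-inhabitedness follows directly from the principles: every non-zero object carries a causal pure state by \strongpurification, and by Lemma~\ref{lem:exclusion-equiv-new}~\eqref{enum:NormAndExtraa} every such state is a dagger kernel, in particular an isometry $I \to A$ living in $\catC_{\pure}$.

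For the phased biproducts I would build up in three stages. Proposition~\ref{prop:PhBiprodII} already supplies a phased dagger biproduct $I \pbiprod I$ in $\catC_{\pure}$, and Lemma~\ref{lem:IplusIgivesaplusa} promotes this to phased dagger biproducts $A \pbiprod A$ for every object $A$. Since $\catC_{\pure}$ has dagger kernels (Proposition~\ref{prop:OpTheoryProperties}~\eqref{enum:dag-ker-pure}), Lemma~\ref{lem:kernelsGiveBiproducts} reduces the existence of \emph{all} finite phased dagger biproducts to producing, for each pair of objects $A, B$, a single object into which $A$ and $B$ embed as orthogonal kernels. When $A$ or $B$ is a zero object the biproduct is the other object, so I assume both non-zero and choose causal pure kernel states $\alpha \colon I \to A$ and $\beta \colon I \to B$ together with the object $W = I \pbiprod I$ carrying orthonormal pure kernel states $\ket{0}, \ket{1}$ from Lemma~\ref{lem:specialObjectC}. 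Then $\id{A} \otimes \ket{0} \otimes \beta$ and $\alpha \otimes \ket{1} \otimes \id{B}$ are kernels into $A \otimes W \otimes B$, being tensor products of kernels (Lemma~\ref{lem:dag-ker-in-compact}~\eqref{enum:ker_under_tensor} together with the fact that identities and the chosen states are kernels), and they are orthogonal because $\ket{1}^{\dagger} \circ \ket{0} = 0$. This supplies the required data, and hence all finite phased dagger biproducts of $\catC_{\pure}$.

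The remaining, and most delicate, task is the positive-cancellation law: for a phase $U$ and positive diagonal $p, q$ on $A \pbiprod B$ with $p = q \circ U$, we must deduce $p = q$. Composing $p = q \circ U$ with the coprojections and using $U \circ \pcoproj_A = \pcoproj_A$ and $U \circ \pcoproj_B = \pcoproj_B$ shows at once that $p$ and $q$ have the same diagonal blocks, so the entire content is that $U$ is absorbed by $q$. First I would show that every phase on such a biproduct is unitary by the argument of Proposition~\ref{prop:PhBiprodII}: as the coprojections are orthogonal kernels, $U^{\dagger}$ fixing them is causal by causal complementation and hence unitary by essential uniqueness. Taking daggers of $p = q \circ U$ and using $p = p^{\dagger}$, $q = q^{\dagger}$ then gives $q \circ U = U^{\dagger} \circ q$; left-multiplying by $U$ yields $U \circ q \circ U = q$, and substituting back produces the clean identity $p \circ p = q \circ q$.

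From $p \circ p = q \circ q$ with $p, q$ positive, the desired conclusion $p = q$ is a form of uniqueness of positive square roots, and this is where I expect the main obstacle to lie: unlike in Hilbert spaces there is no functional calculus available, so the step must instead exploit the interplay between positivity and the global-phase nature of pure morphisms. Concretely I would descend to the block level, where a positive block composed with a unitary scalar phase and required to remain positive is forced to reabsorb that phase---using the CP axiom that a positive unitary scalar is trivial, exactly as the behaviour seen in $\FHilbP$---so that $q \circ U = q$ in each case according to which block of $q$ degenerates. Making this final reduction precise, rather than the assembly of the biproducts in the previous paragraph, is the crux of the proof.
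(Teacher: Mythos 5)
Your first two steps coincide exactly with the paper's proof: state-inhabitation and dagger compactness from \strongpurification{} and Lemma~\ref{lem:exclusion-equiv-new}, then finite phased dagger biproducts assembled via Proposition~\ref{prop:PhBiprodII}, Lemma~\ref{lem:IplusIgivesaplusa} and Lemma~\ref{lem:kernelsGiveBiproducts}, using the same orthogonal kernels $\id{A} \otimes \ket{0} \otimes \beta$ and $\alpha \otimes \ket{1} \otimes \id{B}$ supplied by Lemma~\ref{lem:dag-ker-in-compact}~\eqref{enum:ker_under_tensor}. Your observation that all phases are unitary is also correct, mirroring the end of the proof of Proposition~\ref{prop:PhBiprodII}.

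The genuine gap is in the positive-cancellation step, and you have located it yourself: everything after ``$p \circ p = q \circ q$'' is a wish rather than an argument. Uniqueness of positive square roots is not available in this abstract setting; it is essentially as strong as \eqref{eq:positive-cancellation} itself. Moreover the reduction is circular in the only direction the axioms let you move: since $p$ and $q$ are self-adjoint, the CP axiom applied to $p^{2} = q^{2}$ gives exactly $\discard{} \circ p = \discard{} \circ q$, and applying it once more returns you to $p^{2} = q^{2}$. Your proposed block-level absorption cannot close this loop for a structural reason: a morphism out of $A \pbiprod B$ is determined by its composites with $\pcoproj_A$ and $\pcoproj_B$ only \emph{up to a phase}, and you have already extracted all block information (the diagonal blocks of $p$ and $q$ coincide). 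Any argument confined to blocks and scalar phases --- including the fact that positive unitary scalars are trivial --- is therefore incapable of distinguishing $q$ from $q \circ U$; it terminates where it began, with $p$ and $q$ equal up to a phase. The ingredient the paper uses, absent from your final step, is one that sees more than the blocks: writing $p = f^{\dagger} \circ f$ and $q = g^{\dagger} \circ g$ with $f, g$ pure, it forms the kernel $\img(f_A)$ of the block $f_A = f \circ \pcoproj_A$ together with its complement, shows that the resulting complementary pair of effects applied to $f$ and to $g$ is computed from block data alone (equations \eqref{eq:poscancel2a} and \eqref{eq:poscancel2b}), deduces from the CP axiom that those block data agree, and then invokes \emph{causal complementation} --- the totality clause of the Kernels principle --- to upgrade the two block-level equalities to the global equality $\discard{} \circ f = \discard{} \circ g$; a final application of the CP axiom then gives $p = q$. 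Totality is precisely the mechanism that recovers a global statement from complementary partial ones, and no substitute for it appears in your proposal.
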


\begin{proof}
Since all kernels in $\catC$ are also kernels in $\catC_\pure$, by Proposition~\ref{prop:PhBiprodII} and Lemmas~\ref{lem:IplusIgivesaplusa} and~\ref{lem:kernelsGiveBiproducts} to show that $\catC_\pure$ has phased dagger biproducts it suffices to show for all objects $A$, $B$ that there exists some object $C$ and orthogonal kernels $k \colon A \to C$ and $l \colon B \to C$.

Now if either $A$ or $B$ is a zero object the result is trivial. Otherwise let $\psi$ and $\phi$ be causal pure states of $A, B$ respectively, and let $C$ be an object with two orthogonal causal pure states $\ket{0}$, $\ket{1}$, such as $I \pbiprod I$. Then these states are all kernels and so by Proposition~\ref{lem:dag-ker-in-compact}~\eqref{enum:ker_under_tensor} so are the morphisms
\[
\scalebox{0.85}{\input{figures/ker1.tikz}}

\]
which are indeed orthogonal. Hence $\catC_\pure$ has finite phased dagger biproducts. 

We now verify the rule~\eqref{eq:positive-cancellation}. First consider a pure positive endomorphism $f^{\dagger} \circ f$ of $A \pbiprod B$ which is diagonal so that $f_A := f \circ \pcoproj_A$ and $f_B := f \circ \pcoproj_B$ are orthogonal. Letting $c_A = \img(f_A)^{\dagger}$ and $c_B = {c_A}^{\bot}$ we have
\begin{align*}
f_A^\dagger \circ f_B = 0 
\implies 
c_A \circ f_B = 0
&\implies 
c_A \circ f = c_A \circ f \circ \coproj_A \circ \coproj_A^\dagger 
\\ 
c_B \circ f \circ \coproj_A
=
\coker(f_A) \circ f_A
=
0
&\implies
c_B \circ f = c_B \circ f \circ \coproj_B \circ \coproj_B^\dagger 
\end{align*}
using for the latter implications that $\pcoproj_A$ and $\pcoproj_B$ are themselves dagger kernels with $\pcoproj_A=\pcoproj_B^\bot$, as shown in Lemma~\ref{lem:dag-ker-coproj} in the Appendix. Hence we have 
\begin{align} \label{eq:poscancel2a}
\discard{} \circ c_A \circ f 
&=
\discard{} \circ f_A \circ \coproj_A^{\dagger}
\\   \label{eq:poscancel2b}
\discard{} \circ c_B \circ f 
&=
\discard{} \circ f_B \circ \coproj_B^{\dagger}
\end{align}
Now if any other pure diagonal endomorphism $g$ has $f^{\dagger} \circ f = g^{\dagger} \circ g \circ U$ for some phase $U$, defining $g_A = g \circ \coproj_A$ and $g_B = g \circ \coproj_B$ we have that $f_A^{\dagger} \circ f_A = g_A^{\dagger} \circ g_A$, and the similar equation holds for $B$. Then by the CP axiom 
\begin{align*}
\discard{} \circ f_A = \discard{} \circ g_A 
& & 
\discard{} \circ f_B = \discard{} \circ g_B
\end{align*}
Since $c_B = {c_A}^{\bot}$, by causal complementation and~\eqref{eq:poscancel2a}, \eqref{eq:poscancel2b} we have $\discard{} \circ f = \discard{} \circ g$. Finally $f^{\dagger} \circ f = g^{\dagger} \circ g$ by the CP axiom again.
\end{proof}


\section{The Extended Pure Morphisms} \label{sec:PureProcessProperties}

We will be able to learn more about any theory satisfying our principles by studying its subcategory of pure morphisms further. In fact it will be more fruitful to consider the extension of this subcategory via the $\GP$ construction, just as $\FHilb$ is typically considered in place of $\FHilb_\sim$. More precisely, this section concerns categories of the following form.

\begin{definition} \label{def:quantumcat}
A \emph{quantum category} is a dagger compact category $\catA$ with finite dagger biproducts and dagger kernels which is state-inhabited and satisfies:
\begin{itemize}
\item 
\emph{dagger normalisation}: every non-zero state $\psi$ has $\psi = \sigma \circ r$ for some isometric state $\sigma$ and scalar $r$;
\item 
\deff{homogeneity}: for all $f, g \colon A \to B$ we have
\[
\scalebox{0.85}{\input{figures/homog.tikz}}

\]
for some unitary $U$ on $B$.
\end{itemize}
\end{definition}

\begin{proposition} \label{prop:QuantWOGToQuantBiprod}
Let $\catC$ be a dagger theory satisfying the operational principles. Then $\catA=\plusI{\catC_\pure}$ is a quantum category. 
\end{proposition}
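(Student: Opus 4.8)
The plan is to verify each defining property of a quantum category for $\catA = \plusI{\catC_\pure}$ in turn, pulling the structure back along the functor $[-] \colon \catA \to \catC_\pure$ and using Theorem~\ref{thm:onPHConstr} together with the consequences of the operational principles established in Sections~\ref{sec:consequences} and~\ref{sec:derivingsups}. First I would record that Corollary~\ref{cor:daggerphasedbiproducts} gives that $\catC_\pure$ has the superposition properties, so Theorem~\ref{thm:onPHConstr} applies: $\catA$ is a dagger compact category with finite dagger biproducts and a choice of global phases $\quotP$ with $\catC_\pure \simeq \catA_\sim$, and moreover positive morphisms in $\catA$ satisfy $p \sim q \implies p = q$. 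This already hands us the dagger compactness and the finite dagger biproducts required by Definition~\ref{def:quantumcat}, leaving dagger kernels, state-inhabitedness, dagger normalisation, and homogeneity.

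For dagger kernels I would argue that since $\catC_\pure$ has dagger kernels (Proposition~\ref{prop:OpTheoryProperties}~\eqref{enum:dag-ker-pure} shows every kernel of $\catC$ is a kernel in $\catC_\pure$, and these suffice by the constructions in Section~\ref{sec:derivingsups}), these lift along $[-]$ to $\catA$. Concretely, for $f \colon \obb{A} \to \obb{B}$ in $\catA$ one takes $\ker(f)$ to be a chosen representative of the kernel of $[f]$ in $\catC_\pure$; the universal property in $\catA$ follows because $[-]$ is full and reflects the relevant equations up to a global phase, which an isometric kernel absorbs. State-inhabitedness in $\catA$ follows from state-inhabitedness of $\catC_\pure$ (every non-zero object has a causal pure state by \strongpurification) together with fullness of $[-]$. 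Dagger normalisation is the translation of the normalisation property: by Lemma~\ref{lem:exclusion-equiv-new} normalisation holds in $\catC$, so every non-zero pure state is $\sigma \circ r$ with $\sigma$ a causal pure state — hence a kernel, i.e.\ isometric — and $r$ a scalar; lifting this factorisation through $[-]$ and using that positive morphisms (in particular isometries, up to the phase action) are rigid under $\sim$ gives the required form $\psi = \sigma \circ r$ in $\catA$ with $\sigma$ isometric.

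The homogeneity property is where the real content lies, and I expect it to be the main obstacle. The statement asks that for all $f, g \colon A \to B$ in $\catA$ one has $f^\dagger \circ f = g^\dagger \circ g \implies f = U \circ g$ for some unitary $U$ on $B$. I would deduce this from essential uniqueness of purifications in $\catC$: the equation $f^\dagger \circ f = g^\dagger \circ g$ says, after passing to $\catC_\pure$ via $[-]$ and bending a wire using compactness, that $[f]$ and $[g]$ are two pure dilations of the same morphism (their `doubles' agree), whence essential uniqueness~\eqref{eq:essential-uniqueness} supplies a pure unitary relating them in $\catC_\pure$. The delicate point is to promote this statement from $\catC_\pure \simeq \catA_\sim$ back up to an honest equation in $\catA$ rather than one merely holding up to a global phase: here I would use the rigidity clause of Theorem~\ref{thm:onPHConstr}, that $p \sim q \implies p = q$ for positive $p, q$, applied to $f^\dagger \circ f$ and the transported $g^\dagger \circ g$, to pin down the unitary $U$ on the nose and absorb any residual phase. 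Checking that the witnessing $U$ is genuinely unitary in $\catA$ (and not merely an isometry) uses the CP axiom exactly as in the proofs of Lemmas~\ref{lem:strong_homog} and~\ref{prop:PhBiprodII}, where phases were shown to be unitary via causal complementation and essential uniqueness.
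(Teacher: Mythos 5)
Your proposal is correct and follows essentially the same route as the paper's own proof: invoke Corollary~\ref{cor:daggerphasedbiproducts} so that Theorem~\ref{thm:onPHConstr} applies, observe that the positive-rigidity clause ($p \sim q \implies p = q$) makes $[-]$ reflect isometries and unitaries, and then transfer dagger kernels, state-inhabitedness, dagger normalisation and homogeneity from $\catC_\pure$, where they hold by strong purification. Two minor points of hygiene: the step from $f^\dagger \circ f = g^\dagger \circ g$ to ``$[f]$ and $[g]$ dilate the same morphism'' is an application of the CP axiom~\eqref{eq:CP axiom} rather than of compactness alone, and unitarity of the lifted $U$ already follows from positive rigidity applied to $U^\dagger \circ U$ and $U \circ U^\dagger$, so the heavier causal-complementation arguments of Lemma~\ref{lem:strong_homog} and Proposition~\ref{prop:PhBiprodII} are not needed there.
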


\begin{proof}
We've seen that $\catC_\pure$ has the superposition properties and kernels. 
Now by Theorem~\ref{thm:onPHConstr} $\catA$ is dagger compact with finite dagger biproducts, and we may identify $\catC_\pure$ with the category $\catA_\sim$ of equivalence classes $[f]$ of its morphisms under $f \sim g$ whenever $f = u \cdot g$ for some unitary `global phase' scalar $u \in \mathbb{P}$.

By the last line of Theorem~\ref{thm:onPHConstr} we have $[f^\dagger \circ f] = \id{} \implies f^\dagger \circ f = \id{}$ for all morphisms $f \in \catA$. In particular a morphism $f$ in $\catA$ is an isometry or unitary iff $[f]$ is in $\catC_\pure$. This lets one deduce dagger normalisation, homogeneity and state-habitation noting that they hold in $\catC_\pure$ by strong purification in $\catC$.

Finally, noting that $[f] = 0 \implies f = 0$ it's easy to see that if $[f] = \ker([g])$ in $\catC_\pure$ then $f = \ker(g)$ in $\catA$, giving the latter category kernels also.
\end{proof}

\begin{examples}
$\FHilb$ is a quantum category, as follows from the previous result; alternatively all there is left to note is homogeneity, which follows from the polar decomposition of a complex matrix.

In contrast homogeneity fails in $\Hilb$; for example on $l^2(\mathbb{N})$ the shift operator $(a_0, a_1, \dots) \mapsto (0, a_0, \dots)$ is an isometry but not unitary.
\end{examples}

Quantum categories have a surprisingly rich structure, generalising that of $\FHilb$, which we now explore. Recall that since they have biproducts they come with an addition operation $f + g$ on morphisms, generalising superpositions (and not to be confused with coarse-graining in a dagger theory). In fact they also come with a notion of subtraction.

\begin{proposition} \label{prop:quant-cat-properties}
In any quantum category $\catA$, the following hold.
\begin{enumerate} 
\item \label{enum:purepropnegatives}
Every morphism $f$ has an additive inverse $-f$;
\item \label{enum:daggerequalisers}
Every pair of morphisms $f, g$ have a \emph{dagger equaliser}~\cite{vicary2011categorical};
\item \label{enum:dagmonodagkernel}
Every isometry is a kernel; 
\item \label{enum:decomp-property}
For every kernel $k \colon K \to A$ the morphism $[k, k^{\bot}] \colon K \biprod K^{\bot} \to A$ is unitary;
\item \label{enum:wellpted}
\emph{Well-pointedness}: ($f \circ \psi = g \circ \psi$ $\forall$ states $\psi$) $\implies$ $f =g$.
\end{enumerate}
\end{proposition}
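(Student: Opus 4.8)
The plan is to prove the five claims in the order \ref{enum:dagmonodagkernel}, \ref{enum:decomp-property}, \ref{enum:purepropnegatives}, then \ref{enum:daggerequalisers} and \ref{enum:wellpted}, since the later parts rest on the earlier ones, and I expect the additive inverses of \ref{enum:purepropnegatives} to be the crux. I would first isolate a lemma that needs no negatives: in any dagger kernel category an isometry $g$ with $\coker(g)=0$ is unitary. Indeed $p:=g\circ g^{\dagger}$ is a self-adjoint idempotent with $\ker(p)=\ker(g^{\dagger})$, and $\coker(g)=0$ forces $\ker(g^{\dagger})=0$; since for a projection $\img(p)=\ker(p)^{\bot}$, this gives $\img(p)=\id{A}$ and hence $p=\id{}$. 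Claim \ref{enum:dagmonodagkernel} is then immediate from the factorisation $f=\img(f)\circ g$ with $\coker(g)=0$ recalled in Section~\ref{sec:consequences}: when $f$ is an isometry, $\img(f)$ is a kernel hence an isometry, so $f^{\dagger}\circ f=g^{\dagger}\circ g=\id{}$ makes $g$ an isometry, thus unitary by the lemma, and $f=\img(f)\circ g$ is a kernel. For \ref{enum:decomp-property}, a block computation shows $[k,k^{\bot}]$ is an isometry; and if $h\circ[k,k^{\bot}]=0$ then $h\circ k=0$ makes $h$ factor through $\coker(k)=k^{\bot\dagger}$, so $h\circ k^{\bot}=0$ forces $h=0$. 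Hence $\coker([k,k^{\bot}])=0$ and the lemma again gives unitarity.

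The heart is \ref{enum:purepropnegatives}. It suffices to produce an additive inverse $-\id{I}$ of the scalar $\id{I}$, for then $-f:=(-\id{I})\cdot f$ inverts any $f$ by distributivity and $0\cdot f=0$. I would work in $B=I\biprod I$ with coprojections $\coproj_0,\coproj_1$, the effect $e=\coproj_0^{\dagger}+\coproj_1^{\dagger}$, and the swap $X=\coproj_0\circ\coproj_1^{\dagger}+\coproj_1\circ\coproj_0^{\dagger}$, which is unitary. Expanding gives $e^{\dagger}\circ e=\id{B}+X$, so writing $k=\ker(e)$ we obtain $(\id{B}+X)\circ k=e^{\dagger}\circ e\circ k=0$, i.e.\ $X\circ k$ is an additive inverse of $k$; composing with $k^{\dagger}$ shows $k^{\dagger}\circ X\circ k$ is an additive inverse of $\id{K}$. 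Choosing an isometric state $\phi\colon I\to K$ by state-inhabitation, the scalar $\phi^{\dagger}\circ k^{\dagger}\circ X\circ k\circ\phi$ is then an additive inverse of $\phi^{\dagger}\circ\phi=\id{I}$, as required.

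The one delicate point in \ref{enum:purepropnegatives} — and the main obstacle — is that $K=\Ker(e)$ must be nonzero so that such a $\phi$ exists. Here I would use that subobjects of $I$ are trivial: a nonzero kernel $m\colon M\to I$ admits an isometric state $\nu$, and $m\circ\nu$ is an isometric, hence (by commutativity of scalars) unitary, scalar, forcing $m$ to be unitary. As $e^{\dagger}$ is a nonzero state this yields $\Img(e^{\dagger})\cong I$, and the standard identity $\ker(e)^{\bot}=\img(e^{\dagger})$ together with \ref{enum:decomp-property} makes $[k,k^{\bot}]\colon K\biprod\Img(e^{\dagger})\to B$ unitary with $k^{\bot}=\img(e^{\dagger})\cong I$. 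Were $K=0$ this would force $B\cong I$, making $\coproj_0,\coproj_1$ two orthogonal unitary scalars, which is impossible unless $\id{I}=0$ — in which case $-\id{I}=\id{I}$ trivially.

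With negatives in hand, \ref{enum:daggerequalisers} and \ref{enum:wellpted} are short. For \ref{enum:daggerequalisers}, the dagger equaliser of $f,g\colon A\to B$ is simply $\ker(f-g)$, whose isometry and universal property are exactly those required. For \ref{enum:wellpted}, given $f\circ\psi=g\circ\psi$ for all states $\psi$, set $h=f-g$; every state is killed by $h$ and so factors through $\ker(h)$. If $\ker(h)\neq\id{A}$ then $\ker(h)^{\bot}$ is nonzero, and an isometric state $\eta$ of its domain yields a state $\ker(h)^{\bot}\circ\eta$ which both factors through $\ker(h)$ and is orthogonal to it, hence vanishes — contradicting that it is an isometry. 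Thus $h=0$ and $f=g$.
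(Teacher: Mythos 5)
Your plan hinges entirely on the preliminary lemma (``in any dagger kernel category an isometry $g$ with $\coker(g)=0$ is unitary''), and that lemma is false; its proof breaks at the very last step. From $\img(p)=\ker(p)^{\bot}=\id{}$ you conclude ``hence $p=\id{}$'', but a self-adjoint idempotent with zero kernel and full image need not be the identity: the inference you need is $p\circ(p-\id{})=0$ together with $\ker(p)=0$, which presupposes the additive inverses you are trying to construct. Concretely, in $\Rel$ (a dagger kernel category by Example~\ref{ex:dag-ker-in-ClassProb}) take $g=\{(\ast,0),(\ast,1)\}\colon\{\ast\}\to\{0,1\}$. Then $g^{\dagger}\circ g=\id{}$ and $\coker(g)=\ker(g^{\dagger})^{\dagger}=0$, yet $p=g\circ g^{\dagger}$ is the full relation on $\{0,1\}$: a self-adjoint idempotent with $\ker(p)=0$ and $\img(p)=\id{}$ but $p\neq\id{}$. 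So $g$ is an isometry with zero cokernel that is not unitary; and since a kernel with zero cokernel \emph{is} unitary (in the orthomodular lattice, $k=k^{\bot\bot}=0^{\bot}=\id{}$), $g$ is also an isometry that is not a kernel, so item~\eqref{enum:dagmonodagkernel} itself fails in $\Rel$.

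This is a structural problem with your ordering, not a local slip. $\Rel$ is dagger compact, state-inhabited, and has finite dagger biproducts, dagger kernels and dagger normalisation --- every axiom of Definition~\ref{def:quantumcat} except homogeneity --- and there items~\eqref{enum:purepropnegatives} and~\eqref{enum:dagmonodagkernel} fail outright (union is idempotent, so no non-zero morphism has an additive inverse). Your proofs of the lemma and of items~\eqref{enum:dagmonodagkernel}, \eqref{enum:decomp-property}, \eqref{enum:purepropnegatives} never invoke homogeneity, so they cannot all be sound. Indeed your ``delicate point'' is exactly where $\Rel$ escapes: there $\Ker(\coproj_0^{\dagger}+\coproj_1^{\dagger})=0$, and your argument that this kernel is non-zero does not go through --- $\Img(e^{\dagger})$ is a subobject of $I\biprod I$, not of $I$, so triviality of subobjects of $I$ says nothing about it; to get $\Img(e^{\dagger})\cong I$ you would need dagger normalisation \emph{plus} the fact that isometric states are kernels, i.e.\ item~\eqref{enum:dagmonodagkernel}, which is circular in your ordering. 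The paper sidesteps all of this by proving item~\eqref{enum:purepropnegatives} \emph{first}, using only biproducts, dagger normalisation and homogeneity: normalise $\Delta=\langle\id{I},\id{I}\rangle$ as $\psi\circ s$, use homogeneity to find a unitary $U$ with $U\circ\coproj_1=\psi$, and compute that $\langle a,b\rangle:=U\circ\coproj_2$ satisfies $a^{\dagger}a+b^{\dagger}b=\id{I}$ and $a+b=0$, whence $t=a^{\dagger}b+b^{\dagger}a$ has $\id{I}+t=(a+b)^{\dagger}(a+b)=0$. Only then do ``$\ker(m)=0$ iff $m$ monic'' and its dual become available, which is precisely what your lemma tried to shortcut. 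For what it is worth: your computation $e^{\dagger}\circ e=\id{}+X$ giving an inverse of $\id{K}$ is correct and a genuinely different construction, and your items~\eqref{enum:daggerequalisers} and~\eqref{enum:wellpted} are fine once negatives exist; but the route to negatives must pass through homogeneity, as in the paper.
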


\begin{proof}
\ref{enum:purepropnegatives}.
It suffices to find a scalar $t$ with $t + \id{I} = 0$, since then for all $f$ we have $f + (t \cdot f) = (\id{I} + t) \cdot f = 0$. 
Employing standard categorical notation, we write $\langle a_1, a_2 \rangle$ for the unique state of $I \biprod I$ with $\pproj_i \circ \langle a_1, a_2 \rangle = a_i $, where $\pproj_i = \coproj_i^\dagger$, for $i=1,2$.
 Now let
\[
\begin{tikzcd}[column sep = huge]
I \rar{\Delta = \langle \id{I}, \id{I} \rangle} & I \biprod I
\end{tikzcd}
\]
  have $\Delta = \psi \circ s$ for some scalar $s$ and isometric state $\psi$. By homogeneity there is a unitary $U$ with $U \circ \pcoproj_1 = \psi$. Then let $\langle a, b \rangle = U \circ \pcoproj_2$. Since $U \circ \pcoproj_2$ is an isometry we have $a^\dagger \circ a + b^\dagger \circ b = \id{I}$ and also 
\begin{align*}
a + b 
&=
\Delta^{\dagger} \circ U \circ \pcoproj_2
\\ 
&= 
(s^{\dagger} \circ \pproj_1 \circ U^{\dagger}) \circ (U \circ \pcoproj_2)
\\ 
&= 
s^{\dagger} \circ (
\pproj_1^\dagger \circ \pcoproj_2 
)
= 
0
\end{align*}
Then $t = a^\dagger \circ b + b^\dagger \circ a$ is the required scalar since
\begin{align*}
\id{I} + t
&=
a^\dagger \circ a + b^\dagger \circ b + a^\dagger \circ b + b^\dagger \circ a 
\\
&=
(a + b)^{\dagger} \circ (a + b)
=
0
\end{align*}

\ref{enum:daggerequalisers}. This follows immediately with $f, g$ having dagger equaliser $\ker(f-g)$.

\ref{enum:dagmonodagkernel}. Thanks to~\eqref{enum:purepropnegatives} a morphism $m$ is monic iff $\ker(m) = 0$. But then any isometry $i$ has $i = \img(i) \circ e$ with $\coker(e) = 0$, and so dually $e$ is an epimorphism. But since $i$ and $\img(i)$ are isometries, so is $e$, and so it is unitary and $i$ is a kernel. 

\ref{enum:decomp-property}. 
$i = [k,k^{\bot}]$ is an isometry since $k$ and $k^{\bot}$ are orthogonal isometries. But if $f \circ i = 0$ then $f \circ k = 0$ and $f \circ k^{\bot} = 0$, so that $\img(f) = 0$ giving $f = 0$. Hence as in the previous part $i$ is epic, and so unitary.

\ref{enum:wellpted}.
Suppose that $f \circ \psi = g \circ \psi$ for all states $\psi$. Then $h = (f-g)$ has $h \circ \psi = 0$  and so $\coim(h) \circ \psi = 0$ for all states $\psi$. But if $\CoIm(h)$ is non-zero then it possesses a non-zero state $\phi$, and then $\coim(h) \circ \coim(h)^{\dagger} \circ \phi = \phi \neq 0$, a contradiction. Hence $\coim(h) = 0$ so that $h = 0$ and $f = g$.
\end{proof}

This provides the scalars in a quantum category $\catA$ with extra properties. 

\begin{definition} 
A \deff{phased ring} is a commutative involutive ring $(S, \dagger)$ which is an integral domain (meaning $a \cdot b = 0 \implies a = 0 \text{ or } b = 0$) 
such that $\forall a, b$
\[
  a^{\dagger} \cdot a + b^{\dagger} \cdot b = c^{\dagger} \cdot c
\] 
for some $c \in S$, with any such $c$ having $a = c \cdot d$ and $b = c \cdot e$ for some $d, e \in S$. By a \deff{phased field} we mean a phased ring which is also a field.
\end{definition}

In particular the positive elements $S^{\pos}$ of a phased ring are closed under addition, forming a sub-semi-ring of $S$. Examples of phased rings include $\mathbb{C}$, as well as $\mathbb{R}$ under the trivial involution.  

\begin{proposition} \label{prop:scalars_are_phased_ring}
Let $\catA$ be a quantum category. Then $\catA(I,I)$ is a phased ring.
\end{proposition}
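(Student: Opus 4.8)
The plan is to verify each clause of the definition of a phased ring in turn, the key device throughout being to encode scalar identities as statements about states of the biproduct $I \biprod I$, so that dagger normalisation and homogeneity can be applied.

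First I would establish the ring structure. The scalars of any monoidal category form a commutative monoid under composition, and the biproduct addition $+$ makes $\catA(I,I)$ a commutative semi-ring; the additive inverses supplied by Proposition~\ref{prop:quant-cat-properties}\eqref{enum:purepropnegatives} promote this to a commutative ring. The dagger is an involution: it is its own inverse, it preserves $+$ since $(s+t)^\dagger = s^\dagger + t^\dagger$, and $(s \circ t)^\dagger = t^\dagger \circ s^\dagger = s^\dagger \circ t^\dagger$ by commutativity, so it is a ring automorphism. For the integral domain property I would use that composition and tensor of scalars coincide, $a \cdot b = a \otimes b$, together with state-inhabitation and Lemma~\ref{lem:dag-ker-in-compact}\eqref{enum:zero-cancell}, which gives $a \otimes b = 0 \implies a = 0 \text{ or } b = 0$.

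For the sum-of-squares condition I would introduce the state $\langle a, b\rangle \colon I \to I \biprod I$ (notation as in the proof of Proposition~\ref{prop:quant-cat-properties}), for which orthonormality of the coprojections yields $\langle a, b\rangle^\dagger \circ \langle a, b\rangle = a^\dagger \cdot a + b^\dagger \cdot b$. Dagger normalisation writes $\langle a, b\rangle = \sigma \circ c$ for an isometric state $\sigma$ and scalar $c$; since $\sigma^\dagger \circ \sigma = \id{I}$ this gives precisely $a^\dagger \cdot a + b^\dagger \cdot b = c^\dagger \cdot c$, as required.

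The key step, and the one I expect to be the main obstacle, is the final factoring claim: that any $c$ with $c^\dagger \cdot c = a^\dagger \cdot a + b^\dagger \cdot b$ admits $a = c\cdot d$ and $b = c \cdot e$. Here I would compare the two states $\langle a, b\rangle$ and $\chi := \coproj_1 \circ c$ of $I \biprod I$, which satisfy $\langle a,b\rangle^\dagger \circ \langle a, b\rangle = c^\dagger \cdot c = \chi^\dagger \circ \chi$. Homogeneity then supplies a unitary $U$ on $I \biprod I$ with $\langle a, b\rangle = U \circ \chi = U \circ \coproj_1 \circ c$. Applying $\pproj_1$ and $\pproj_2$ (where $\pproj_i = \coproj_i^\dagger$) and recalling $\pproj_1 \circ \langle a, b\rangle = a$ and $\pproj_2 \circ \langle a, b\rangle = b$ yields $a = (\pproj_1 \circ U \circ \coproj_1) \circ c$ and $b = (\pproj_2 \circ U \circ \coproj_1) \circ c$, so $d = \pproj_1 \circ U \circ \coproj_1$ and $e = \pproj_2 \circ U \circ \coproj_1$ are the sought scalars, writing $a = c\cdot d$ and $b = c\cdot e$ by commutativity. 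This argument is uniform in $c$: when $c = 0$ we have $\chi = 0$, homogeneity forces $\langle a, b\rangle = 0$, and the factorisation is trivial. The only subtlety is to choose the comparison state $\chi$ so that homogeneity converts the norm identity into exactly the desired factorisation.
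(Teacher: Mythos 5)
Your proposal is correct, and it rests on exactly the same three ingredients as the paper's proof: the ring/integral-domain structure from biproducts, negatives and zero-cancellation; the pairing state $\langle a, b\rangle \colon I \to I \biprod I$ together with dagger normalisation for the existence of $c$; and homogeneity for the divisibility clause. The one place you genuinely diverge is the ``any such $c$'' step. The paper first shows that the \emph{particular} $c$ produced by normalisation divides $a$ and $b$ (via $d = \pproj_1 \circ \phi$), and then handles an arbitrary $e$ with $e^\dagger \cdot e = c^\dagger \cdot c$ by applying homogeneity to the two \emph{scalars} $e, c \colon I \to I$, obtaining $e = c \cdot u$ with $u$ unitary and transferring divisibility through the invertible $u$. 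You instead apply homogeneity one level up, to the two \emph{states} $\langle a, b\rangle$ and $\coproj_1 \circ c$ of $I \biprod I$, which have equal norms since $\coproj_1$ is an isometry; extracting components of the resulting unitary $U$ gives the divisors $d = \pproj_1 \circ U \circ \coproj_1$ and $e = \pproj_2 \circ U \circ \coproj_1$ directly, uniformly in $c$. Your variant is slightly cleaner in that it treats all valid $c$ at once and never needs to single out the normalised one for the divisibility argument (normalisation is used only for existence); the paper's variant keeps homogeneity at the scalar level, which is marginally more elementary. Both are complete proofs, and your handling of the degenerate case $c = 0$ (homogeneity forces $\langle a,b\rangle = 0$, so $a = b = 0$) is sound.
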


\begin{proof}
$S = \catA(I,I)$ forms a commutative semi-ring since $\catA$ has dagger biproducts, and $S$ is a ring by Proposition~\ref{prop:quant-cat-properties}~\eqref{enum:purepropnegatives} and an integral domain by Lemma~\ref{lem:dag-ker-in-compact}~\eqref{enum:zero-cancell}. 

Now given $a, b \in S$ let $\psi = \langle a, b \rangle \colon I \to I \biprod I$. Using normalisation let $\psi = \phi \circ c$ where $\phi$ is an isometry. Then 
 \[
 c^\dagger \cdot c = \psi^{\dagger} \circ \psi = a^\dagger \cdot a + b^\dagger \cdot b
 \]
 Furthermore $d = \pproj_1 \circ \phi \in S$ has $a = \pproj_1 \circ \psi = c \cdot d$, and similarly $c$ divides $b$. Moreover any other $e \in S$ with $e^{\dagger} \cdot e = c^{\dagger} \cdot c$ has $e = c \cdot u$ for a unitary $u$ by homogeneity, and so also divides $a$ and $b$.
\end{proof}

\noindent
This provides our main examples of quantum categories, as we prove in Appendix~\ref{appendix:Proofs}. 

\begin{example} \label{example:MatofPhasedField}
Let $S$ be a phased field. Then $\Mat_S$ is a quantum category. In particular so are $\Mat_{\mathbb{C}}$ and $\Mat_{\mathbb{R}}$. 
\end{example}

It is useful to know a converse result, telling us when a quantum category $\catA$ arises as such a matrix category. In fact, for this we only require the following mild condition. Let us call a semi-ring $R$ \deff{bounded} when no element $r$ has that for all $n \in \mathbb{N}$ there is some $r_n \in R$ with $r = n + r_n$. For example the positive reals $\Rplus$ and rationals $\mathbb{Q^+}$ are certainly bounded. Boundedness is similar to the \emph{Archimedean} property for totally ordered groups~\cite{EncyMath}. 

\begin{lemma} \label{lem:Boundedequivalence}
Let $\catA$ be a quantum category and $S$ its ring of scalars. If $S^{\pos}$ is bounded then $\catA \simeq \Mat_{S}$.
\end{lemma}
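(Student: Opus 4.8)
The plan is to show that the full dagger symmetric monoidal embedding $F \colon \Mat_S \hookrightarrow \catA$ described in Section~\ref{sec:setup}, which sends each object $n$ to $n \cdot I = I \biprod \dots \biprod I$, is in fact an equivalence. Since $S = \catA(I,I)$ is a commutative involutive (in fact phased) ring by Proposition~\ref{prop:scalars_are_phased_ring}, the category $\Mat_S$ and the functor $F$ are defined, and $F$ is full and faithful by construction. By the definition of equivalence of dagger theories it therefore remains only to prove that $F$ is essentially surjective up to unitary isomorphism, i.e.~that every object $A$ of $\catA$ is unitarily isomorphic to $n \cdot I = F(n)$ for some $n \in \mathbb{N}$.

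To this end I would split each object into copies of $I$ by recursion. If $A$ is non-zero then state-inhabitation supplies an isometric state $\psi \colon I \to A$, which is a kernel by Proposition~\ref{prop:quant-cat-properties}~\eqref{enum:dagmonodagkernel}. Its complement $\psi^{\bot} \colon A_1 \to A$ is then an orthogonal kernel, and by Proposition~\ref{prop:quant-cat-properties}~\eqref{enum:decomp-property} the induced map $[\psi, \psi^{\bot}] \colon I \biprod A_1 \to A$ is unitary. Repeating this on $A_1$ and composing the resulting unitaries with the (unitary) associativity coherence isomorphisms, after $n$ steps one obtains a unitary isomorphism $(n \cdot I) \biprod A_n \cong A$, where each $A_k$ is the complement object produced at stage $k$.

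It then remains to show that this process terminates, i.e.~that $A_n \simeq 0$ for some $n$, and this is exactly where boundedness enters; it is the crux of the argument. I would track the dimension scalar $d_B := \Tr(\id{B}) \in S$. Using compactness one checks that $d_B$ is a scalar of the form $g^{\dagger} \circ g$ (namely the one built from the cup of $B$), so that $d_B \in S^{\pos}$ for every object $B$; and since $\id{B \biprod C}$ splits as the sum of the two biproduct idempotents, cyclicity of the trace gives additivity $d_{B \biprod C} = d_B + d_C$, with $d_I = 1$ and hence $d_{n \cdot I} = n$. Applying this to the decomposition above yields $d_A = n + d_{A_n}$ for every $n$ at which the process has not yet halted. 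If it never halted, then for all $n$ we would have an element $d_{A_n} \in S^{\pos}$ with $d_A = n + d_{A_n}$, making $d_A \in S^{\pos}$ an unbounded element and contradicting the boundedness of $S^{\pos}$. Hence $A_n \simeq 0$ for some $n$, so that $A \cong n \cdot I = F(n)$ unitarily, and $F$ is an equivalence of dagger compact categories. The main obstacle is precisely this termination step: isolating a numerical invariant (the categorical dimension) that is at once \emph{positive}, \emph{additive} over biproducts, and whose values on the successive complements exhibit unboundedness, so that the boundedness hypothesis can be brought to bear.
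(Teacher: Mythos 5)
Your proposal is correct and follows essentially the same route as the paper's proof: the same embedding $n \mapsto n \cdot I$, the same recursive splitting of an object into $I \biprod A_1$ via an isometric state (a kernel by Proposition~\ref{prop:quant-cat-properties}) and its complement, and the same termination argument via the categorical dimension, where $d_A = n + d_{A_n}$ for all $n$ contradicts boundedness of $S^{\pos}$. The paper presents the dimension bookkeeping diagrammatically with traced circles, while you spell out positivity and biproduct-additivity of the trace algebraically, but the content is identical.
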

\begin{proof}
Consider the full embedding $\MatS \hookrightarrow \catA$ given by $n \mapsto n \cdot I$. We now show that any object $A$ has a unitary $A \simeq n \cdot I$ for some $n \in \mathbb{N}$. If $A$ is a zero object we are done, otherwise there is an isometry $\psi \colon I \to A$, which is a kernel by Proposition~\ref{prop:quant-cat-properties}. Then by the same result, letting $B = \coker(\psi_1)$ the morphism $[\psi^{\bot}, \psi] \colon B \biprod I \simeq A$ is unitary. Setting $B_1 = B$ and proceeding similarly we obtain a sequence of objects $B_1, B_2, \dots$ with $A \simeq B_n \biprod n \cdot I$ for each $n$. Then if $B_n \simeq 0$ for some $n$ we are done. Otherwise for all $n \in \mathbb{N}$ we have
\[
\scalebox{0.85}{\input{figures/dimAbig.tikz}}

\!
=
\ 
\scalebox{0.85}{\input{figures/dimBnbig.tikz}}

= \  
\scalebox{0.85}{\input{figures/dimBbig.tikz}}

+\  
\sum^n_{i=1}
\scalebox{0.85}{\input{figures/dimSum.tikz}}

= \  
\scalebox{0.85}{\input{figures/dimBbig.tikz}}

+
\
n
\]
contradicting boundedness.
\end{proof}

Let us now see how these categories correspond precisely to theories satisfying our principles. We call a quantum category \deff{non-trivial} when it has $\id{I} \neq 0$.

\begin{theorem} \label{thm:mainToQuantumCat}
There is a one-to-one correspondence between non-trivial:
\begin{itemize}
\item 
quantum categories $\catA$;
\item 
dagger theories $\catC$ satisfying the operational principles; 
\end{itemize}
up to equivalence, via $\catA = \plusI{\catC_\pure}$ and $\catC = \CPM(\catA)$. 
\end{theorem}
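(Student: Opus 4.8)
The plan is to show that the two assignments $\catC \mapsto \plusI{\catC_\pure}$ and $\catA \mapsto \CPM(\catA)$ are well-defined on the stated classes and mutually inverse up to equivalence. That $\catC \mapsto \plusI{\catC_\pure}$ sends a non-trivial theory satisfying the operational principles to a non-trivial quantum category is exactly Proposition~\ref{prop:QuantWOGToQuantBiprod}, so the remaining tasks are to establish the reverse assignment, namely that $\CPM(\catA)$ satisfies the operational principles for a non-trivial quantum category $\catA$, and to verify both round trips. I expect the verification of the operational principles to be the main obstacle.

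For the round trip starting from a theory $\catC$, I would argue as follows. By \strongpurification{}, the subcategory $\catC_\pure$ is an environment structure, so $\CPM(\catC_\pure) \simeq \catC$ as dagger theories. By Corollary~\ref{cor:daggerphasedbiproducts} the category $\catC_\pure$ has the superposition properties, so Lemma~\ref{lem:CPMsCoincide} gives $\CPM(\plusI{\catC_\pure}) \simeq \CPM(\catC_\pure)$. Composing these yields the desired equivalence $\catC \simeq \CPM(\plusI{\catC_\pure})$.

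The bulk of the work is checking that $\catC := \CPM(\catA)$ satisfies the operational principles. Since $\catA$ is dagger compact, $\CPM(\catA)$ is a compact dagger theory with zero morphisms and environment structure $\Dbl{\catA}$, and non-triviality descends from $\catA$. The key step is to identify the pure morphisms: I would show that the pure morphisms in the sense of Definition~\ref{def:pure} are precisely the doubled ones, i.e.\ $\CPM(\catA)_\pure = \Dbl{\catA}$, using that every morphism dilates into $\Dbl{\catA}$ and that doubled morphisms absorb their dilations via the CP axiom. Strong purification then reduces to essential uniqueness of purifications, which I would derive from homogeneity of $\catA$ (Definition~\ref{def:quantumcat}), together with the existence of causal pure states from state-inhabitation and dagger normalisation. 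Dagger kernels in $\CPM(\catA)$ are obtained from those in $\catA$, and $\CPM(\catA)$ carries the coarse-graining of Examples~\ref{example:CPM-mixing} coming from the biproducts of $\catA$, which directly yields conditioning (cf.\ Proposition~\ref{prop:coarse-graining}); causal complementation and pure exclusion I would then check against the explicit descriptions of kernels and pure states just obtained. The hardest pieces here are essential uniqueness and pure exclusion, as these encode the genuinely quantum content.

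Finally, for the round trip starting from a quantum category $\catA$, the identification $\CPM(\catA)_\pure = \Dbl{\catA}$ obtained above gives $\plusI{\CPM(\catA)_\pure} = \plusI{\Dbl{\catA}}$, and since two morphisms of $\catA$ double to the same map precisely when they agree up to a global phase, we have $\Dbl{\catA} \simeq \catA_\sim$. It then remains to show $\plusI{\catA_\sim} \simeq \catA$, i.e.\ that the $\GP$ construction recovers a quantum category from its global-phase quotient; this is the converse direction of Theorem~\ref{thm:onPHConstr} and follows from the universal property of $\GP$ in~\cite{superpos}, using that positive morphisms in $\catA$ satisfy $p \sim q \implies p = q$, a consequence of homogeneity and Proposition~\ref{prop:quant-cat-properties}. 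Combining the two round trips yields the claimed one-to-one correspondence.
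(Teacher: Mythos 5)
Your proposal is correct and takes essentially the same route as the paper: the forward direction and its round trip via Proposition~\ref{prop:QuantWOGToQuantBiprod} and Lemma~\ref{lem:CPMsCoincide}, and the backward direction (showing $\CPM(\catA)$ satisfies the principles and $\catA \simeq \plusI{\CPM(\catA)_\pure}$ by identifying $\CPM(\catA)_\pure = \Dbl{\catA}$ and invoking the converse of Theorem~\ref{thm:onPHConstr}) is precisely what the paper defers to Proposition~\ref{prop:quant-to-theory} in Appendix~\ref{appendix:Proofs}. The only small inaccuracy is that non-triviality of $\plusI{\catC_\pure}$ is not literally part of Proposition~\ref{prop:QuantWOGToQuantBiprod}; it needs the one-line observation that otherwise $\id{I} = 0$ in $\catC$, making every object trivial.
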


\begin{proof}
For any such $\catC$, $\catA=\plusI{\catC_\pure}$ is a quantum category by Proposition~\ref{prop:QuantWOGToQuantBiprod}, and must be non-trivial since otherwise $\id{I}=0$ in $\catC$, making every object there trivial. By Lemma~\ref{lem:CPMsCoincide} the functor $[-]$ extends to an equivalence of dagger theories $\CPM(\catA) \simeq \catC$. 

The converse direction requires that for any non-trivial quantum category $\catA$, $\CPM(\catA)$ satisfies the operational principles and is equivalent to $\plusI{\CPM(\catA)_\pure}$. This fact is not needed for our main reconstruction, but is proven in Proposition \ref{prop:quant-to-theory} in Appendix \ref{appendix:Proofs}.
\end{proof}

This is a strong result, since for general $\catC$ with an environment structure there may be many $\catA$ with $\catC \simeq \CPM(\catA)$. 

\begin{example}
If $S$ is a phased ring which is also a field, $\Quant{S}$ satisfies our principles. In particular so do $\Quant{\mathbb{C}}$ and $\Quant{\mathbb{R}}$.
\end{example}

\begin{remark}
In this section we focused on the category $\plusI{\catC_\pure}$, but $\catC_\pure$ may also be axiomatised similarly; see Appendix~\ref{sec:pre-quantum}.
\end{remark}

\section{Reconstructions} \label{sec:reconstructions-listed}

We now return to the setting of dagger theories $\catC$ satisfying our principles, and have reached our main result.

\begin{theorem} \label{thm:main-reconstruction}
Let $\catC$ be a dagger theory satisfying the operational principles and $R = \catC(I,I)$. There is an embedding of dagger theories
\begin{equation} \label{eq:main-embed-in-recons} 
\Quant{S} \hookrightarrow \catC
\end{equation}
which preserves coarse-graining, for some phased ring $S$ with $R \simeq S^{\pos}$ as semi-rings. Moreover if $R$ is bounded this is an equivalence of theories $\catC \simeq \Quant{S}$.
\end{theorem}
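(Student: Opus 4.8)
The plan is to feed the structural results of Sections~\ref{sec:consequences}--\ref{sec:PureProcessProperties} into the general recipe of Corollary~\ref{cor:recipe}. First I would note that by strong purification the subcategory $\catC_\pure$ is an environment structure on $\catC$, and by Corollary~\ref{cor:daggerphasedbiproducts} it has the superposition properties. Corollary~\ref{cor:recipe} then directly produces an embedding of dagger theories $\Quant{S} \hookrightarrow \catC$, where $S = \plusI{\catC_\pure}(I,I)$ is the scalar object of $\catA := \plusI{\catC_\pure}$. Since $\catA$ is a quantum category (Proposition~\ref{prop:QuantWOGToQuantBiprod}), Proposition~\ref{prop:scalars_are_phased_ring} tells us $S$ is a phased ring, as required by the statement.

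Next I would pin down the scalars of $\catC$ itself. The recipe realises an equivalence $\catC \simeq \CPM(\catA)$, and a scalar $I \to I$ in $\CPM(\catA)$ is exactly a morphism $f^\dagger \circ f$ obtained by tracing out the ancilla of a state $f$ of $\catA$; decomposing the ancilla over biproducts shows these range precisely over the positive elements $S^\pos$, which form a sub-semi-ring of $S$ closed under addition by the phased-ring property. Tracking composition and coarse-graining through this correspondence gives $R = \catC(I,I) \simeq S^\pos$ as semi-rings, upgrading the monoid isomorphism of Corollary~\ref{cor:recipe} to one of semi-rings.

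For preservation of coarse-graining I would use that $\catC$ carries a \emph{unique} coarse-graining (Proposition~\ref{prop:coarse-graining}) while every $\CPM$ category carries the canonical coarse-graining of Examples~\ref{example:CPM-mixing}, defined from its dagger biproducts. Each functor in the composite $\CPM(\Mat_S) \hookrightarrow \CPM(\catA) \simeq \catC$ is induced from a dagger-biproduct-preserving functor on base categories and so respects this canonical operation; by uniqueness the transported operation is the coarse-graining of $\catC$, so the embedding preserves it. Finally, in the bounded case, boundedness of $R \simeq S^\pos$ lets Lemma~\ref{lem:Boundedequivalence} apply to $\catA$, showing the standard full embedding $\Mat_S \hookrightarrow \catA$, $n \mapsto n \cdot I$, is essentially surjective up to unitary and hence an equivalence $\catA \simeq \Mat_S$. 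Applying $\CPM$ turns the first inclusion of the recipe into an equivalence, so the whole composite $\Quant{S} = \CPM(\Mat_S) \hookrightarrow \CPM(\catA) \simeq \catC$ becomes the desired equivalence $\catC \simeq \Quant{S}$.

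I expect the main obstacle to lie not in any single hard step but in the scalar bookkeeping: Corollary~\ref{cor:recipe} only delivers a monoid isomorphism involving $\catC_\pure(I,I)$, whereas the theorem concerns the full scalar semi-ring $\catC(I,I)$ together with its additive (coarse-graining) structure. Care is therefore needed to verify that both the multiplicative and the additive structure of $R$ are transported correctly through the chain of $\CPM$ equivalences, and that the specific recipe embedding---rather than merely some abstract equivalence---is what becomes an equivalence in the bounded case.
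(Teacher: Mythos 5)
Your overall route is essentially the paper's own, just unpacked: the paper bundles the application of the recipe into Theorem~\ref{thm:mainToQuantumCat} (whose proof consists exactly of Proposition~\ref{prop:QuantWOGToQuantBiprod} plus Lemma~\ref{lem:CPMsCoincide}, i.e.\ the content of Corollary~\ref{cor:recipe}), identifies $S = \catA(I,I)$ as a phased ring via Proposition~\ref{prop:scalars_are_phased_ring}, handles preservation of coarse-graining by its uniqueness (Proposition~\ref{prop:coarse-graining}) together with the fact that $\Mat_S \hookrightarrow \catA$ preserves biproducts, and invokes Lemma~\ref{lem:Boundedequivalence} for the bounded case --- all steps you reproduce in the same order.

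There is, however, one step that fails as you state it: identifying the scalars of $\CPM(\catA)$ with $S^{\pos}$ by ``decomposing the ancilla over biproducts''. A scalar of $\CPM(\catA)$ is $f^{\dagger} \circ f$ for a state $f \colon I \to C$ of an \emph{arbitrary} object $C$ of $\catA$, and in a general quantum category there is no reason for $C$ to be unitarily isomorphic to $n \cdot I$; such a decomposition is precisely what boundedness buys (Lemma~\ref{lem:Boundedequivalence}), whereas the embedding half of the theorem must be established without it. The paper's argument instead uses dagger normalisation, which is part of the definition of a quantum category (Definition~\ref{def:quantumcat}): every state factors as $f = \phi \circ s$ with $\phi$ an isometry, whence $f^{\dagger} \circ f = s^{\dagger} \cdot s \in S^{\pos}$. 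With that one replacement your argument goes through; the rest of your scalar bookkeeping (matching the additive structure via the canonical coarse-graining of $\CPM$ categories from Examples~\ref{example:CPM-mixing}, and upgrading the monoid isomorphism of Corollary~\ref{cor:recipe} to a semi-ring isomorphism) is sound and matches the paper.
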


\begin{proof}
By Theorem~\ref{thm:mainToQuantumCat} there is an equivalence of dagger theories $\catC \simeq \CPM(\catA)$ where $\catA$ is a quantum category. Moreover, note that any such equivalence must preserve the coarse-graining operation, since it is unique by Proposition~\ref{prop:coarse-graining}. Hence it suffices to assume that $\catC$ is of this form, in which case .$S = \catA(I,I)$ is indeed a phased ring. By dagger normalisation in $\catA$ the scalars in $\CPM(\catA)$ are isomorphic to $S^\pos$, since for every state $\psi$ in $\catA$ we have 
\[
\scalebox{0.85}{\input{figures/Sposarg.tikz}}

\]
for some isometric state $\phi$ and scalar $s$ in $\catA$.

The embedding $\Mat_{S} \hookrightarrow \catA$ is an equivalence when $R$ is bounded thanks to Lemma~\ref{lem:Boundedequivalence}, and it induces the respective embedding or equivalence~\eqref{eq:main-embed-in-recons}. Since the former preserves biproducts, the latter preserves coarse-graining. 
\end{proof}

As a result, when $\catC$ satisfies the operational principles, its scalars $R = \catC(I,I)$ form the positive elements $S^{\pos}$ of a phased ring $S$, generalising the relationship between $\mathbb{R^+}$ and $\mathbb{C}$. This provides $R$ with nice properties; it has characteristic $0$ and that $a$ is divisible by $a + b$ for all $a, b$, hence coming with an embedding $\mathbb{Q}^+ \hookrightarrow R$. 

We may also freely extend $R$ to a ring $\dring{R}$, its \emph{difference ring}. Formally $\dring{R}$ consists of pairs $(a,b)$ of elements of $R$ after identifying $(a,b)$ with $(c,d)$ whenever $a + d = b + c$. Addition and multiplication are defined in the obvious way when interpreting $(a,b)$ as `$a-b$'. For example $\dring{\mathbb{R}^+} = \mathbb{R}$. 

Under a final extra condition we can show that $S$ resembles either the real or complex numbers. Say that a semi-ring $R$ has \emph{square roots} when every $a \in R$ has $a = b^2$ for some $b \in R$. For any ring $S$ we write $S[i]$ for the involutive ring with elements of the form $a + b \cdot i$ for $a, b \in S$, where $1 = -i^2 = i \cdot i^{\dagger}$, and we define $a^{\dagger} = a$ for all $a \in S$.

\begin{lemma} \label{lem:square roots}
Let $S$ be a phased ring for which $R = S^{\pos}$ has square roots. 
\begin{enumerate}
\item \label{enum:posAndUn}
Every non-zero $s \in S$ has $s = r \cdot u$ for a unique $r \in R$ and unitary $u \in S$.
\item \label{enum:TotOrdered}
$R$ is totally ordered under $a \leq b$ whenever $a + c = b$ for some $c \in R$.
\item \label{enum:equalsSa}
$\dring{R} \simeq S^{\selfadj} := \{s \in S \mid s^{\dagger} = s \}$\label{not:sadjoint}.
\item \label{enum:TwoOptions}
Either $S = S^{\selfadj}$ with trivial involution, or $S$ has square roots and $S = S^{\selfadj}[i]$.
\end{enumerate}
\end{lemma}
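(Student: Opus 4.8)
The plan is to first extract a few consequences of the phased-ring axioms that do not yet use square roots, and then derive the four statements in turn. Writing $R = S^{\pos}$, recall that each positive element is self-adjoint and that $R$ is closed under addition (this is the defining sum condition $a^{\dagger}a + b^{\dagger}b = c^{\dagger}c \in R$). The key first observation is that the cone is \emph{salient}: if $c$ and $-c$ both lie in $R$, say $c = x^{\dagger}x$ and $-c = y^{\dagger}y$, then $x^{\dagger}x + y^{\dagger}y = 0 = 0^{\dagger}0$, so $0$ is a witness $c$ in the divisibility clause and hence $0 \mid x$, $0 \mid y$, giving $x = y = 0$ and $c = 0$; thus $R \cap (-R) = \{0\}$. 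A second preliminary, applying the same clause to $a = b = 1$ (so $1 + 1 = c^{\dagger}c$ with $c \mid 1$), is that $2$ is invertible with $\tfrac12 \in R$, so $S$ has characteristic $0$. Finally, square roots give $S^{\selfadj} = R \cup (-R)$: for self-adjoint $t$ we have $t^2 = t^{\dagger}t \in R$, so $t^2 = r^2$ for some $r \in R$, and $(t-r)(t+r) = 0$ in the integral domain $S$ yields $t = r \in R$ or $-t = r \in R$.

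For \eqref{enum:posAndUn} I would take $s \neq 0$, put $p = s^{\dagger}s \in R$, and choose $r \in R$ with $r^2 = p$. Since $r^{\dagger}r = r^2 = s^{\dagger}s = s^{\dagger}s + 0^{\dagger}0$, the divisibility clause gives $s = r \cdot u$; comparing $s^{\dagger}s = r^2 u^{\dagger}u$ with $s^{\dagger}s = r^2$ and cancelling the nonzero $r^2$ forces $u^{\dagger}u = 1$, so $u$ is unitary. Uniqueness reduces to uniqueness of $r$, as $ru = ru'$ with $r \neq 0$ forces $u = u'$; and $r^2 = r'^2$ with $r, r' \in R$ gives $(r-r')(r+r') = 0$, so $r = r'$ or $r = -r'$, the latter excluded by salience. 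For \eqref{enum:TotOrdered}, reflexivity and transitivity are immediate from $0 \in R$ and closure under addition, antisymmetry is exactly salience, and totality is $S^{\selfadj} = R \cup (-R)$ applied to $a - b$.

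Part \eqref{enum:equalsSa} is then routine: the assignment $(a,b) \mapsto a - b$ is a well-defined ring homomorphism $\dring{R} \to S^{\selfadj}$ (well-definedness being precisely the identification $a + d = b + c$), it is injective since $a - b = 0$ forces $(a,b)$ to be the zero of $\dring{R}$, and it is surjective by $S^{\selfadj} = R \cup (-R)$, with $t \in R$ hit by $(t,0)$ and $-t \in R$ by $(0,-t)$.

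For \eqref{enum:TwoOptions} I split on whether the involution is trivial. If $s^{\dagger} = s$ for all $s$ then $S = S^{\selfadj}$. Otherwise pick $s$ with $a := s - s^{\dagger} \neq 0$; then $a^{\dagger} = -a$ and $a^{\dagger}a = -a^2 \in R$ is nonzero, so $-a^2 = r^2$ for some $r \in R \setminus \{0\}$. Applying divisibility to $a^{\dagger}a = r^{\dagger}r$ gives $a = rw$ with $w$ unitary, and substituting into $a^{\dagger} = -a$ and cancelling $r$ yields $w^{\dagger} = -w$, hence $w^2 = -w^{\dagger}w = -1$; this $w$ serves as $i$. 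Using $\tfrac12 \in S$, every $s$ splits as $s = \tfrac{s + s^{\dagger}}{2} + \tfrac{s - s^{\dagger}}{2}$ into self-adjoint and anti-self-adjoint parts, and since the anti-self-adjoint elements are exactly $\{y\,i : y \in S^{\selfadj}\}$ we get $S = S^{\selfadj}[i]$, uniqueness of the decomposition following from integrality and characteristic $0$. It remains to show $S$ has square roots: by \eqref{enum:posAndUn} it suffices to find a square root of each unitary $u = x + yi$ with $x, y \in S^{\selfadj}$ and $x^2 + y^2 = 1$, solving $p^2 = \tfrac{1+x}{2}$ and $q^2 = \tfrac{1-x}{2}$ with the sign of $q$ fixed by $2pq = y$, exactly as for complex numbers of modulus one. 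The main obstacle is this last step: it requires $\tfrac{1 \pm x}{2} \in R$, i.e.\ $-1 \le x \le 1$ in the order of \eqref{enum:TotOrdered}, which I would extract from $1 - x^2 = y^2 \in R$ using totality and salience. Assembling $\sqrt{s} = \sqrt{r}\,\sqrt{u}$ through the polar decomposition of \eqref{enum:posAndUn} then gives square roots throughout $S$.
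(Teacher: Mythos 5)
Your proof is correct, but it takes a genuinely different route from the paper's in part~(\ref{enum:TwoOptions}). For parts~(\ref{enum:posAndUn})--(\ref{enum:equalsSa}) the substance matches: your preliminary ``salience'' property $R \cap (-R) = \{0\}$ is exactly what the paper invokes as ``by the definition of a phased ring'', and the only real variation is mechanical --- you get the dichotomy $S^{\selfadj} = R \cup (-R)$ by factoring $t^2 = r^2$ as $(t-r)(t+r)=0$, and uniqueness in~(\ref{enum:posAndUn}) by comparing $r^2 = r'^2$, where the paper derives both from the polar decomposition itself (a self-adjoint unitary $w$ has $w^2 = 1$, hence $w = \pm 1$); the two are equivalent in length and difficulty. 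In part~(\ref{enum:TwoOptions}) the paper adapts Vicary's argument: for a \emph{given} unitary $u$, some $x_s := s + u\cdot s^{\dagger}$ is non-zero (else $u = -1$ and $S = S^{\selfadj}$), and polar-decomposing $x_s = r\cdot v$ yields $v^{\dagger} = u^{\dagger}\cdot v$, hence $v^2 = u$ --- square roots of all unitaries in one stroke, order-free, with $i := \sqrt{-1}$ as a special case. You instead construct $i$ first, by polar-decomposing an anti-self-adjoint element $a = s - s^{\dagger}$ (which is precisely the paper's trick specialised to $u = -1$), then prove $S = S^{\selfadj}[i]$, and only then obtain $\sqrt{u}$ for general unitary $u = x + y\cdot i$ via the half-angle construction $p^2 = \tfrac{1+x}{2}$, $q^2 = \tfrac{1-x}{2}$, $2pq = y$. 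This costs you an extra order-theoretic lemma, $-1 \le x \le 1$, which you only sketch; it is completable exactly as you indicate (from $1 - x^2 = y^2 \in R$: if, say, $x - 1 \in R\setminus\{0\}$ and $1+x \in R$, then $x^2 - 1 \in R \cap (-R) = \{0\}$, forcing the degenerate cases $u = \pm 1$, while $x-1,\, -(1+x) \in R$ gives $2 \in R \cap (-R)$, a contradiction). What your route buys is a concrete, complex-number-style computation in place of an algebraic trick; what it loses is brevity --- and, amusingly, your own construction of $i$ generalises verbatim to arbitrary $u$ (replace $s - s^{\dagger}$ by $s + u\cdot s^{\dagger}$), which would have let you delete the half-angle argument and the order bound entirely, recovering the paper's proof.
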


\begin{proof}
\ref{enum:posAndUn}. 
For uniqueness, suppose that $p \cdot u = l \cdot v$ for $p, l \in R$ and $u, v$ unitary. Then $p = l \cdot w$ where $w = v \cdot u^{-1}$ is unitary. So 
\[l \cdot w^{\dagger} = p^{\dagger} = p = l \cdot w\] Since $S$ is an integral domain, multiplication is cancellative so $w = w^{\dagger}$ and $w^2 = 1$. If $w = 1$ we are done, otherwise $w = -1$ and so $p + l = 0$. But then $p = l = 0$ by the definition of a phased ring. 
For existence, given $s \in S$ let $r = s^\dagger \cdot s \in S^\pos$. Since $S^\pos$ has square roots, $r = t^2$ for some $t \in S^\pos$. Then $s^\dagger \cdot s = t^\dagger \cdot t$, so $s = t \cdot u$ for some $u$, which is easily seen to be unitary.

\ref{enum:TotOrdered}. Let $s \in S^{\selfadj}$ be non-zero with $s = t \cdot u$ as above. Then $t \cdot u = t \cdot u^{\dagger}$ and so $u =u^{\dagger}$ giving $u = \pm 1$. Hence either $s \in R$ or $-s \in R$. Then $\forall a, b \in R$ either 
\[a - b \in R\quad \text{ or } \quad b - a \in R\] making $R$ totally ordered in the above manner. 

\ref{enum:equalsSa}. We may identify $\dring{R}$ with the set of elements $a-b \in S$ for $a, b \in R$. Then $\dring{R} \subseteq S^{\selfadj}$ always, but by the previous part $S^{\selfadj} \subseteq \dring{R}$. 

\ref{enum:TwoOptions}. Suppose that $S \neq S^{\selfadj}$. We will show that $S$ has square roots using techniques adapted from Vicary~\cite[Thm.~4.2]{vicary2011categorical}. Since $R$ has square roots, thanks to the first part it suffices to find a square root of any given unitary $u \in S$. For this, first suppose that for all $s \in S$ the element
\[
x_s := s + u \cdot s^\dagger
\]
is zero. Then putting $s = 1$ shows that $u = -1$. But then since $x_s = 0$ for all $s$ we have $S = S^{\selfadj}$, a contradiction. Hence there is $s \in S$ such that $x_s$ is non-zero. Letting $x_s = r \cdot v$ for a unitary $v$ and $r \in R$ we have
\[
r \cdot v^\dagger = x_s^\dagger = u \cdot x_s = r \cdot v \cdot u^\dagger
\]
so that $v^\dagger = v \cdot u^\dagger$ and $v^2 = u$ as desired. 

Now in particular, $-1$ has a unitary square root $i$. Finally note that $2$ is divisible in $R$ thanks to the embedding $\mathbb{Q}_{\geq 0} \hookrightarrow R$. Then for any $s \in S$ defining elements of $S^\selfadj$ by
\[
\mathsf{R}(s) := \frac{1}{2} \cdot (s + s^{\dagger}) \qquad \mathsf{I}(s) := \frac{i}{2}(s^{\dagger} - s)\]
we have $s = \mathsf{R}(s) + i \cdot \mathsf{I}(s)$ so that $S = S^{\selfadj}[i]$. 
\end{proof}

\noindent
This gives a more precise form of our reconstruction theorem. 

\begin{corollary} \label{cor:precise-reconstruction}
Let $\catC$ be a dagger theory satisfying the operational principles, and suppose that $R=\catC(I,I)$ has square roots and is bounded. Then $\catC$ is equivalent to 
\[
\Quant{\dring{R}} \quad \text{ or } \quad \Quant{\dring{R}[i]}
\]
\end{corollary}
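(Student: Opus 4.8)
The plan is to assemble the two main results already in hand---Theorem~\ref{thm:main-reconstruction} and Lemma~\ref{lem:square roots}---and then identify the abstract phased ring $S$ produced by the reconstruction with a concrete ring built from $R$. Since this statement is essentially a corollary combining those results, the work lies almost entirely in bookkeeping rather than in any new construction.

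First I would invoke Theorem~\ref{thm:main-reconstruction}. As $\catC$ satisfies the operational principles and $R = \catC(I,I)$ is bounded, we obtain an equivalence of dagger theories $\catC \simeq \Quant{S}$ for some phased ring $S$, together with a semi-ring isomorphism $R \simeq S^{\pos}$. Transporting the square-roots hypothesis on $R$ across this isomorphism shows that $S^{\pos}$ has square roots, so $S$ meets the hypothesis of Lemma~\ref{lem:square roots}. I would then apply that lemma. Part~\eqref{enum:equalsSa} gives $S^{\selfadj} \simeq \dring{S^{\pos}}$, and since the difference-ring construction is functorial in semi-ring isomorphisms and $R \simeq S^{\pos}$, we get $\dring{R} \simeq \dring{S^{\pos}} \simeq S^{\selfadj}$ as involutive rings. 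Part~\eqref{enum:TwoOptions} now splits into two cases: either $S = S^{\selfadj}$ with trivial involution, whence $S \simeq \dring{R}$; or $S = S^{\selfadj}[i]$, whence $S \simeq \dring{R}[i]$ via the isomorphism induced on the $[i]$-extension.

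Finally I would feed these identifications back through the equivalence. Because $\Quant{-} = \CPM(\Mat_{-})$ respects isomorphisms of commutative involutive semi-rings, the identification $S \simeq \dring{R}$ yields $\catC \simeq \Quant{S} \simeq \Quant{\dring{R}}$ in the first case, while $S \simeq \dring{R}[i]$ yields $\catC \simeq \Quant{\dring{R}[i]}$ in the second, giving exactly the claimed dichotomy.

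The only point demanding genuine care is the tracking of involutions. In the first case one must check that the isomorphism $S^{\selfadj} \simeq \dring{R}$ is compatible with the trivial involution carried by $\dring{R}$; in the second, one must verify that the unitary square root $i$ of $-1$ produced by Lemma~\ref{lem:square roots}\eqref{enum:TwoOptions} plays the role of the formal adjoint element with $i^{\dagger} = -i$ used in the definition of $\dring{R}[i]$, so that the involution on $\dring{R}[i]$ transported through $S \simeq \dring{R}[i]$ agrees with the one on $S$. Neither is a real obstacle, so the substance of the corollary is already contained in the two cited results.
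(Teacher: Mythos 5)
Your proposal is correct and is exactly the argument the paper intends: the corollary is stated without proof precisely because it follows by combining Theorem~\ref{thm:main-reconstruction} (boundedness giving $\catC \simeq \Quant{S}$ with $R \simeq S^{\pos}$) with Lemma~\ref{lem:square roots} (the dichotomy $S = S^{\selfadj} \simeq \dring{R}$ or $S = S^{\selfadj}[i] \simeq \dring{R}[i]$), then transporting along $\Quant{-}$. Your attention to the involutions (trivial on $\dring{R}$, and $i^{\dagger} = -i$ since $i$ is a unitary square root of $-1$) fills in the only details the paper leaves tacit.
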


\section{Probabilistic Theories} \label{sec:probTheories}
Let us now consider the typical physical setting in which scalars correspond to (unnormalised) probabilities. 

\begin{definition} 
We call a dagger theory with coarse-graining $\catC$ \deff{probabilistic} when it comes with an isomorphism of semi-rings $\catC(I,I) \simeq \mathbb{R}^+$.
\end{definition}

Note that this is a weaker definition than typical in the literature~\cite{Barrett2007InfoGPTs,PhysRevA.84.012311InfoDerivQT} since we have not made any assumptions relating to tomography, finite-dimensionality or topological closure. One may in fact identify such theories intrinsically, as in the following observation for which we thank John van de Wetering. 

\begin{lemma} \label{lem:prob}
Let $\catC$ be a dagger theory satisfying the operational principles. Then $\catC$ is probabilistic iff $R=\catC(I,I)$ has square roots, no infinitesimals, and that every bounded increasing sequence has a supremum.
\end{lemma}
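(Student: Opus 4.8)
The forward implication is immediate: if $\catC$ is probabilistic then $R \simeq \mathbb{R}^+$ as semi-rings, and since the order $a \leq b \iff (\exists c)\, a + c = b$ is definable from the semi-ring structure alone, $\mathbb{R}^+$ visibly has square roots, no infinitesimals (it is Archimedean) and suprema of bounded increasing sequences (monotone convergence), and these are transported along the isomorphism. So all the content lies in the converse, which I would prove by constructing an explicit isomorphism $R \simeq \mathbb{R}^+$. First I would collect the structure already available: by Theorem~\ref{thm:main-reconstruction} we may take $R = S^{\pos}$ for a phased ring $S$, so $R$ has characteristic $0$ with an order-embedding $\mathbb{Q}^+ \hookrightarrow R$, and since $R$ has square roots, Lemma~\ref{lem:square roots}~\eqref{enum:TotOrdered} makes $R$ totally ordered. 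The divisibility property that $a+b$ divides $a$, applied with $a = 1$, shows that every element $r \geq 1$ is invertible in $R$ (write $r = 1 + b$ with $b = r - 1 \in R$).

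Next I would establish that $R$ is Archimedean in both directions. Reading ``no infinitesimals'' as the statement that no $0 < \epsilon \in R$ has $\epsilon \leq 1/n$ for all $n$ (where $1/n \in R$ via the embedding $\mathbb{Q}^+ \hookrightarrow R$), the absence of \emph{infinitely large} elements follows: if $r > n$ for all $n$ then $r \geq 1$ is invertible and $r^{-1} \leq 1/n$ for every $n$, giving an infinitesimal, a contradiction. Hence each $r \in R$ satisfies $r < n$ for some $n$, and combining the two directions a routine argument yields density, namely that for every $r \in R$ and every rational $\epsilon > 0$ there is $q \in \mathbb{Q}^+$ with $q \leq r < q + \epsilon$.

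With this in hand I would define $\phi \colon R \to \mathbb{R}^+$ by $\phi(r) = \sup\{ q \in \mathbb{Q}^+ \mid q \leq r \}$, the supremum taken in $\mathbb{R}$; it is finite precisely because $r$ is not infinitely large. Monotonicity is clear, $\phi$ restricts to the identity on $\mathbb{Q}^+$, and using density one checks that $\phi$ respects $+$ and $\cdot$, so it is a semi-ring homomorphism. Injectivity also comes from density together with no infinitesimals: if $r < r'$ then $r' - r \in R$ is positive, hence bounded below by some $1/n$, which separates their cuts and forces $\phi(r) < \phi(r')$.

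The main obstacle is surjectivity, and this is exactly where the completeness hypothesis is used. Given $x \in \mathbb{R}^+$, I would choose rationals $q_1 \leq q_2 \leq \cdots$ with $q_n \to x$; regarded in $R$ this is an increasing sequence bounded above by any rational exceeding $x$, so by hypothesis it has a supremum $r \in R$. Since $r \geq q_n$ for all $n$ we get $\phi(r) \geq x$, while $r$ being the \emph{least} upper bound gives $r \leq q$ for every rational $q > x$, whence $\phi(r) \leq q$ and so $\phi(r) \leq x$; thus $\phi(r) = x$. Therefore $\phi$ is an isomorphism of semi-rings and $\catC$ is probabilistic. The only point requiring genuine care throughout is verifying that $\phi$ preserves $+$ and $\cdot$ and commutes with the relevant suprema, all of which reduce to the density statement established above.
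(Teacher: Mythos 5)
Your proof is correct, but it takes a genuinely different route from the paper's on the converse direction. The paper passes to the difference ring $\dring{R}$: combining Lemma~\ref{lem:square roots} with Lemma~\ref{lem:its_a_field} (which uses the no-infinitesimals hypothesis to make the phased ring of scalars of $\plusI{\catC_\pure}$ a field), it concludes that $\dring{R}$ is a totally ordered Archimedean field with $R$ as its positive cone; the supremum hypothesis is then used, via the trick of replacing a bounded monotone sequence $(x_n)$ in $\dring{R}$ by an increasing sequence $(t x_n + r)$ lying in $R$, to show $\dring{R}$ is monotone complete, after which the isomorphism $\dring{R} \simeq \mathbb{R}$ is simply quoted from the classification of monotone-complete Archimedean ordered fields~\cite[Theorem 3.11]{hall2011completeness}, giving $R \simeq \mathbb{R}^+$. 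You instead stay entirely inside the positive cone and build the isomorphism $R \simeq \mathbb{R}^+$ by hand via rational cuts, in effect re-proving the relevant uniqueness-of-$\mathbb{R}$ statement in the semi-ring setting: you never need $S$ or $\dring{R}$ to be a field, only invertibility of elements $\geq 1$, which you correctly extract from the divisibility property of phased rings recorded after Theorem~\ref{thm:main-reconstruction}, while the total order comes from Lemma~\ref{lem:square roots} exactly as in the paper. What the paper's route buys is brevity, since the analytic content is outsourced to the citation and to Lemma~\ref{lem:its_a_field}, which the paper wants anyway for its discussion of phased fields; what your route buys is self-containedness (no external classification theorem, no passage to $\dring{R}$) and a strictly weaker algebraic input. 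The steps you defer as routine --- that $\phi$ preserves $+$ and $\cdot$, and the antisymmetry of the order, which rests on cancellativity (Proposition~\ref{cor:cancellative}) --- do all reduce to your density statement and present no obstacle, so the argument goes through as sketched.
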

\begin{proof}
Appendix~\ref{appendix:Proofs}. 
 \end{proof}

Now immediately from the definition of a probabilistic theory, and Corollary~\ref{cor:precise-reconstruction}, we have the following.

\begin{corollary}
Any dagger theory which satisfies the operational principles and is probabilistic is equivalent to $\Quant{\mathbb{R}}$ or $\Quant{\mathbb{C}}$.
\end{corollary}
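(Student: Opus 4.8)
The plan is to derive this corollary by simply feeding the probabilistic hypothesis into Corollary~\ref{cor:precise-reconstruction} and then identifying the resulting ring. First I would unfold the definition: since $\catC$ is probabilistic, its scalar semi-ring $R = \catC(I,I)$ is isomorphic to $\Rplus$. To invoke Corollary~\ref{cor:precise-reconstruction} I must check that $R$ has square roots and is bounded, and both of these are properties purely of $R$ as a semi-ring, hence transported across the isomorphism $R \simeq \Rplus$. That $\Rplus$ has square roots is immediate (every non-negative real has a non-negative square root), and that $\Rplus$ is bounded was already observed in the text preceding Lemma~\ref{lem:Boundedequivalence}: for any $r$, choosing $n > r$ shows there is no $r_n \in \Rplus$ with $r = n + r_n$.

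With these two conditions verified, Corollary~\ref{cor:precise-reconstruction} applies directly and yields an equivalence of dagger theories
\[
\catC \simeq \Quant{\dring{R}} \quad \text{ or } \quad \catC \simeq \Quant{\dring{R}[i]}.
\]
It then remains only to identify the coefficient ring. Since the difference-ring construction $R \mapsto \dring{R}$ is determined up to isomorphism by $R$, the isomorphism $R \simeq \Rplus$ gives $\dring{R} \simeq \dring{\Rplus} = \mathbb{R}$, as noted in the discussion of difference rings. Consequently $\dring{R}[i] \simeq \mathbb{R}[i] = \mathbb{C}$, and substituting these into the two cases above gives $\catC \simeq \Quant{\mathbb{R}}$ or $\catC \simeq \Quant{\mathbb{C}}$, as claimed.

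There is essentially no deep obstacle here, as the phrase preceding the statement (\emph{``immediately from\dots''}) already signals: the corollary is an assembly step. The only points requiring a line of care are purely bookkeeping, namely confirming that square-rootedness and boundedness are invariant under semi-ring isomorphism so that they may be read off from $\Rplus$, and confirming that the difference-ring operation respects this isomorphism so that $\dring{R} \simeq \mathbb{R}$. If one wished to avoid appealing to $\Rplus$ directly, one could instead obtain square roots from Lemma~\ref{lem:prob}, but the route through the definition is the most direct. Thus the work is entirely in correctly matching hypotheses to Corollary~\ref{cor:precise-reconstruction} rather than in any new argument.
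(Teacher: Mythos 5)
Your proof is correct and follows exactly the route the paper intends: the paper's own justification is precisely ``immediately from the definition of a probabilistic theory, and Corollary~\ref{cor:precise-reconstruction}'', i.e.\ transport boundedness and square roots across $R \simeq \Rplus$, apply that corollary, and identify $\dring{R} \simeq \mathbb{R}$ and $\dring{R}[i] \simeq \mathbb{C}$. Your write-up just makes explicit the bookkeeping the paper leaves implicit, so there is nothing to correct.
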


Finally one may distinguish complex from real quantum theory by adding an extra principle such as \emph{local tomography}~\cite{Hardy2012Holism},  or that every phase of a phased biproduct in our pure subcategory has a square root. In future it would be desirable to find a more generic categorical property separating these theories.

\section{Discussion} \label{sec:Discussion}

We close with some points of discussion.

\para{Equivalent axioms}
We saw that our principles ensured the presence of a coarse-graining operation $+$ on morphisms. Surprisingly, this operation is in fact cancellative, ruling out `possibilistic' theories like $\Rel$, as we show in Appendix~\ref{append:equiv}. Taking this operation instead as a primitive yields an alternative formulation of our principles, as follows.

Firstly, in a theory with coarse-graining, let us call a morphism $f$ \deff{sub-causal} when $\discard{} \circ f + e = \discard{}$ for some effect $e$. It is natural to consider such morphisms, since in fact only the sub-causal processes of a theory have a direct operational interpretation as possible outcomes of some experimental test. For example, in $\Quant{}$ a morphism is sub-causal when it is trace non-increasing as a completely positive map. 

\begin{theorem} \label{thm:alternate-axioms}
A dagger theory $\catC$ satisfies the operational principles iff it is non-trivial, dagger compact, has \strongpurification{} and dagger kernels, and comes with a coarse-graining operation for which: 
\begin{enumerate}
\item 
$f^{\dagger}$ is sub-causal for every dagger kernel or causal pure state $f$;
\item \label{enum:scalar-cond}
all scalars $r$ satisfy
\begin{equation} \label{eq:zero-scalar-law}
\scalebox{0.85}{\input{figures/scalar-weird-law.tikz}}
 \ \ 0
\end{equation}
\end{enumerate}
\end{theorem}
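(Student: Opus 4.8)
The statement is an equivalence, so the plan is to prove each implication separately, treating the forward direction (operational principles $\Rightarrow$ alternate axioms) as bookkeeping and concentrating the real effort on the converse.

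\textbf{Forward direction.} Assume $\catC$ satisfies the operational principles. Non-triviality, dagger compactness, strong purification and the existence of dagger kernels are immediate from the hypotheses. A coarse-graining operation is supplied by the forward half of Proposition~\ref{prop:coarse-graining}, which converts conditioning into a (unique) coarse-graining. For clause~1, principle~2 gives causal complementation, which by Proposition~\ref{prop:coarse-graining} is equivalent to every dagger kernel $k \colon K \to A$ being causal and satisfying~\eqref{eq:causally-complemented}; reading that equation as $\discard{K} \circ k^{\dagger} + (\discard{} \circ k^{\bot\dagger}) = \discard{A}$ exhibits an effect witnessing that $k^{\dagger}$ is sub-causal. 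Since pure exclusion holds, Lemma~\ref{lem:exclusion-equiv-new}~\eqref{enum:NormAndExtraa} shows that every causal pure state is itself a dagger kernel, so such states are covered by the same argument. Clause~2, the scalar law~\eqref{eq:zero-scalar-law}, then holds because under the operational principles the scalars sit inside the positive cone of a phased ring; concretely it follows from the zero-cancellation property of Lemma~\ref{lem:dag-ker-in-compact}~\eqref{enum:zero-cancell}, which is readily seen to validate it.

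\textbf{Converse, setup.} Now suppose $\catC$ is non-trivial, dagger compact, has strong purification, has dagger kernels, and carries a coarse-graining operation satisfying clauses 1 and 2. Conditioning (principle~4) is immediate, since the forward half of Proposition~\ref{prop:coarse-graining} uses only the presence of coarse-graining. It therefore remains to recover the causal-complementation part of principle~2 and the whole of principle~3 (pure exclusion).

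\textbf{Converse, crux.} The heart of the argument is to promote clause~1 to full causal complementation, which by Proposition~\ref{prop:coarse-graining} means showing that every dagger kernel $k$ is causal and satisfies~\eqref{eq:causally-complemented}. I would first note that kernels are pure: this rests only on strong purification, via Proposition~\ref{prop:OpTheoryProperties}~\eqref{enum:zero-mon} and~\eqref{enum:dag-ker-pure}, and does not invoke causal complementation, so there is no circularity. Working with $k$ and its complement $k^{\bot}$ as orthogonal pure isometries, one establishes the complementary-projector identity $k \circ k^{\dagger} + k^{\bot} \circ k^{\bot\dagger} = \id{A}$. Composing the sub-causality identity $\discard{K} \circ k^{\dagger} + e = \discard{A}$ of clause~1 with $k$ and with $k^{\bot}$, and doing the symmetric computation for $(k^{\bot})^{\dagger}$, then forces the witnessing effects to be exactly the complementary discards and forces $k, k^{\bot}$ to be causal; here the scalar law~\eqref{eq:zero-scalar-law} of clause~2 is what eliminates the degenerate scalar ``defects'' that could otherwise survive in these equations, playing the role that state-inhabitedness plays in Lemma~\ref{lem:dag-ker-in-compact}~\eqref{enum:zero-cancell}. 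This yields~\eqref{eq:causally-complemented}, hence causal complementation and the full principle~2. With that in hand, clause~1 applied to a causal pure state $\psi$ gives that $\psi^{\dagger}$ is sub-causal, and combined with normalisation this shows every causal pure state is a dagger kernel; pure exclusion then follows from the equivalence \eqref{enum:PEa} $\Leftrightarrow$ \eqref{enum:NormAndExtraa} of Lemma~\ref{lem:exclusion-equiv-new}.

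\textbf{Main obstacle.} The difficult step is precisely this converse crux: extracting \emph{causality} and the \emph{exact} complementation equation~\eqref{eq:causally-complemented} from the mere sub-causality of kernel adjoints. The subtlety is that one cannot yet appeal to the clean structure theory, since that presupposes the operational principles; consequently the projector identity $k k^{\dagger} + k^{\bot} k^{\bot\dagger} = \id{}$ and the elimination of spurious scalar defects through~\eqref{eq:zero-scalar-law} must be argued directly before the Lemma~\ref{lem:exclusion-equiv-new} machinery becomes applicable.
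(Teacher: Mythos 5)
Your forward direction is essentially the paper's: clause 1 comes from \eqref{eq:causally-complemented} together with the fact that causal pure states are kernels, and clause 2 from the additive structure the principles guarantee (the paper cites Corollary~\ref{cor:cancellative}, i.e.\ cancellativity of coarse-graining, obtained from the phased-ring scalars of $\CPM(\catA)$; your ``concrete'' citation of the tensor zero-cancellation in Lemma~\ref{lem:dag-ker-in-compact}~\eqref{enum:zero-cancell} is not the relevant property, though your phased-ring remark is). The genuine gap is in the converse, exactly at the step you call the crux. You propose to establish the projector identity $k \circ k^{\dagger} + k^{\bot} \circ k^{\bot\dagger} = \id{A}$ directly from clauses 1--2 and purity of kernels, and to read causal complementation off from it. No argument is given, and none is available at that stage: this identity is strictly stronger than \eqref{eq:causally-complemented}, which only asserts $\discard{} \circ k^{\dagger} + \discard{} \circ k^{\bot\dagger} = \discard{A}$. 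In the paper the projector identity is Proposition~\ref{prop:quant-cat-properties}~\eqref{enum:decomp-property}, available only for quantum categories, i.e.\ \emph{after} the whole reconstruction; its proof needs dagger biproducts and additive inverses (an epic isometry is unitary). Before the reconstruction the morphism $[k,k^{\bot}]$ does not even exist, and nothing in clauses 1--2 lets you compare the non-pure positive morphism $k \circ k^{\dagger} + k^{\bot}\circ k^{\bot\dagger}$ with $\id{A}$ (the CP axiom applies only to pure morphisms). So the central step of your converse is an unsupported assertion, and I do not see how to repair it by elementary means.

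The paper's converse also shows why your ordering cannot work as described: it runs the derivation the other way around. First pure exclusion is obtained (Lemma~\ref{lem:orthogonality}): for a causal pure state $\psi$ with $\coker(\psi)=0$, clause 1 gives $\psi^{\dagger} + e = \discard{}$; composing with $\psi$ yields a scalar equation that clause 2 collapses to $e \circ \psi = 0$, hence $e = 0$, $\psi^{\dagger} = \discard{}$, and essential uniqueness makes $\psi$ unitary, so every causal pure state is a kernel. Only then does one get positivity $f + g = 0 \implies f = g = 0$ (Lemma~\ref{lem:OpLemma-Redux}, whose proof \emph{needs} pure exclusion), then the cancellation $\discard{} + e = \discard{} \implies e = 0$ (proved using causality of kernels and a causal pure state of $\CoIm(e)$), and only with these two additive facts in hand does Lemma~\ref{lem:caus-complemented} upgrade sub-causality of $k$ and $k^{\dagger}$ to causality of kernels and to \eqref{eq:causally-complemented} --- by a direct computation with discards, never via the projector identity. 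Your plan (causal complementation first, pure exclusion second) requires exactly those additive hypotheses at the outset, but in the paper's development they are consequences of pure exclusion; gesturing at clause 2 ``eliminating scalar defects'' does not supply them. (Your unexplained invocation of normalisation in the final step is a secondary issue; it mirrors a hypothesis of Lemma~\ref{lem:orthogonality}, so the paper shares that reliance.)
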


\begin{proof}
Appendix~\ref{append:equiv}.
\end{proof}

Note that condition \ref{enum:scalar-cond} is certainly satisfied in any probabilistic theory. Hence any such theory satisfying the remaining conditions is equivalent to $\Quant{\mathbb{R}}$ or $\Quant{\mathbb{C}}$.

\para{Phased rings and fields} A basic open question left from Section~\ref{sec:PureProcessProperties} is the following: is every phased ring $S$ a field? Indeed in any such $S$ every element of the form $1 + s^{\dagger} \cdot s$ is invertible, our motivating examples $\mathbb{R}$ and $\mathbb{C}$ are indeed fields, and it is only in this case that we showed that $\Quant{S}$ satisfies our assumptions (Example~\ref{example:MatofPhasedField}). It is easy to see that when all non-zero scalars in our theory $\catC$ are invertible (as in a probabilistic theory) the phased ring $S$ of scalars in $\plusI{\catC_\pure}$ will be a field; for more on this see Appendix~\ref{sec:phasedFields}.

\para{Daggers}
 Our reconstruction made extensive use of the dagger, which lacks a clear physical interpretation for non-pure processes. It would thus be desirable to avoid explicit use of the dagger, or instead derive its presence as well as dagger compactness from principles such as purification as is essentially done in~\cite{PhysRevA.84.012311InfoDerivQT}. This would yield a fully operational reconstruction in the simple framework of symmetric monoidal categories with discarding. The removal of dagger compactness would also allow infinite-dimensional systems to be considered.
Finally, weakening our principles to allow for super-selection rules would allow us to study the broader category of finite-dimensional C*-algebras and completely positive maps~\cite{coecke2014categories}.

\para{Acknowledgements}
Early stages of this work were discussed during visits to the University of Pavia in June 2016 and to Radboud University Nijmegen in February 2017 and I thank my hosts Giacomo Mauro D'Ariano, Paolo Perinotti and Bart Jacobs. Thanks also to Bob Coecke, Chris Heunen and John van de Wetering for useful feedback. This research forms a part of the author's DPhil thesis, and was supported by EPSRC Studentship {OUCL/2014/SET}.

\bibliographystyle{alpha}
\bibliography{thesis-bib}

\appendix

\section{Omitted Proofs} \label{appendix:Proofs}

\begin{proof}[Proof of Lemma~\ref{lem:IplusIgivesaplusa}]
This is essentially~\cite[Lemma 7.2]{superpos}. Consider such an object $I \pbiprod I$, say with coprojections $\ket{0}, \ket{1}$. For any object $A$, we claim that the object $A \otimes (I \pbiprod I)$ forms a phased dagger biproduct $A \pbiprod A$ with coprojections
\[
\scalebox{0.85}{\input{figures/coproj.tikz}}

\]
for $i=0,1$. First, note that for any morphism $f$ into another object $B$ we have
\[
\scalebox{0.85}{\input{figures/bend-ph-alt.tikz}}

\]
for $i = 0, 1$. From this it is straightforward to see that $A \otimes (I \pbiprod I)$ satisfies the first condition of a phased biproduct, with induced morphisms being unique up to a phase, and that every phase is of the form $\id{A} \otimes u$ for some phase $u$ of $I \pbiprod I$. This makes phases closed under the dagger, so that $I \pbiprod I$ is a phased dagger biproduct as required.
\end{proof}

\begin{proof}[Proof of Lemma~\ref{lem:kernelsGiveBiproducts}]
The condition is necessary since for any phased dagger biproduct $A \pbiprod B$ it may be seen that $\pcoproj_A$ and $\pcoproj_B$ are dagger kernels with $\pcoproj_A = {\pcoproj_B}^\bot$. 

Conversely, suppose that the condition holds and let $k \colon A \to C$ and $l \colon B \to C$ be orthogonal kernels. 
Let $f$ be any endomorphism of $C \pbiprod C$ with $f \circ \pcoproj_1 = \pcoproj_1 \circ k \circ k^{\dagger}$ and $f \circ \pcoproj_2 = \pcoproj_2 \circ l \circ l^{\dagger}$ and let $i := \img(f)$. Then since 
\[
\coker(f) \circ \pcoproj_1 \circ k
=
\coker(f) \circ f \circ \pcoproj_1 \circ k 
= 
0
\]
and similarly for $\pcoproj_2$ and $l$, there are unique $\coproj_A, \coproj_B$ making the following commute:
\[
\begin{tikzcd}
A \rar{k} \arrow[drr,swap, dashed, "\pcoproj_A"]& C \rar{\pcoproj_1} & C \pbiprod C 
& C \lar[swap]{\pcoproj_2} & B \lar[swap]{l} \arrow[dll, dashed,"\pcoproj_B"] 
\\
& & \Img(f) \uar[swap]{i} & & 
\end{tikzcd}
\]
We claim that $\pcoproj_A$ and $\pcoproj_B$ make $\Img(f)$ a phased biproduct $A \pbiprod B$. To see the existence property, given $g \colon A \to D$ and $h \colon B \to D$, let $j \colon C \pbiprod C \to D$ with $j \circ \pcoproj_1 = g \circ k^{\dagger}$ and $j \circ \pcoproj_2 = h \circ l^{\dagger}$. Then the morphism $a := j \circ i$ has $a \circ \pcoproj_A = g$ and $a \circ \pcoproj_B = h$.

We now show the uniqueness property. First, it is straightforward to show that $f^{\dagger}$ has the same composites with $\pcoproj_1$ and $\pcoproj_2$ as $f$. Then $f^{\dagger} \circ \pcoproj_1 \circ k^{\bot} = 0$ and so since $i = \img(f)$ we have $i^{\dagger} \circ \pcoproj_1 \circ k^{\bot} = 0$. Then since $k = \img(k)$ we have
\begin{equation*} 
i^{\dagger} \circ \pcoproj_1 
=
i^{\dagger} \circ \pcoproj_1 \circ k \circ k^{\dagger} 
=
i^{\dagger} \circ i \circ \pcoproj_A \circ k^{\dagger} 
=
\pcoproj_A \circ k^{\dagger} 
\end{equation*}

Now suppose that some morphisms $m, p \colon \Img(f) \to D$ each have the same composites with $\pcoproj_A$ and $\pcoproj_B$. Let $q = m \circ i^{\dagger}$ and $r = p \circ i^{\dagger}$. Then 
\begin{align*}
q \circ \pcoproj_1 
 = 
m \circ i^{\dagger} \circ \pcoproj_1
 =
m \circ \pcoproj_A \circ k^{\dagger}
=
p \circ \pcoproj_A \circ k^{\dagger}
=
r \circ \pcoproj_1 
\end{align*}
and similarly for $\pcoproj_2$. So there is a phase $U$ on $C \pbiprod C$ with $q = r \circ U$. Then $m = p \circ u$ where 
\begin{equation} \label{eq:phonIm}
u = i^{\dagger} \circ U \circ i
\end{equation}
One may verify from the definitions of $\pcoproj_A$ and $\pcoproj_B$ that any such endomorphism $u$ preserves them, establishing the uniqueness property. Moreover, running the above argument with $p = \id{}$ shows that any phase on $\Img(f)$ is of the form~\eqref{eq:phonIm}. In particular, since phases on $C \pbiprod C$ are closed under the dagger, so are those on $\Img(f)$.
\end{proof}

\begin{lemma} \label{lem:dag-ker-coproj}
Let $\catB$ be a dagger category with a phased dagger biproduct $A \pbiprod B$. Then $\pcoproj_A$ and $\pcoproj_B$ are dagger kernels with $\pcoproj_A = {\pcoproj_B}^{\bot}$.
\end{lemma}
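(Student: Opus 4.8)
The plan is to identify $\pcoproj_A$ as the dagger kernel of the projection $\pcoproj_B^\dagger \colon A \pbiprod B \to B$, and symmetrically $\pcoproj_B$ as the kernel of $\pcoproj_A^\dagger$. The key preliminary observation is that the uniqueness-up-to-phase clause (condition 2 of Definition~\ref{def:ph_coprod}), which constrains \emph{copairing} morphisms \emph{out} of $A \pbiprod B$, dualizes under $(-)^\dagger$ into a statement about morphisms \emph{into} $A \pbiprod B$. Concretely, if $m, m' \colon D \to A \pbiprod B$ satisfy $\pcoproj_A^\dagger \circ m = \pcoproj_A^\dagger \circ m'$ and $\pcoproj_B^\dagger \circ m = \pcoproj_B^\dagger \circ m'$, then taking daggers exhibits $m^\dagger$ and $m'^\dagger$ as two copairings of the same pair of legs, so $m'^\dagger = m^\dagger \circ U$ for a phase $U$; daggering back and invoking condition 3 (phases closed under $\dagger$) yields $m' = V \circ m$ for the phase $V = U^\dagger$. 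Thus any two maps into the biproduct that agree on both projections differ by a phase applied on the left.

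With this in hand I would carry out the kernel argument. Since the coprojections are orthogonal isometries we have $\pcoproj_A^\dagger \circ \pcoproj_A = \id{A}$ and $\pcoproj_B^\dagger \circ \pcoproj_A = 0$, so $\pcoproj_A$ is at least a candidate kernel of $\pcoproj_B^\dagger$. For the universal property, take any $g \colon C \to A \pbiprod B$ with $\pcoproj_B^\dagger \circ g = 0$, set $h := \pcoproj_A^\dagger \circ g$ and $g' := \pcoproj_A \circ h$. A direct check shows that $g$ and $g'$ have the same composites with both $\pcoproj_A^\dagger$ and $\pcoproj_B^\dagger$ (the $B$-component vanishing on both sides thanks to $\pcoproj_B^\dagger \circ \pcoproj_A = 0$ and the hypothesis on $g$). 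By the dualized property there is a phase $V$ with $g = V \circ g'$; but $V \circ \pcoproj_A = \pcoproj_A$, so $V \circ g' = V \circ \pcoproj_A \circ h = \pcoproj_A \circ h = g'$, forcing $g = g' = \pcoproj_A \circ h$. Uniqueness of the factoring morphism is immediate since $\pcoproj_A$, being an isometry, is monic. This establishes $\pcoproj_A = \ker(\pcoproj_B^\dagger)$, and the symmetric computation gives $\pcoproj_B = \ker(\pcoproj_A^\dagger)$.

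The complement identity then follows formally. Since $\ker(\pcoproj_B^\dagger) = \pcoproj_A$ exists, $\coker(\pcoproj_B) = \ker(\pcoproj_B^\dagger)^\dagger = \pcoproj_A^\dagger$ is a dagger cokernel of $\pcoproj_B$, and hence $\pcoproj_B^\bot = \coker(\pcoproj_B)^\dagger = \pcoproj_A$, as required. I expect the only genuine obstacle to be the first step: the defining uniqueness clause of a phased biproduct speaks about maps \emph{out} of the object, whereas a kernel factorization concerns maps \emph{in}, so the clause must be transported through the dagger before it can be applied. Once the phase $V$ has been produced, the fact that it fixes $\pcoproj_A$ makes it vanish from the factorization, which is exactly what upgrades the mere existence of a phase into genuine equality $g = \pcoproj_A \circ h$.
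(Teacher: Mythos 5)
Your proof is correct and follows essentially the same route as the paper's: both arguments take $g$ with $\pcoproj_B^\dagger \circ g = 0$, compare it against $\pcoproj_A \circ \pcoproj_A^\dagger \circ g$ via the (dagger-dualized) uniqueness-up-to-phase clause, and then use the fact that the resulting phase fixes $\pcoproj_A$ to collapse the factorization into an equality. Your write-up is in fact slightly more complete than the paper's, which leaves the dualization through the dagger, the monicity argument for uniqueness, and the identity $\pcoproj_B^\bot = \coker(\pcoproj_B)^\dagger = \pcoproj_A$ implicit.
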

\begin{proof}
By definition both morphisms are isometries. Let $\pproj_A := \coproj_A^\dagger$ and $\pproj_B := \coproj_B^\dagger$, and let us show that $\pcoproj_A = \ker(\pproj_B)$. Suppose that $f \colon C \to A \pbiprod B$ has $\pproj_B \circ f = 0$, and let $g = \coproj_A \circ \pproj_A \circ f$. Then $\pproj_A \circ g = \pproj_A \circ f$ and $\pproj_B \circ f = 0 = \pproj_B \circ f$. Hence for some phase $U$ we have 
\[
f 
= 
U \circ g 
=
U \circ \coproj_A \circ \pproj_A \circ f
=
 \coproj_A \circ \pproj_A \circ f
\]
and so $f$ factors over $\coproj_A$, as required. 
\end{proof}

\begin{proof}[Proof of Example~\ref{example:MatofPhasedField}]
We've seen that $\Mat_S$ is always dagger compact with dagger biproducts. 

First let us establish dagger normalisation. 
For any state $\psi = (a_i)^n_{i=1} \colon 1 \to n$, since $S$ is a phased ring we have 
\[
\psi^\dagger \circ \psi = \sum^n_{i=1} a_i^\dagger \cdot a_i = a^\dagger \cdot a
\] 
for some $a \in S$. Then if $\psi \neq 0$ also $a \neq 0$ and so $\phi = (\frac{a_i}{a})^n_{i=1}$ is an isometry with $\psi = \phi \circ a$. We now show that $\Mat_S$ has dagger kernels. Note that the states on any object $n \in \mathbb{N}$ form the vector space $S^n$ and also come with the `inner product' 
\[
\langle \psi, \phi \rangle := \psi^{\dagger} \circ \phi
\] 
for $\psi, \phi \colon 1 \to n$. Since $S$ is a phased ring this satisfies $\langle \psi, \psi \rangle = 0 \implies \psi = 0$. 

Now for any morphism $M \colon n \to m$, the set $\{ \psi \mid M \circ \psi = 0 \}$ is a subspace of $S^n$ and so has a finite basis $\{\psi_i\}^r_{i=1}$ for some $r \leq n$. Using the well-known Gram-Schmidt algorithm (see e.g.~\cite[p.544]{cheney2009linear}) we may replace this by another basis $\{\phi_i\}^r_{i=1}$ which is orthonormal in that $\langle \phi_j^{\dagger}, \phi_i \rangle = \delta_{i,j}$. Then $k := (\phi_i)^r_{i=1} \colon r \to n$ in $\MatS$ is an isometry with $k = \ker(M)$. 

Next we verify homogeneity. Let $M, N \colon n \to m$ satisfy $M^{\dagger} \circ M = N^{\dagger} \circ N$. It follows that $\coim(M) = \coim(N)$ and so after restricting along these we may assume that $\ker(M) = \ker(N) = 0$. Now define a modified `inner product' by 
\[
\langle \psi, \phi \rangle' := \langle M \circ \psi, M \circ \phi \rangle = \langle N \circ \psi, N \circ \phi \rangle
\] 
Again this satisfies $\langle \psi, \psi \rangle' = 0 \implies \psi = 0$. Hence we may again apply the Gram-Schmidt algorithm to find an orthonormal basis $\{e_i\}^n_{i=1}$ with respect to $\langle - , - \rangle'$. Then $\{M \circ e_i\}^n_{i=1}$ and $\{N \circ e_i\}^n_{i=1}$ are each orthonormal collections of states of $m$ and so may be extended to orthonormal bases $\{\psi_i\}^m_{i=1}$ and $\{\phi_i\}^m_{i=1}$ respectively. Finally, any matrix $U \colon m \to m$ with $U \circ \psi_i = \phi_i$ for all $i$ is unitary and satisfies $U \circ M = N$.
\end{proof}

\begin{proposition} \label{prop:quant-to-theory}
Let $\catA$ be a non-trivial quantum category. Then $\CPM(\catA)$ forms a dagger theory satisfying the operational principles. Moreover there is a dagger monoidal equivalence
\begin{equation} \label{eq:Aequiv}
\catA \simeq \plusI{\CPM(\catA)_\pure}
\end{equation}
\end{proposition}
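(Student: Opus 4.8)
The plan is to use the doubling functor $\Dbl{(-)} \colon \catA \to \CPM(\catA)$ to supply the pure structure and then invert the forward direction of Theorem~\ref{thm:mainToQuantumCat} via Theorem~\ref{thm:onPHConstr}. First I would record the general facts: since $\catA$ is dagger compact, $\CPM(\catA)$ is a compact dagger theory with zero morphisms carrying the environment structure $\CPM(\catA)_\prepure = \Dbl{\catA}$ of~\cite{coecke2008axiomatic}, and it is non-trivial because $\id{I} \neq 0$ in $\catA$. As $\Dbl{s \cdot f} = (s^{\dagger} \cdot s) \cdot \Dbl{f}$, the functor $\Dbl{(-)}$ collapses morphisms differing by a unitary scalar, giving a full functor $\catA_{\sim} \to \Dbl{\catA}$ for the global phases $\mathbb{P}$ of unitary scalars; its faithfulness, i.e. the dagger equivalence $\Dbl{\catA} \simeq \catA_{\sim}$, I would settle alongside the purity analysis below, exactly as in Lemma~\ref{lem:CPMsCoincide}.

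The central task is to verify the four principles, for which I would first show $\CPM(\catA)_{\pure} = \Dbl{\catA}$. That doubled morphisms are pure amounts to showing every dilation of a $\Dbl{f}$ is trivial; writing such a dilation, via the $\CPM$ construction, as arising from some $g$ in $\catA$ whose reduced data agrees with that of $f$, I would invoke homogeneity in $\catA$ to produce a unitary relating them, which doubles to the required pure-unitary dilation and simultaneously establishes essential uniqueness~\eqref{eq:essential-uniqueness}; the converse inclusion and the coincidence with the mixing notion of purity are the content of Appendix~\ref{appendix:pure}. Causal pure states of any non-zero object arise by doubling the isometric states guaranteed by state-inhabitation, so strong purification holds.

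For the remaining principles I would translate each to a property of $\catA$ already in hand. Kernels transport along $\Dbl{(-)}$ using that isometries in $\catA$ are kernels and decompose as $[k,k^{\bot}]$ unitarily (Proposition~\ref{prop:quant-cat-properties}~\eqref{enum:dagmonodagkernel} and~\eqref{enum:decomp-property}); causal complementation then follows from Proposition~\ref{prop:coarse-graining}, since $\CPM(\catA)$ carries the coarse-graining of Examples~\ref{example:CPM-mixing} (as $\catA$ has dagger biproducts) and that operation forces dagger kernels to be causal and to satisfy~\eqref{eq:causally-complemented}. Conditioning is then immediate from coarse-graining by Proposition~\ref{prop:coarse-graining}. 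Pure exclusion I would obtain through the equivalent form Lemma~\ref{lem:exclusion-equiv-new}~\eqref{enum:NormAndExtraa}: normalisation in $\CPM(\catA)$ descends from dagger normalisation in $\catA$, and every causal pure state is a doubled isometry, hence a kernel.

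Finally, for the equivalence, the identification $\CPM(\catA)_{\pure} \simeq \catA_{\sim}$ reduces the claim to $\plusI{\catA_{\sim}} \simeq \catA$. Because $\catA$ has distributive dagger biproducts, is state-inhabited, and satisfies $p \sim q \implies p = q$ for positive morphisms — which I would check using that its scalars form an integral domain together with zero-cancellation, Lemma~\ref{lem:dag-ker-in-compact}~\eqref{enum:zero-cancell} — the quotient $\catA_{\sim}$ has the superposition properties and the $\GP$ construction of~\cite{superpos} recovers $\catA$ from it. I expect the main obstacle to be precisely the identification $\CPM(\catA)_{\pure} = \Dbl{\catA}$: showing that arbitrary (not merely environment-structure) dilations of doubled morphisms are trivial is the one step least reducible to the general $\CPM$ formalism, being exactly where homogeneity and dagger normalisation of the quantum category must be used in full.
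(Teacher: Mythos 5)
Your architecture matches the paper's proof almost step for step (environment structure $\Dbl{\catA}$, essential uniqueness from homogeneity, kernels transported from $\catA$, causal complementation from Proposition~\ref{prop:quant-cat-properties}~\eqref{enum:decomp-property} plus Proposition~\ref{prop:coarse-graining}, pure exclusion via Lemma~\ref{lem:exclusion-equiv-new}, and the equivalence via the converse of Theorem~\ref{thm:onPHConstr} together with $\Dbl{f}=\Dbl{g} \iff f \sim g$). But there is a genuine gap, and your closing paragraph locates the difficulty in the wrong place. The direction you flag as the obstacle — that \emph{arbitrary} dilations of a doubled morphism are trivial — is in fact the easy direction: any dilation $h$ of $\Dbl{f}$ in $\CPM(\catA)$ is of the form $(\discard{E}\otimes\id{}) \circ \Dbl{g}$ for some $g \in \catA$, so $\Dbl{g}$ is an environment-structure dilation of $\Dbl{f}$, and essential uniqueness (homogeneity) immediately forces $h = \Dbl{f} \otimes \rho$ for a causal state $\rho$. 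This is exactly the forward direction of Lemma~\ref{lem:EUP-properties}. The real work is the \emph{converse} inclusion, pure $\implies$ doubled, which you dismiss as ``the content of Appendix~\ref{appendix:pure}.'' It is not: Lemma~\ref{lem:EUP-properties} carries the hypothesis that whenever $f \otimes \rho \in \catC_\prepure$ for some causal state $\rho$, then $f \in \catC_\prepure$, and verifying this tensor-descent property for $\Dbl{\catA} \subseteq \CPM(\catA)$ is the heart of the paper's proof. There one uses well-pointedness of $\catA$ and the rule $h^\dagger \circ h = 0 \implies h = 0$ to find an effect making the marginal state non-zero, then dagger normalisation and Proposition~\ref{prop:quant-cat-properties}~\eqref{enum:dagmonodagkernel} to write that state as (kernel)$\circ$(scalar), and zero cancellativity to conclude $\img(\rho) = \Dbl{k}$, so that $\rho = \Dbl{k}\circ t$ lies in $\Dbl{\catA}$ because scalars do. None of this appears in your proposal, so as written the identification $\CPM(\catA)_\pure = \Dbl{\catA}$ — on which your kernel, pure-exclusion, and equivalence steps all depend — is unproved.

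A second, smaller gap: you claim $\CPM(\catA)$ is non-trivial ``because $\id{I} \neq 0$ in $\catA$.'' That only shows $I$ is not a zero object, but $I$ is always a trivial object since $\discard{I} = \id{I}$ is unitary; non-triviality of the \emph{theory} requires exhibiting some object that is neither zero nor has unitary discard. The paper does this for $I \biprod I$: if $\discard{I\biprod I}$ were unitary in $\CPM(\catA)$ it would be pure, yielding a unitary $[a,b] \colon I \biprod I \to I$ in $\catA$, which forces $a,b$ to be orthogonal unitary scalars, hence $a = b = 0$ and $\id{I} = 0$, a contradiction. You need some argument of this kind; it does not follow from non-triviality of $\catA$ alone without it.
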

\begin{proof}
By homogeneity in $\catA$, $\CPM(\catA)$ has essentially unique dilations with respect to its environment structure $\Dbl{\catA}$, i.e.~the image of the functor $\Dbl{(-)} \colon \catA \to \CPM(\catA)$. Now since $\catA$ has dagger kernels, it satisfies  
\begin{equation} \label{eq:rule}
f^{\dagger} \circ f = 0 \implies f = 0
\end{equation}
for all morphisms $f$, by~\cite[Lemma 2.4]{vicary2011categorical}. 
It follows that $\Dbl{f} = 0 \implies f = 0$ in $\catA$ and  $\discard{} \circ g = 0 \implies g = 0$ in $\CPM(\catA)$.

The former provides $\Dbl{\catA}$ with dagger kernels given by $\ker(\Dbl{f}) = \Dbl{\ker(f)}$. It follows that in $\CPM(\catA)$ any morphism $g$ with dilation $\Dbl{f}$ has a dagger kernel $\ker(g) = \ker(\Dbl{f})$. Indeed if $g \circ h = 0$ for some $h$ with dilation $\Dbl{j}$ as below then 
\[
0 \ \ 
\scalebox{0.85}{\input{figures/kernel-argument.tikz}}

\text{ and so }
\quad
\scalebox{0.85}{\input{figures/kernel-argument2.tikz}}

\ \  0 
\]
so that $\Dbl{j}$ factors over $\ker(\Dbl{f}) \otimes \id{E}$ and hence $h$ factors over $\ker(\Dbl{f})$. Hence $\CPM(\catA)$ has dagger kernels. 

To show that $\CPM(\catA)$ has \strongpurification, we need to show that a morphism belongs to $\Dbl{\catA}$ iff it is pure. By Lemma~\ref{lem:EUP-properties} and compactness it suffices to show that, for any state $\rho$ and causal state $\sigma$, if $\rho \otimes \sigma \in \Dbl{\catA}$ then so does $\rho$. So suppose that this holds. It follows from well-pointedness in $\catA$ and the rule \eqref{eq:rule} that there is some effect $\psi \in \catA$ for which 
\begin{equation} \label{eq:stateinA}
\scalebox{0.85}{\input{figures/statesinA-alt.tikz}}
 \in \Dbl{\catA}
\end{equation}
is non-zero. Now by dagger normalisation and Proposition~\ref{prop:quant-cat-properties}~\eqref{enum:dagmonodagkernel}, in $\catA$ every state is of the form $k \circ s$ for some dagger kernel state $k$ and scalar $s$. So suppose that the state \eqref{eq:stateinA} takes this form for some $k, s$. Since $\Dbl{k}$ is again a kernel in $\CPM(\catA)$ it follows from zero cancellativity that $\img(\rho) = \Dbl{k}$, and so $\rho = \Dbl{k} \circ t$ for some scalar $t$. Now dagger normalisation in $\catA$ states that every scalar in $\CPM(\catA)$ belongs to $\Dbl{\catA}$. In particular so does $t$ and hence so does $\rho$, as required. Hence $\CPM(\catA)$ has \strongpurification.

Now we've seen that all scalars are pure, giving $\CPM(\catA)$ normalisation, and by the CP axiom and Proposition~\ref{prop:quant-cat-properties}~\eqref{enum:dagmonodagkernel} every causal pure state in is a kernel. Hence by Lemma~\ref{lem:exclusion-equiv-new} $\CPM(\catA)$ satisfies pure exclusion.

Next we show that non-triviality of $\catA$ ensures non-triviality of the dagger theory $\CPM(\catA)$. Let $A=I \biprod I$ in $\catA$. Then if $\discard{A}$ is an isomorphism in $\CPM(\catA)$ it is pure and hence unitary, giving a unitary $\psi = [a,b] \colon I \biprod I \to I$ in $\catA$. But $\psi$ being an isometry is equivalent to $a$ and $b$ being unitary scalars in $\catA$ with $a^{\dagger} \cdot b = 0$. But then $a= b =0$ and so $\id{I} = 0$, a contradiction.

Now, we've seen that the addition operation $+$ in $\catA$ provides $\CPM(\catA)$ with a coarse-graining operation. Moreover thanks to Proposition~\ref{prop:quant-cat-properties}~\eqref{enum:decomp-property} in $\catA$ all dagger kernels $k \colon K \to A$ satisfy 
\[
\scalebox{0.85}{\input{figures/kersumrule.tikz}}

\]
which translates precisely to~\eqref{eq:causally-complemented} in $\CPM(\catA)$. Hence by Proposition~\ref{prop:coarse-graining} $\CPM(\catA)$ satisfies the remaining operational principles.

Finally, we turn to the equivalence \eqref{eq:Aequiv}. 
First, choose as global phases all unitary scalars in $\catA$, so that as before we write $f \sim g$ when $f = u \cdot g$ for some unitary scalar $u$ and $\catA_\sim$ for the quotient category under this congruence. Then by a general result from~\cite{superpos} (being the converse to Theorem~\ref{thm:onPHConstr}) we have $\catA \simeq \plusI{\catA_{\sim}}$. On the other hand by homogeneity we in fact have $\Dbl{f} = \Dbl{g} \iff f \sim g$ since:
\[
\scalebox{0.85}{\input{figures/Dbl-argument-new.tikz}}

\]
for some unitary scalar $u$. But since $\Dbl{\catA} = \CPM(\catA)_\pure$, we obtain a dagger monoidal equivalence $\catA_\sim \simeq \CPM(\catA)_{\pure}$ and hence also the equivalence \eqref{eq:Aequiv}.
\end{proof}

\begin{proof}[Proof of Lemma \ref{lem:prob}]
Clearly $\Rpos$ satisfies these properties. Conversely if they hold then, by Lemmas~\ref{lem:square roots} and~\ref{lem:its_a_field}, $\dring{R}$ is a totally ordered Archimedean field~\cite{hall2011completeness} with $R$ as its positive elements. Let $(x_n)^{\infty}_{n=1}$ be any bounded monotonic sequence in $\dring{R}$. Then for some $r \in R$ and $t=\pm 1$, the bounded sequence $(t x_n + r)^\infty_{n=1}$ is increasing and belongs to $R$, and so converges there. Hence $(x_n)^{\infty}_{n=1}$ also converges in $D(R)$, making the latter monotone complete. But then by~\cite[Theorem 3.11]{hall2011completeness} there is an isomorphism $D(R) \simeq \mathbb{R}$ and hence $R \simeq \Rpos$.
\end{proof}

\section{Notions of Purity} \label{appendix:pure}

There have been several proposed notions of `pure' morphism in the recent literature. The definition used here~\eqref{eq:whole} from~\cite{chiribella2014distinguishability} coincides with that of~\cite{selby2017leaks} whenever the `no leaks' condition~\eqref{eq:noleaks} is satisfied. In fact it is automatic in any theory with a suitable form of essentially unique purification, in the following sense. 

\begin{lemma} \label{lem:EUP-properties}
Let $\catC$ be a dagger theory with a zero object. Suppose that $\catC$ has an environment structure $\catC_\prepure$ which contains all zero morphisms and is state-inhabited, and for which dilations are essentially unique in the sense of~\eqref{eq:essential-uniqueness}. Suppose further that whenever $f \otimes \rho \in \catC_\prepure$ for some causal state $\rho$, so does $f$. Then a morphism belongs to $\catC_\prepure$ iff it is pure.  
\end{lemma}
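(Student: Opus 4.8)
The plan is to prove both implications separately, the easy one being that every pure morphism lies in $\catC_{\prepure}$, and the substantial one being the converse.

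First I would dispatch the direction ``pure $\Rightarrow$ pre-pure''. Let $f$ be pure. If $f = 0$ it lies in $\catC_{\prepure}$ by the assumption that this subcategory contains all zero morphisms. Otherwise, since $\catC_{\prepure}$ is an environment structure, $f$ has some dilation $g \in \catC_{\prepure}$. Purity of $f$, via \eqref{eq:whole}, forces $g = f \otimes \rho$ for a causal state $\rho$, so that $f \otimes \rho \in \catC_{\prepure}$. The final hypothesis of the lemma then yields $f \in \catC_{\prepure}$, completing this direction.

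For the converse ``pre-pure $\Rightarrow$ pure'', I would take $f \colon A \to B$ in $\catC_{\prepure}$, assume $f \neq 0$ (the zero case being immediate), and show that an arbitrary dilation $g \colon A \to B \otimes C$ of $f$ is trivial. The strategy is to first dilate $g$ to a morphism $h$ lying in $\catC_{\prepure}$, over some further environment $D$, so that $h \colon A \to B \otimes E$ with $E = C \otimes D$ is a \emph{pure} dilation of $f$, and then to recognise any such pure dilation as trivial. A preliminary observation is that $E$ is non-zero: were it a zero object then $B \otimes E$ would be too, forcing $h = 0$ and hence $f = 0$. The key input is that, by the CP axiom \eqref{eq:CP axiom}, every isometry in $\catC_{\prepure}$ is causal; in particular an isometric state is a causal state, so state-inhabitation supplies a causal pure state $\tau_0 \colon I \to E$. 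Then $f \otimes \tau_0$ and $h$ are both pure dilations of $f$ over the same environment $E$, whence essential uniqueness \eqref{eq:essential-uniqueness} provides a pure unitary $U$ on $E$ with $h = f \otimes (U \circ \tau_0)$. Since $U$ is again causal, $\tau := U \circ \tau_0$ is a causal pure state, and discarding the $D$-factor gives $g = f \otimes \rho$ with $\rho = (\id{C} \otimes \discard{D}) \circ \tau$ causal. As $g$ was arbitrary, $f$ satisfies Definition~\ref{def:pure} and so is pure.

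I expect the main obstacle to be the converse direction, and within it the two bookkeeping points that make essential uniqueness applicable: arranging that the pure dilation $h$ and the trivial dilation $f \otimes \tau_0$ share a common environment $E$, and securing a \emph{causal} pure state on $E$ so that $f \otimes \tau_0$ genuinely dilates $f$ rather than a scalar multiple of it. The latter is precisely where state-inhabitation must be combined with the CP axiom, through the fact that isometries, and hence isometric states, are causal; the causality of the final $\rho$ then propagates from that of $\tau_0$ through the unitary $U$ and the discarding map, again using that unitaries are causal.
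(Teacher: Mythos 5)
Your proposal is correct and follows essentially the same route as the paper's proof: dilate the given dilation further into $\catC_\prepure$, use state-inhabitation together with the CP axiom to obtain a causal state of the environment in $\catC_\prepure$, apply essential uniqueness to get a pure unitary making the dilation trivial, and handle the pure $\Rightarrow$ pre-pure direction via the final hypothesis of the lemma. The only difference is cosmetic: where the paper says ``dilating $g$ if necessary, we may assume $g \in \catC_\prepure$'', you carry the extra environment $D$ explicitly and discard it at the end, which is a welcome clarification of that reduction.
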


\begin{proof}
Let $f \colon A \to B$ be non-zero and belong to $\catC_{\prepure}$. Suppose that $f$ has a dilation $g \colon A \to B \otimes C$. Dilating $g$ if necessary, we may assume that $g \in \catC_{\prepure}$. Since $f$ is non-zero, $C$ is not a zero object and so has a causal state $\psi \in \catC_\prepure$. Then 
\[
\scalebox{0.85}{\input{figures/whole_arg3.tikz}}

\quad
\text{ and so }
\quad
\scalebox{0.85}{\input{figures/whole_arg4.tikz}}

\]
for some unitary $U \in \catC_\prepure$. By the CP axiom $U$ is causal and hence so is the state $U \circ \psi$ as desired. Conversely, suppose that $f \colon A \to B$ is pure and non-zero. Let $g \colon A \to B \otimes C$ be a dilation of $f$ with $g \in \catC_{\prepure}$. Then we have
\[
\scalebox{0.85}{\input{figures/whole_arg_new1.tikz}}

\]
for some causal state $\sigma$ of $C$. Hence by assumption $f \in \catC_\prepure$. 
\end{proof}

A more typical notion of purity for probabilistic theories instead refers to a process having no non-trivial decompositions as a mixture. Cunningham and Heunen have also presented an alternative, categorically well-behaved notion of purity~\cite{cunningham2017purity}. Thanks to the presence of coarse-graining $+$, all of these notions may be defined in theories satisfying our principles, and in fact they coincide. 

\begin{lemma}
Let $\catC$ be a dagger theory satisfying the operational principles. For any morphism $f \colon A \to B$ the following are equivalent:
\begin{enumerate}
\item \label{enum:pure}
$f$ is pure in the sense of~\eqref{eq:whole};
\item \label{enum:atomic}
$f$ is \emph{atomic}~\cite{PhysRevA.84.012311InfoDerivQT}: $f = g + h \implies g = r \cdot f$ for some scalar $r$;
\item \label{enum:OscarPure}
$f$ is \emph{copure} in the sense of~\cite{cunningham2017purity}:
\begin{align*}
\scalebox{0.85}{\input{figures/copure.tikz}}
\\ 
 \text{ for some $k$ with } \quad   
\scalebox{0.85}{\input{figures/copure2.tikz}}

\end{align*}
\end{enumerate}
\end{lemma}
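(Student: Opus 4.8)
The plan is to prove the three equivalences by passing to the concrete model supplied by Theorem~\ref{thm:mainToQuantumCat}. Since $\catC$ satisfies the operational principles it is equivalent, as a dagger theory preserving coarse-graining, to $\CPM(\catA)$ for the quantum category $\catA=\plusI{\catC_{\pure}}$, and under this equivalence the pure morphisms are exactly those in the image $\Dbl{\catA}$ of the doubling functor $\Dbl{(-)}\colon \catA \to \CPM(\catA)$, while coarse-graining is the mixing operation of Examples~\ref{example:CPM-mixing}. I may therefore assume $\catC=\CPM(\catA)$ and read ``pure'' as ``of the form $\Dbl{u}$''. The recurring tools will be essential uniqueness of purifications~\eqref{eq:essential-uniqueness}, the presence of additive inverses and well-pointedness in $\catA$ (Proposition~\ref{prop:quant-cat-properties}), and the fact that its scalars form an integral domain (Proposition~\ref{prop:scalars_are_phased_ring}).

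For \eqref{enum:pure}~$\Rightarrow$~\eqref{enum:atomic}, suppose $f=\Dbl{u}$ and $f=g+h$. By the definition of coarse-graining the summands assemble into a single purification $\Dbl{\langle g',h'\rangle}$ of $f$ whose environment is a biproduct $E$ in $\catA$. As $f$ is already pure, $\Dbl{u}$ is itself a purification with trivial environment; padding it with a causal pure state of $E$ and invoking essential uniqueness yields a pure unitary identifying the two, which in $\catA$ forces $\langle g',h'\rangle=(\id{B}\otimes w)\circ u$ for a state $w$ of $E$. Projecting onto the summand carrying $g$ then gives $g=(w_g^{\dagger}\circ w_g)\cdot f$, a scalar multiple of $f$. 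Conversely, for \eqref{enum:atomic}~$\Rightarrow$~\eqref{enum:pure} I argue by contraposition: if $f$ is not pure its minimal purification has environment of `dimension' at least two, so by a Gram--Schmidt orthogonalisation in $\catA$ (as in the proof of Example~\ref{example:MatofPhasedField}) I can write $f=\Dbl{w_1}+\Dbl{w_2}+\cdots$ with the $w_i$ non-zero and pairwise non-proportional. Setting $g=\Dbl{w_1}$ gives a coarse-graining decomposition, and were $g=r\cdot f$, composing with a pure state and effect separating $w_1$ from the other components (possible by well-pointedness, since the $w_i$ are independent) produces a scalar identity $0=r\cdot y$ in $R=S^{\pos}$ with $y\neq 0$. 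As $S$ is an integral domain this forces $r=0$, hence $g=0$ and $w_1=0$, a contradiction; so atomicity holds only when all Kraus components are proportional, i.e.\ $f$ is pure.

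For \eqref{enum:pure}~$\iff$~\eqref{enum:OscarPure} I translate the copurity factorisation of~\cite{cunningham2017purity} through the equivalence $\catC\simeq\CPM(\catA)$. The witnessing morphism $k$ together with its defining constraint amounts, in the model, to exhibiting a \emph{product} dilation of $f$, so the copurity condition says exactly that the purification of $f$ has trivial environment; this is precisely membership in $\Dbl{\catA}$ and hence matches~\eqref{eq:whole}. Both implications then reduce to the same recognition, using homogeneity and well-pointedness in $\catA$ to pass between the factorisation data $k$ and a single Kraus operator $u$.

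I expect the main obstacle to be \eqref{enum:atomic}~$\Rightarrow$~\eqref{enum:pure}. The coarse-graining semiring $R=S^{\pos}$ has no subtraction, so the `rank' argument cannot be carried out directly in $\catC$ and must be lifted into $\catA$, where additive inverses and the integral-domain property are available; the delicate points are producing the separating pure state and effect and ensuring the chosen Kraus components are genuinely non-proportional, both of which rest on well-pointedness and the matrix-like structure of the quantum category.
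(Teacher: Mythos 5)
Your reduction to the model $\CPM(\catA)$ via Theorem~\ref{thm:mainToQuantumCat} is legitimate in itself (this lemma is not used upstream of that theorem, and purity, atomicity and copurity are all transported along the equivalence, coarse-graining being unique by Proposition~\ref{prop:coarse-graining}); your argument for (1)$\Rightarrow$(2) can be made to work this way, and is essentially a model-level version of the paper's direct argument, which instead builds the dilation $g \otimes \ket{0} + h \otimes \ket{1}$ of $f$ inside $\catC$ and applies purity to it. The genuine gap is in (2)$\Rightarrow$(1). Your contrapositive needs a ``minimal purification'' whose environment has ``dimension at least two'', a finite Kraus decomposition $f = \Dbl{w_1}+\Dbl{w_2}+\cdots$, and Gram--Schmidt orthogonalisation. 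None of this is available for a general quantum category $\catA$: the identification $\catA \simeq \Mat_S$ requires boundedness of the scalars (Lemma~\ref{lem:Boundedequivalence}), which is an \emph{extra} hypothesis in Theorem~\ref{thm:main-reconstruction} and not one of the operational principles; and the Gram--Schmidt argument in the proof of Example~\ref{example:MatofPhasedField} divides by norms and so needs the scalars to form a phased \emph{field}, whereas the principles only yield a phased ring --- whether every phased ring is a field is explicitly left open in the paper. So in general there is no finite Kraus decomposition to work with, no notion of environment dimension, and your rank-style argument does not get started.

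The repair is to argue positively, staying inside $\catC$, which is what the paper does: take a purification $p \colon A \to B \otimes C$ of $f$ (if $C=0$ then $f=0$ is pure); for any causal pure state $\psi$ of $C$, causal complementation~\eqref{eq:causally-complemented} gives $\discard{C} = \psi^\dagger + e$, hence $f = (\id{B}\otimes\psi^\dagger)\circ p + (\id{B}\otimes e)\circ p$; atomicity yields $(\id{B}\otimes\psi^\dagger)\circ p = r\cdot f$, which is pure since $p$ and $\psi$ are; choosing $\psi$ in the image of $p$ makes $r$ non-zero, and Lemma~\ref{lem:exclusion-equiv-new}~(\ref{enum:puretensor}) then upgrades purity of $r\cdot f$ to purity of $f$. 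No dimension counting is needed and the argument works for arbitrary scalar semirings. Separately, your treatment of (1)$\iff$(3) is an assertion rather than a proof: copurity is a universally quantified factorisation property, and the claim that it ``says exactly that the purification of $f$ has trivial environment'' is precisely what must be established. The direction (3)$\Rightarrow$(1) does follow quickly from the no-leaks property~\eqref{eq:noleaks}, but (1)$\Rightarrow$(3) requires explicitly constructing the witness $k$ from the hypothesised equation, which the paper does by purifying the morphisms involved and comparing the two purifications with essential uniqueness; your proposal never produces $k$.
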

\begin{proof}
\ref{enum:pure} $\implies$ \ref{enum:atomic}:
Suppose that $f$ is pure and $f = g + h$ for some morphisms $g, h$. 
Let $C$ be any object with a pair of orthonormal pure states $\ket{0}, \ket{1}$, and let $k = g \otimes \ket{0} + h \otimes \ket{1}$. Then $k$ is a dilation of $f$ and so it is simply given by tensoring $f$ with some causal state $\rho$. But then 
\[
\scalebox{0.85}{\input{figures/whole_variation3.tikz}}

\]

\ref{enum:atomic} $\implies$ \ref{enum:pure}:
Let $p \colon A \to B \otimes C$ be a purification of $f$. If $C = 0$ then $f = 0$, otherwise for any pure causal state $\psi$ of $C$ we have $\psi^{\dagger} + e = \discard{C}$ for some effect $e$ by~\eqref{eq:causally-complemented}. But then since $f$ is atomic we have implications
\[
\scalebox{0.85}{\input{figures/atomic-arg.tikz}}

\quad \implies \quad 
\scalebox{0.85}{\input{figures/atomic-arg2.tikz}}

\]
for some scalar $r$. Then since $p$ and $\psi$ is pure so is $r \cdot f$. Now if $f$ is non-zero then so is $p$ and hence there is some pure state $\psi$ of $C$ which belongs to the image of $p$, and so makes the above and the scalar $r$ non-zero. But if $r \neq 0$ and $r \cdot f$ is pure then so is $f$ by Lemma~\ref{lem:exclusion-equiv-new}~\eqref{enum:puretensor}.


\ref{enum:OscarPure} $\implies$ \ref{enum:pure}: thanks to the `no leaks' property~\eqref{eq:noleaks}. 

\ref{enum:pure} $\implies$ \ref{enum:OscarPure}: Let $f$ be pure and suppose the left hand equation is satisfied. Thanks to purification we can assume that $h$ is pure. By bending wires it suffices to consider when $D = I$. Let $l \colon B \otimes C \to F$ purify $g$. If $E$ or $F$ are zero objects then $h = 0$. Otherwise let $\psi$ and $\phi$ be causal pure states of $E, F$ respectively. Then we have
\begin{align*}
\scalebox{0.85}{\input{figures/copurearg.tikz}}
\\
\text{ so } \quad
\scalebox{0.85}{\input{figures/copurearg2.tikz}}

\end{align*}
for some unitary $U$ by essential uniqueness, with the morphism above $f$ on the right-hand side being a dilation of $g$ as required.
\end{proof}

\section{Equivalent Axioms} \label{append:equiv}

We have seen that the operational principles imply the presence of a coarse-graining operation $+$ on processes. Taking this operation instead as primitive allows us to reformulate our principles in an alternative, operationally motivated manner. First let us note that this operation is surprisingly well-behaved. 

\begin{proposition} \label{cor:cancellative}
Let $\catC$ be a dagger theory satisfying the operational principles. Then the following hold for all morphisms $f, g, h$: 
\begin{itemize}
\item
$f + g = f + h \implies g = h$;
\item
$f \otimes g = f \otimes h \implies f = 0 \text{ or } g = h$.
\end{itemize}
\end{proposition}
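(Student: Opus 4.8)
The plan is to deduce both cancellation laws from the reconstruction $\catC \simeq \CPM(\catA)$ of Theorem~\ref{thm:mainToQuantumCat}, where $\catA$ is a quantum category. Since a monoidal equivalence preserves $\otimes$ and, being the unique such operation by Proposition~\ref{prop:coarse-graining}, also preserves $+$, it suffices to prove both statements in $\CPM(\catA)$. The key point is that a morphism of $\CPM(\catA)$ is by definition a particular morphism of $\catA$, so I may regard $f,g,h$ simultaneously as morphisms of the quantum category $\catA$. Under this identification two dictionary facts hold: the monoidal product of $\CPM(\catA)$ is the restriction of $\otimes$ in $\catA$ (up to the coherence isomorphism $(A\otimes A')^*\otimes(A\otimes A')\cong(A^*\otimes A)\otimes(A'^*\otimes A')$), and, by the definition of coarse-graining in Examples~\ref{example:CPM-mixing}, the coarse-graining $\widehat{\langle f',g'\rangle}$ of two Kraus representatives coincides, as a morphism of $\catA$, with the biproduct-addition in $\catA$; here the cross terms vanish because the coprojections $\coproj_C,\coproj_D$ into $C\biprod D$ are orthogonal.

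For the first law I would argue as follows. In the quantum category $\catA$ every morphism has an additive inverse by Proposition~\ref{prop:quant-cat-properties}~\eqref{enum:purepropnegatives}, so each homset of $\catA$ is an abelian group under $+$ and addition there is cancellative. Since coarse-graining in $\CPM(\catA)$ is the restriction of this addition to the representable morphisms, from $f+g=f+h$ we obtain the same equation in the ambient group $\catA(A^*\otimes A,B^*\otimes B)$, whence $g=h$ in $\catA$ and therefore in $\CPM(\catA)$.

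For the second law I would use bilinearity together with the inverses in $\catA$. Viewing the hypothesis $f\otimes g=f\otimes h$ in $\catA$ and subtracting gives $f\otimes(g-h)=0$, using that $\otimes$ distributes over $+$ in the additive monoidal category $\catA$. A quantum category is state-inhabited and has dagger kernels, so Lemma~\ref{lem:dag-ker-in-compact}~\eqref{enum:zero-cancell} applies and yields $f=0$ or $g-h=0$, i.e.\ $f=0$ or $g=h$. Reflecting back through the identification, $f=0$ in $\catA$ is exactly $f=0$ in $\CPM(\catA)$, and $g=h$ in $\catA$ gives $g=h$ in $\CPM(\catA)$, as required.

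The routine parts are the passage through the equivalence and the two invocations of results about $\catA$; the only point needing care, and the main obstacle, is the dictionary fact that coarse-graining in $\CPM(\catA)$ really is the restriction of $\catA$'s superposition-addition. This rests on the vanishing of the cross terms when one doubles $\langle f',g'\rangle$ and discards the environment $C\biprod D$, which follows from orthogonality of the biproduct coprojections together with the compatibility of the cap with biproducts; once this is in hand, both cancellativity statements are inherited from the group structure of the homsets of $\catA$ and from zero cancellativity there.
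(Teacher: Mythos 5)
Your proposal is correct and follows essentially the same route as the paper's proof: the paper likewise reduces to $\CPM(\catA)$ via Theorem~\ref{thm:mainToQuantumCat}, observes that coarse-graining in $\CPM(\catA)$ is simply addition in $\catA$, and concludes from the existence of negatives in the quantum category $\catA$ (implicitly together with the zero-cancellativity of Lemma~\ref{lem:dag-ker-in-compact} for the tensor law). Your write-up merely makes explicit the details the paper leaves tacit, namely that the equivalence preserves $+$ by its uniqueness (Proposition~\ref{prop:coarse-graining}) and that the cross terms in the doubled $\langle f,g\rangle$ vanish by orthogonality of the coprojections.
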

\begin{proof}
We have $\catC \simeq \CPM(\catA)$ for a quantum category $\catA$. But the definition of coarse-graining in $\CPM(\catA)$ is simply addition in $\catA$. Since $\catA$ has negatives $-f$ for all morphisms $f$ it satisfies both properties. 
\end{proof}
\noindent
Under even milder assumptions we can deduce another property of coarse-graining.

\begin{lemma} \label{lem:OpLemma-Redux}
Let $\catC$ be a non-trivial compact dagger theory with coarse-graining and dagger kernels, satisfying \strongpurification{} and pure exclusion. Then all kernels are causal and $f + g = 0 \implies f = g = 0$ for all morphisms $f, g \colon A \to B$. 
\end{lemma}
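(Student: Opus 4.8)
The plan is to deduce both claims from two facts that survive the weakening to these hypotheses, and then to reduce the additive claim to the scalars. First, the zero-discarding law $\discard{} \circ h = 0 \implies h = 0$ holds by exactly the argument of Proposition~\ref{prop:OpTheoryProperties}\eqref{enum:zero-mon}: purify $h$, note $\discard{} \circ g = 0 = \discard{} \circ 0$, and apply essential uniqueness against the purification $0$ of $0$; this uses \strongpurification{} only. Second, every dagger kernel is pure, by the argument of Proposition~\ref{prop:OpTheoryProperties}\eqref{enum:dag-ker-pure}, which invokes only the zero-discarding law and the `no leaks' rule~\eqref{eq:noleaks}, and so never appeals to causal complementation. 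For the first claim of the lemma I then observe that a dagger kernel $k$ is by definition an isometry and is pure by the above, while a pure isometry is causal: the argument used throughout Section~\ref{sec:consequences} to see that unitaries are causal by the CP axiom requires only $k^{\dagger} \circ k = \id{}$. Hence all kernels are causal.

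For the additive claim the key step is the scalar case, $s + t = 0 \implies s = t = 0$. To prepare it I would produce, on some non-trivial object $A$, a pair of orthonormal causal pure states $\ket{0}, \ket{1}$, following Proposition~\ref{prop:OpTheoryProperties}\eqref{enum:pairOfPure}: pure exclusion forces $\Coker(\psi)$ to be non-zero for a causal pure state $\psi$, and then $\coker(\psi)^{\dagger} \circ \eta$ is a second causal pure state orthogonal to $\psi$. This step uses that causal pure states are isometric kernels, which follows from pure exclusion by the implications \eqref{enum:PEa}$\Rightarrow$\eqref{enum:trivImage}$\Rightarrow$\eqref{enum:NormAndExtraa} of Lemma~\ref{lem:exclusion-equiv-new}; I would check that the proofs of these implications call only on purification, pure exclusion, and the causality of kernels just established, not on causal complementation.

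Given such $\ket{0}, \ket{1}$ and scalars with $s + t = 0$, I would form the state $\chi = \ket{0} \cdot s + \ket{1} \cdot t$ and compute, using causality of $\ket{0}, \ket{1}$, that $\discard{A} \circ \chi = s + t = 0$, whence $\chi = 0$ by the zero-discarding law. Composing with $\ket{0}^{\dagger}$ and $\ket{1}^{\dagger}$ and using orthonormality gives $s = \ket{0}^{\dagger} \circ \chi = 0$ and similarly $t = 0$. To lift this to arbitrary $f, g \colon A \to B$ I would bend wires: writing $\lceil h \rceil := (\id{A^*} \otimes h) \circ \tinycup \colon I \to A^{*} \otimes B$, the assignment $h \mapsto \lceil h \rceil$ is a bijection preserving $+$ and $0$, so $f + g = 0$ yields $\lceil f \rceil + \lceil g \rceil = 0$; applying $\discard{}$ produces two scalars summing to $0$, the scalar case gives $\discard{} \circ \lceil f \rceil = 0$, and then the zero-discarding law forces $\lceil f \rceil = 0$, i.e.\ $f = 0$, and likewise $g = 0$.

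The main obstacle is the scalar case itself: \emph{a priori} the coarse-graining monoid $R = \catC(I,I)$ need not be zerosumfree, and without conditioning or causal complementation one cannot invoke the phased-biproduct machinery of Sections~\ref{sec:derivingsups}--\ref{sec:PureProcessProperties} to force it. The crux is the observation that a single pair of orthonormal causal states already suffices to separate the two summands of a vanishing sum purely by composition, so that positivity of the scalars is extracted from pure exclusion alone. Consequently the delicate part of the write-up is the bookkeeping verifying that each ingredient — the zero-discarding law, the purity and causality of kernels, and the existence of orthonormal causal pure states — genuinely holds under the present hypotheses rather than under the full operational principles.
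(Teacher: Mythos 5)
Your proposal is correct and is essentially the paper's own argument: both rest on the observation that Proposition~\ref{prop:OpTheoryProperties} goes through without causal complementation (giving the zero-discarding law, kernels pure and hence causal by the CP axiom, and an orthonormal pair of causal pure states on a non-trivial object), and both then dispose of a vanishing sum by tagging its summands with such a pair and invoking the zero-discarding law. The only differences are cosmetic: the paper runs the tagging directly on $f,g$ (forming $h = f \otimes \ket{0} + g \otimes \ket{1}$, noting $\discard{} \circ h = 0$ so $h = 0$, then projecting out with $\id{} \otimes \ket{i}^{\dagger}$) rather than your scalar-case-plus-wire-bending reduction, and your detour through Lemma~\ref{lem:exclusion-equiv-new} to see that causal pure states are isometric is unnecessary, since that follows directly from the CP axiom ($\discard{} \circ \psi = \id{I} = \discard{I} \circ \id{I}$ gives $\psi^{\dagger} \circ \psi = \id{I}$), which is also all that Proposition~\ref{prop:OpTheoryProperties}\eqref{enum:pairOfPure} actually needs.
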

\begin{proof}
Noting that causal complementation of kernels was not required for its proof, by Proposition~\ref{prop:OpTheoryProperties} all kernels are pure isometries and hence causal by the CP axiom, and any non-trivial object $C$ has an orthonormal pair of pure states $\ket{0}, \ket{1}$. 

Now suppose that $f, g \colon A \to B$ satisfy $f + g = 0$. Then
\[
\scalebox{0.85}{\input{figures/pos-arg.tikz}}

\quad 
\text{ has }
\quad 
\scalebox{0.85}{\input{figures/pos-arg2.tikz}}

\ 0 
\]
and so $h = 0$ by the same proposition. But then applying $\ket{0}$ we obtain $f = 0$, and similarly $g = 0$ also.
\end{proof}

Conversely, if we instead assume the presence of a coarse-graining operation satisfying some natural assumptions, several of our principles in fact follow automatically.

\begin{lemma} \label{lem:caus-complemented}
Let $\catC$ be a dagger theory with coarse-graining satisfying
\begin{align*}
\discard{} + e = \discard{} &\implies e = 0 \label{eq:effect-zero-law}
\\ 
d + e = 0 &\implies d = e = 0
\end{align*}
for all effects $d, e$. 
Then all kernels satisfy~\eqref{eq:causally-complemented} iff all kernels and cokernels are sub-causal. Hence in this case they are causally complemented.
\end{lemma}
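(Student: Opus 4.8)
The plan is to unwind both ``sub-causal'' and the equation~\eqref{eq:causally-complemented} into explicit equations between effects and then argue by composing with a kernel $k \colon K \to A$ and its complement $k^{\bot} \colon K^{\bot}\to A$. I take $\catC$ to have dagger kernels, so that complements and cokernels are available, and recall the only structural facts I will use: $\coker(k) = k^{\bot\dagger}$, $\coker(k^{\bot}) = k^{\dagger}$, together with the orthogonality relations $k^{\dagger}\circ k = \id{K}$, $k^{\bot\dagger}\circ k^{\bot} = \id{K^{\bot}}$ and $k^{\dagger}\circ k^{\bot} = 0 = k^{\bot\dagger}\circ k$. Throughout,~\eqref{eq:causally-complemented} reads $\discard{A} = \discard{K}\circ k^{\dagger} + \discard{K^{\bot}}\circ k^{\bot\dagger}$; sub-causality of $k$ means $\discard{A}\circ k + a = \discard{K}$ for some effect $a$, and sub-causality of the cokernel $\coker(k^{\bot}) = k^{\dagger}$ means $\discard{K}\circ k^{\dagger} + d = \discard{A}$ for some effect $d$.

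For the forward implication I would assume~\eqref{eq:causally-complemented} for every kernel. Composing it on the right with $k$ and using $k^{\dagger}\circ k = \id{K}$ and $k^{\bot\dagger}\circ k = 0$ collapses the right-hand side to $\discard{K}$, so every kernel is in fact causal and hence sub-causal (with witness $0$). For cokernels I would use that every cokernel is the dagger $l^{\dagger}$ of the kernel $l = \ker(f^{\dagger})$; then~\eqref{eq:causally-complemented} applied to $l$ exhibits $\discard{L^{\bot}}\circ l^{\bot\dagger}$ as the required defect effect making $l^{\dagger}$ sub-causal. This direction uses neither cancellation law.

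The reverse implication is where the two cancellation hypotheses enter, and is the main work. Assuming all kernels and cokernels sub-causal, I would first prove that every kernel is causal: composing the cokernel equation $\discard{K}\circ k^{\dagger} + d = \discard{A}$ with $k$ gives $\discard{K} + d\circ k = \discard{A}\circ k$, and feeding this into the sub-causality of $k$ yields $\discard{K} + (d\circ k + a) = \discard{K}$; the law $(\discard{} + e = \discard{} \implies e = 0)$ forces $d\circ k + a = 0$, and then $(d + e = 0 \implies d = e = 0)$ forces $a = 0$ (so $k$ is causal) and $d\circ k = 0$. The vanishing $d\circ k = 0$ lets me factor $d = d'\circ k^{\bot\dagger}$ through $\coker(k)$; substituting back and composing with $k^{\bot}$, the orthogonality relations and the causality of $k^{\bot}$ (another kernel, so causal by the same argument) identify $d' = \discard{K^{\bot}}$, which is exactly~\eqref{eq:causally-complemented}. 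The delicate point is precisely the double use of the cancellation laws to pass from sub-causality to genuine causality before the factorisation argument can be run.

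Finally, for the closing claim I would simply note that in this case both conditions \emph{all kernels are causal} and~\eqref{eq:causally-complemented} now hold, so the characterisation of causal complementation in Proposition~\ref{prop:coarse-graining} applies directly and the kernels are causally complemented.
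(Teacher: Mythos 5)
Your proposal is correct and follows essentially the same route as the paper: the forward direction by composing~\eqref{eq:causally-complemented} with $k$, and the converse by first using the two cancellation laws to upgrade sub-causality of $k$ and $k^{\dagger}$ to causality of all kernels, then identifying the defect effect with $\discard{K^{\bot}} \circ k^{\bot\dagger}$ via factorisation through $\coker(k)$ and causality of $k^{\bot}$. The only cosmetic difference is in the last step, where the paper computes $\discard{K^{\bot}} \circ k^{\bot\dagger} = b$ by a direct chain of equalities rather than by explicitly factoring the defect as $d' \circ k^{\bot\dagger}$ and composing with $k^{\bot}$, but these are the same argument rearranged.
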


\begin{proof}
The equation~\eqref{eq:causally-complemented} makes all cokernels sub-causal, and composing with any kernel $k$ gives that $k$ is causal. Conversely let $k \colon K \to A$ be a kernel for which $k$ and $k^\dagger$ are sub-causal, say with 
\[
\scalebox{0.85}{\input{figures/kersc.tikz}}
  \qquad \qquad 
\scalebox{0.85}{\input{figures/kersc2.tikz}}

\]
for some effects $a, b$. Then since $k$ is an isometry we obtain $\discard{K} = \discard{K} + b \circ k + a$ and so $b \circ k = 0 = 0$. Hence all kernels are causal. It follows that $c = k^{\bot \dagger}$ has
\begin{align*}
\discard{K^\bot} \circ c
&= \discard{A} \circ c^\dagger \circ c  & (\text{c causal}) \\ 
&= b \circ c^\dagger \circ c  & (c \circ k = 0) \\  
&= b  &  (b \circ k = 0)
\end{align*}
as required. The final statement is from Proposition~\ref{prop:coarse-graining}.
\end{proof}

Next, pure exclusion can also be deduced easily.

\begin{lemma} \label{lem:orthogonality}
Let $\catC$ be a dagger theory with \strongpurification, dagger kernels, normalisation, and coarse-graining satisfying \eqref{eq:zero-scalar-law}
for all scalars $r$. Suppose also that $\psi^{\dagger}$ is sub-causal for every causal pure state $\psi$. Then $\catC$ satisfies pure exclusion. 
\end{lemma}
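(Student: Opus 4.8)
The plan is to argue by contradiction and reduce the whole statement to a single cokernel-vanishing claim. Fix a causal pure state $\psi \colon I \to A$ of a non-trivial object $A$. The natural excluding effect is built from the cokernel: if $\coker(\psi) \neq 0$, set $e := \discard{} \circ \coker(\psi)$, so that $e \circ \psi = \discard{} \circ \coker(\psi) \circ \psi = 0$. This $e$ is non-zero, since $\coker(\psi)^{\dagger}$ is an isometry (hence $\coker(\psi)$ is split epi), so $e = 0$ would give $\discard{\Coker(\psi)} = 0$ and then $\Coker(\psi) \simeq 0$ by the purification argument of Proposition~\ref{prop:OpTheoryProperties}\eqref{enum:zero-mon}. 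Thus pure exclusion holds as soon as $\coker(\psi) \neq 0$, and it remains only to rule out a causal pure state $\psi$ with $\coker(\psi) = 0$ on a non-trivial $A$; recall that $\coker(\psi) = 0$ says exactly that $h \circ \psi = 0 \implies h = 0$ for all $h$.

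Next I would feed in the sub-causality hypothesis. Since $\psi$ is causal and pure, $\psi^{\dagger}$ is sub-causal, so $\psi^{\dagger} + e = \discard{A}$ for some effect $e$. Composing on the right with $\psi$ and using causality $\discard{A} \circ \psi = \id{I}$ yields the scalar decomposition $\psi^{\dagger} \circ \psi + e \circ \psi = \id{I}$. The scalar law~\eqref{eq:zero-scalar-law} is then applied to the scalar $e \circ \psi$ occurring in this decomposition of $\id{I}$ to conclude $e \circ \psi = 0$ (equivalently, that $\psi$ is an isometry); by $\coker(\psi) = 0$ this forces $e = 0$, so $\discard{A} = \psi^{\dagger}$, and causality now reads $\psi^{\dagger} \circ \psi = \id{I}$. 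To finish, consider the pure morphism $\psi \circ \psi^{\dagger} \colon A \to A$: it is idempotent and self-adjoint, and causal since $\discard{A} \circ \psi \circ \psi^{\dagger} = \psi^{\dagger} \circ \psi \circ \psi^{\dagger} = \psi^{\dagger} = \discard{A}$. As both $\psi \circ \psi^{\dagger}$ and $\id{A}$ are pure and become $\discard{A}$ after discarding their output, essential uniqueness~\eqref{eq:essential-uniqueness} gives a pure unitary $U$ on $A$ with $\psi \circ \psi^{\dagger} = U$; an idempotent unitary is the identity, so $\psi \circ \psi^{\dagger} = \id{A}$ and $\psi$ is unitary. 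Then $\discard{A} = \psi^{\dagger}$ is unitary, making $A$ trivial and contradicting our standing assumption.

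The hard part is the single application of the scalar law~\eqref{eq:zero-scalar-law}: everything surrounding it is routine bookkeeping with dagger kernels, but collapsing the decomposition $\psi^{\dagger} \circ \psi + e \circ \psi = \id{I}$ (so that $e \circ \psi = 0$) is precisely the step that excludes possibilistic behaviour as in $\Rel$, and it is here that the scalar hypothesis---automatic in any probabilistic theory---must do the real work, with \emph{normalisation} available to put the relevant scalars into normal form. The two remaining points requiring care are verifying that the excluding effect of the first paragraph is genuinely non-zero, and confirming that essential uniqueness applies to the endomorphism $\psi \circ \psi^{\dagger}$ with its output discarded; both follow from the dagger-kernel and purification structure already in place, so I expect essentially all of the genuine content to be concentrated in the invocation of~\eqref{eq:zero-scalar-law}.
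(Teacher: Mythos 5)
Your overall architecture is sound and, at its core, the same as the paper's. Where the paper routes the reduction through Lemma~\ref{lem:exclusion-equiv-new} (normalisation together with ``every causal pure state is a kernel'' yields pure exclusion), you inline that reduction: you build the excluding effect $\discard{}\circ\coker(\psi)$ directly when $\coker(\psi)\neq 0$ and then rule out the case $\coker(\psi)=0$ on a non-trivial object. That variant is legitimate and slightly more self-contained, and your closing step ($\psi\circ\psi^{\dagger}$ is pure, causal and idempotent, hence a unitary by essential uniqueness, hence the identity, so $\discard{A}=\psi^{\dagger}$ is unitary and $A$ is trivial) is exactly the paper's final move.

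However, there is a genuine gap at precisely the step you flag as the hard part. The scalar law \eqref{eq:zero-scalar-law} is of the form $\id{I} + r = \id{I} \implies r = 0$; this is how the paper applies it both here and in the proof of Theorem~\ref{thm:alternate-axioms}, and it requires one of the two summands to \emph{already} be known to equal $\id{I}$. Your decomposition reads $\psi^{\dagger}\circ\psi + e\circ\psi = \id{I}$, and you invoke the law to conclude $e\circ\psi = 0$ --- but that is only legitimate once you know $\psi^{\dagger}\circ\psi = \id{I}$, i.e.~that $\psi$ is an isometry. Your parenthetical ``(equivalently, that $\psi$ is an isometry)'' shows you saw the equivalence, but your deduction runs in the wrong direction: isometry is an \emph{input} to the scalar law here, not an output. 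Without it the step simply fails: from $s + t = \id{I}$ alone nothing forces $t = 0$ (in a probabilistic theory one could a priori have $\psi^{\dagger}\circ\psi = e\circ\psi = \frac{1}{2}$), and normalisation, which you propose as the auxiliary tool, does not close this gap --- it only factors non-zero states as causal states times scalars. The missing ingredient is the CP axiom of the environment structure supplied by \strongpurification{}: $\psi$ and $\id{I}$ are both pure and $\discard{A}\circ\psi = \id{I} = \discard{I}\circ\id{I}$, so the CP axiom gives $\psi^{\dagger}\circ\psi = \id{I}$. This is exactly what the paper's phrase ``since $\psi$ is a causal isometry'' encodes. With that one line added, your decomposition becomes $\id{I} + e\circ\psi = \id{I}$, the scalar law applies, and the rest of your argument goes through unchanged.
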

\begin{proof}
By Lemma~\ref{lem:exclusion-equiv-new} it remains to show that every causal pure state $\psi \colon I \to A$ is a kernel. It suffices to assume $\coker(\psi) = 0$ and show that $\psi$ is unitary. Then there is an effect $e$ for which 
\[
\scalebox{0.85}{\input{figures/pureexcl1.tikz}}
 \quad \text{ and so } \quad 
\scalebox{0.85}{\input{figures/pureexcl12.tikz}}

\] 
since $\psi$ is a causal isometry. Hence $e \circ \psi = 0$ and so $e = 0$ giving $\psi^{\dagger} = \discard{A}$. But then by essential uniqueness $\psi \circ \psi^{\dagger}$ is unitary, making $\psi$ unitary also.
\end{proof}

We can now present our principles in the equivalent manner of Theorem~\ref{thm:alternate-axioms}.

\begin{proof}[Proof of Theorem~\ref{thm:alternate-axioms}]
If $\catC$ satisfies the principles then the first point holds by Corollary~\ref{cor:cancellative}. If $k$ is a kernel then $k^{\dagger}$ is sub-causal by~\eqref{eq:causally-complemented}. In particular if $\psi$ is a causal pure state then $\psi^{\dagger}$ is sub-causal.

Conversely, if these hold then by Lemma~\ref{lem:orthogonality} pure exclusion holds, and so by Lemmas~\ref{lem:caus-complemented} and ~\ref{lem:OpLemma-Redux} it remains to check that $\discard{} + e = \discard{} \implies e = 0$ for all effects $e$. But if this equation is satisfied then we get that $e \circ \rho = 0$ for any causal state $\rho$. In particular, since kernels are causal, for any causal pure state $\psi$ of $\CoIm(e)$ we have $e \circ \coim(e)^{\dagger} \circ \psi = 0$ and so $\psi = 0$. Hence $\CoIm(e) = 0$ and so $e = 0$. 
\end{proof}

\section{Pre-Quantum Categories} \label{sec:pre-quantum}

For any theory $\catC$ satisfying the operational principles, $\catC_\pure$ may be axiomatised as follows.

\begin{definition}
A \emph{pre-quantum} category $\catB$ is a dagger compact category with the superposition properties and dagger kernels, satisfying dagger normalisation and homogeneity as in Definition \ref{def:quantumcat}, and for which $\id{I}$ is its only unitary scalar. We again call a pre-quantum category \emph{non-trivial} when $\id{I} \neq 0$.
\end{definition}

In fact pre-quantum categories can be equivalently defined under a weakening of \eqref{eq:positive-cancellation} called `positive-freeness'; see \cite[Chapter 6]{thesis}.

\begin{proposition}
The correspondence of Theorem~\ref{thm:mainToQuantumCat} extends to include non-trivial pre-quantum categories $\catB$, via $\catB = \catC_\pure$ and $\catA=\plusI{\catB}$.
\end{proposition}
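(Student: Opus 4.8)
The plan is to realise the claimed three-way correspondence as a triangle $\catC \leftrightarrow \catB \leftrightarrow \catA$ sitting on top of the bijection $\catC \leftrightarrow \catA$ already furnished by Theorem~\ref{thm:mainToQuantumCat} and Proposition~\ref{prop:quant-to-theory} via $\catA = \plusI{\catC_\pure}$ and $\catC = \CPM(\catA)$. Concretely I would insert the pre-quantum category as $\catB = \catC_\pure$ (equivalently $\catA = \plusI{\catB}$, with $\catB \simeq \plusI{\catB}_\sim$) and reduce the statement to three verifications: (i) for $\catC$ satisfying the operational principles, $\catC_\pure$ is a non-trivial pre-quantum category; (ii) for a non-trivial pre-quantum $\catB$, the category $\plusI{\catB}$ is a non-trivial quantum category; and (iii) the round trip $\catB \mapsto \CPM(\plusI{\catB})_\pure$ recovers $\catB$, the other round trip $\catC_\pure \mapsto \plusI{\catC_\pure}$ being already contained in Theorem~\ref{thm:mainToQuantumCat}.

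For (i), almost everything has already been assembled. The subcategory $\catC_\pure$ is dagger compact as an environment structure, has the superposition properties by Corollary~\ref{cor:daggerphasedbiproducts}, and has dagger kernels by Proposition~\ref{prop:OpTheoryProperties}~\eqref{enum:dag-ker-pure}, since each kernel of $\catC$ is pure and remains a kernel in $\catC_\pure$. Dagger normalisation, homogeneity and state-inhabitation hold exactly as recorded inside the proof of Proposition~\ref{prop:QuantWOGToQuantBiprod}. The only genuinely new point is that $\id{I}$ is the sole unitary scalar of $\catC_\pure$: any unitary scalar is in particular an isometric scalar, and since all scalars are pure by Lemma~\ref{lem:exclusion-equiv-new} the CP axiom forces it to equal $\id{I}$, precisely the step used repeatedly in e.g.\ Lemma~\ref{lem:strong_homog}. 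Non-triviality is immediate, as $\id{I} \neq 0$ in the non-trivial theory $\catC$.

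For (ii) I would rerun the proof of Proposition~\ref{prop:QuantWOGToQuantBiprod} with $\catB$ in place of $\catC_\pure$: Theorem~\ref{thm:onPHConstr} makes $\plusI{\catB}$ dagger compact with finite dagger biproducts and yields $\catB \simeq \plusI{\catB}_\sim$, while its last line ($[f^\dagger \circ f] = \id{} \implies f^\dagger \circ f = \id{}$) transports dagger normalisation, homogeneity and state-inhabitation from $\catB \simeq \plusI{\catB}_\sim$ up to $\plusI{\catB}$, and kernels lift because $[f] = 0 \implies f = 0$. Non-triviality of $\plusI{\catB}$ follows from that of $\catB$ through the same implication.

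The heart of the argument, and the step I expect to be the main obstacle, is matching the global phases in (iii). Theorem~\ref{thm:onPHConstr} only asserts $\catB \simeq \plusI{\catB}_\sim$ for \emph{some} chosen subgroup $\mathbb{P}$ of unitary scalars, whereas Proposition~\ref{prop:quant-to-theory} identifies $\CPM(\catA)_\pure$ with $\catA_\sim$ for the quotient by \emph{all} unitary scalars of $\catA = \plusI{\catB}$. I would close this gap using exactly the pre-quantum hypothesis that $\id{I}$ is the only unitary scalar of $\catB$: given any unitary scalar $s$ of $\catA$, its image $[s]$ is a unitary scalar of $\catB \simeq \catA_\sim$, hence $[s] = \id{I}$, so $s \in \mathbb{P}$; thus $\mathbb{P}$ is forced to be the whole group of unitary scalars and the two quotients coincide. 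Combining $\catB \simeq \plusI{\catB}_\sim$ with the equivalence $\plusI{\catB}_\sim \simeq \CPM(\plusI{\catB})_\pure$ from Proposition~\ref{prop:quant-to-theory} then gives $\CPM(\plusI{\catB})_\pure \simeq \catB$, completing the round trip and hence the extended correspondence.
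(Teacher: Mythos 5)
Your proof is correct and follows essentially the same route as the paper's: verify that $\catC_\pure$ is a non-trivial pre-quantum category (the paper treats this as immediate from Sections~\ref{sec:consequences} and~\ref{sec:derivingsups} and strong purification, while you spell out the new ``only unitary scalar'' point via the CP axiom), rerun Proposition~\ref{prop:QuantWOGToQuantBiprod} with $\catB$ in place of $\catC_\pure$, and close the loop via $\catB \simeq \plusI{\catB}_\sim \simeq \CPM(\plusI{\catB})_\pure$ using Proposition~\ref{prop:quant-to-theory}. Your explicit argument that the pre-quantum hypothesis forces the chosen global phases $\mathbb{P}$ of Theorem~\ref{thm:onPHConstr} to coincide with \emph{all} unitary scalars of $\plusI{\catB}$ is a detail the paper leaves implicit, and it is exactly the right way to make the two quotients agree.
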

\begin{proof}
Firstly, if $\catC$ satisfies the operational principles, we must check that $\catC_\pure$ is a pre-quantum category. But this is immediate from the results of Sections \ref{sec:consequences} and \ref{sec:derivingsups} and strong purification.

Next, the proof that if $\catB$ is a pre-quantum category then $\catA := \plusI{\catB}$ is a quantum category is almost exactly that of Proposition \ref{prop:QuantWOGToQuantBiprod} (replacing $\catC_\pure$ by $\catB$). Observe also that if $\catB$ is non-trivial then so is $\catA$. Finally, for the correspondence,  note that in this case $\catB \simeq \catA_\sim \simeq \CPM(\catA)_{\pure}$, with the latter equivalence found in the proof of Theorem~\ref{thm:mainToQuantumCat}.
\end{proof}

\section{Phased Rings and Fields} \label{sec:phasedFields}

We have left open the question of whether every phased ring $S$ is in fact a field. Two results in this direction are as follows. 

\begin{lemma} \label{lem:no_nontriv_subob}
Let $\catA$ be a quantum category, $S$ its ring of scalars and $R = S^{\pos}$. 
The following are equivalent:
\begin{enumerate} 
\item \label{enum:R-divis}
$R$ is a semi-field;
\item \label{enum:S-divis}
$S$ is a field;
\item \label{enum:Subob}
In $\catA$ the only sub-objects of $I$ are $\{0, I\}$. 
\end{enumerate}
\end{lemma}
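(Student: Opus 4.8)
The three conditions are all invertibility statements, so the plan is to prove the cycle $(1)\Rightarrow(2)\Rightarrow(3)\Rightarrow(1)$, using throughout that by Proposition~\ref{prop:scalars_are_phased_ring} the ring $S=\catA(I,I)$ is a commutative involutive integral domain with $R=S^{\pos}$, and that by Lemma~\ref{lem:dag-ker-in-compact}~\eqref{enum:zero-cancell} a product of scalars vanishes only if a factor does. One orienting remark must come first: since the image of any monic into $I$ is a dagger-kernel subobject of $I$, and state-inhabitation forces such kernels to be trivial, the word \emph{sub-object} here must mean an ordinary subobject, i.e.\ an isomorphism class of monics $m\colon A\to I$ (recall from the proof of Proposition~\ref{prop:quant-cat-properties}~\eqref{enum:dagmonodagkernel} that, since negatives exist, $m$ is monic iff $\ker(m)=0$). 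The whole point is that a non-unit non-zero scalar gives a monic $I\to I$ that is \emph{not} an isomorphism, hence a genuinely proper subobject. I will assume $\catA$ is non-trivial, i.e.\ $\id{I}\neq 0$, the case $\id{I}=0$ being degenerate.

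For $(1)\Rightarrow(2)$ I would take any non-zero $s\in S$ and form $r=s^{\dagger}\circ s\in R$, which is non-zero by the integral domain property; if $R$ is a semi-field then $r$ has an inverse $t\in R\subseteq S$, and then $s^{\dagger}\circ t$ is a two-sided inverse of $s$ (using commutativity), so $S$ is a field. For the converse $(3)\Rightarrow(1)$, arguing contrapositively, I would start from a non-zero $r\in R$ with no inverse in $R$; writing $r=s^{\dagger}\circ s$ with $s\neq 0$, this $r$ also fails to be invertible in $S$, for otherwise $r^{-1}=(s^{-1})^{\dagger}\circ s^{-1}$ would again lie in $R$. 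Thus $r$ is a non-zero, non-invertible scalar; by zero-cancellativity it is monic, so $[r\colon I\to I]$ is a subobject of $I$ distinct from both $[\id{I}]$ (as $r$ is not an isomorphism) and $[0\colon 0\to I]$ (as $I\not\simeq 0$), contradicting $(3)$.

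The substantive step is $(2)\Rightarrow(3)$. Here I would take an arbitrary monic $m\colon A\to I$ with $A\not\simeq 0$ and show it is an isomorphism. By state-inhabitation there is an isometry $\psi\colon I\to A$, and $m\circ\psi$ is a scalar. It cannot be $0$: from $m\circ\psi=0=m\circ 0$ and monicity we would get $\psi=0$, contradicting $\psi^{\dagger}\circ\psi=\id{I}\neq 0$. Since $S$ is a field, $m\circ\psi$ is therefore invertible, so $m\circ(\psi\circ(m\circ\psi)^{-1})=\id{I}$ exhibits $m$ as a split epimorphism; a monic split epimorphism is an isomorphism. Hence the only subobjects of $I$ are $[0]$ and $[\id{I}]$, giving $(3)$.

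The main obstacle I anticipate is conceptual rather than computational: correctly pinning down the notion of subobject, since the evident dagger-kernel reading makes $(3)$ vacuously true and breaks the equivalence. Once one reads subobjects as iso-classes of monics, the real content is the interplay in $(2)\Rightarrow(3)$ between state-inhabitation (which produces the probing isometry $\psi$) and the field hypothesis (which inverts $m\circ\psi$ and splits $m$); the dual fact, that a non-unit scalar is a non-invertible monic, is what makes non-fields produce proper subobjects. The remaining bookkeeping—non-triviality of $\catA$ and the routine manipulations with the involution—is straightforward.
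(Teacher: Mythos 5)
Your proposal is correct and follows essentially the same route as the paper: invertibility passes between $R$ and $S$ via $s \mapsto s^{\dagger}\cdot s$; when $S$ is a field any monic $m \colon A \rightarrowtail I$ splits and is hence an isomorphism; and conversely a non-zero non-invertible scalar is monic (zero-cancellativity plus the existence of negatives) and so gives a proper subobject of $I$ — and your reading of sub-objects as isomorphism classes of ordinary monics is exactly the paper's, whose proof quantifies over monics $m \colon A \rightarrowtail I$. The only divergence is in $(2)\Rightarrow(3)$: you split $m$ by inverting the scalar $m \circ \psi$ for an isometry $\psi$ supplied by state-inhabitation, which forces your case split on $A \simeq 0$ and the explicit non-triviality assumption $\id{I} \neq 0$, whereas the paper inverts $m \circ m^{\dagger}$ and handles the zero case uniformly via the rule $f^{\dagger} \circ f = 0 \implies f = 0$.
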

\begin{proof}
\ref{enum:R-divis} $\iff$ \ref{enum:S-divis}: $R \subseteq S$ and an element $s \in S$ is invertible iff $s^{\dagger} \cdot s$ is. 

\ref{enum:S-divis} $\implies$ \ref{enum:Subob}:
Let $m \colon A \rightarrowtail I$ be monic. Then if $r: = m \circ m^\dagger$ is zero then $m = 0$ by~\eqref{eq:rule}. Otherwise $r$ is invertible and hence so is $m$. 

\ref{enum:Subob} $\implies$ \ref{enum:S-divis}: 
Thanks to  Lemma~\ref{lem:dag-ker-in-compact} every non-zero scalar $r$ has $\ker(r) = 0$. Since $\catA$ has negatives by Proposition~\ref{prop:quant-cat-properties}, this makes $r$ monic and hence an isomorphism by assumption. 
\end{proof}

Recall the order $\leq$ on $S^{\pos}$ from Lemma~\ref{lem:square roots}. 

\begin{lemma} \label{lem:its_a_field}
Let $S$ be a phased ring and suppose that every $a \in S^{\pos}$ has $\frac{1}{n} \leq a$ for some $n \in \mathbb{N}$. Then $S$ is a field.
\end{lemma}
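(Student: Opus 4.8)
The plan is to reduce the whole statement to the single fact, already noted in the discussion, that $1 + c$ is invertible for every $c \in S^{\pos}$. This is immediate from the phased ring axiom: writing $c = s^{\dagger} \cdot s$ and applying the axiom to the pair $1, s$ produces some $w$ with $1 + s^{\dagger} \cdot s = w^{\dagger} \cdot w$ together with a factorisation $1 = w \cdot d$. The latter makes $w$ a unit (its right inverse $d$ is two-sided since $S$ is commutative), hence $w^{\dagger}$ is a unit, and therefore $w^{\dagger} \cdot w = 1 + c$ is invertible. I would isolate this as the engine of the argument, since everything else follows from it.

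Next I would observe that this already forces every positive integer to be invertible in $S$. Since $1 = 1^{\dagger}\cdot 1 \in S^{\pos}$ and $S^{\pos}$ is closed under addition, for $n \geq 1$ the element $n-1$ (an $(n-1)$-fold sum of $1$) lies in $S^{\pos}$, so $n = 1 + (n-1)$ is invertible by the previous step. In particular $\tfrac{1}{n} = (n^{-1})^{\dagger}\cdot n^{-1}$ lies in $S^{\pos}$, so the inequality $\tfrac{1}{n} \leq a$ in the hypothesis is genuinely meaningful.

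The heart of the proof is then a one-line manipulation converting the order hypothesis into the invertibility engine. Given a non-zero $a \in S^{\pos}$, the assumption gives $n$ with $\tfrac1n \leq a$, that is $a = \tfrac1n + b$ for some $b \in S^{\pos}$. Multiplying through by the unit $n$ yields $n \cdot a = 1 + n\cdot b$, and $n \cdot b \in S^{\pos}$ (again because $S^{\pos}$ is closed under addition), so $n\cdot a$ has the form $1 + (\text{positive})$ and is invertible; as $n$ is invertible, so is $a = n^{-1}\cdot(n\cdot a)$. To conclude that $S$ is a field I would then argue exactly as in Lemma~\ref{lem:no_nontriv_subob}: a general $s$ is invertible iff $s^{\dagger}\cdot s$ is, and for $s \neq 0$ the integral domain property gives $s^{\dagger}\cdot s \neq 0$, so the positive element $s^{\dagger}\cdot s$ is invertible by the line above, whence $s$ is too.

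I do not expect a genuine obstacle here: once the invertibility of $1 + c$ is in hand, the rest is bookkeeping. The one point I would check carefully is the handling of $0$, since the order hypothesis must be read for non-zero $a$ (an infinitesimal being a non-zero positive element lying below every $\tfrac1n$). This needs the remark that $s \neq 0$ forces $s^{\dagger}\cdot s \neq 0$, together with antisymmetry of $\leq$ — if $c + c' = 0$ with $c, c' \in S^{\pos}$ then factoring through $0$ in the phased ring axiom gives $c = c' = 0$ — to rule out the degenerate readings.
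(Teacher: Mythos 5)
Your proof is correct and takes essentially the same route as the paper's: the paper's key observation that $a \leq b$ implies $b$ divides $a$ for positives, specialised to $a = 1$, is precisely your engine that $1 + c$ is invertible for every $c \in S^{\pos}$, and both arguments then rescale the hypothesis $\frac{1}{n} \leq a$ by $n$ and conclude for general $s \neq 0$ via $s^{\dagger} \cdot s$. One small slip in a side remark: $(n^{-1})^{\dagger} \cdot n^{-1}$ equals $n^{-2}$, not $\frac{1}{n}$, though the conclusion $\frac{1}{n} \in S^{\pos}$ still holds (e.g.\ $\frac{1}{n} = n \cdot n^{-2}$ is a product of positives) and nothing in your main argument depends on it.
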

\begin{proof}
From the definition of a phased ring, we have that for $a, b \in S^{\pos}$ whenever $a \leq b$ then $a$ is divisible by $b$. If $a \neq 0$ then $1 \leq a \cdot n$ for some $n \in \mathbb{N}$, making $a \cdot n$ invertible and hence $a$ also. Then every $s \in S$ is invertible since $s^{\dagger} \cdot s \in S^\pos$ is. 
\end{proof}

\section{Hilbert Categories} \label{apped:Hilbert}

Our notion of a quantum category is similar to the following axiomatization of the category $\Hilb$ due to Chris Heunen~\cite{heunen2009embedding}. A \emph{Hilbert category} is a dagger symmetric monoidal category $\catH$ with dagger equalisers and finite dagger biproducts such that:
\begin{itemize}
\item every isometry is a kernel;
\item every morphism $f \colon A \to B$ has a \emph{bound} in that there is some scalar $s$ such that for every $\psi \colon I \to A$ we have $\psi^{\dagger} \circ f^{\dagger} \circ f \circ \psi = (s^{\dagger} \circ s) \circ (\psi^{\dagger} \circ \psi) + r$ for some positive scalar $r$.
\end{itemize}

\begin{lemma} \label{lem:QuantIsHilbert}
Let $\catA$ be a quantum category. Then $\catA$ is a well-pointed Hilbert category.
\end{lemma}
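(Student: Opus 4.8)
The goal is to show that any quantum category $\catA$ is a well-pointed Hilbert category. Recall a Hilbert category requires: a dagger symmetric monoidal structure with dagger equalisers and finite dagger biproducts, every isometry being a kernel, and every morphism having a bound. Let me see what's already available.

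First, a quantum category is by definition a dagger compact (hence dagger symmetric monoidal) category with finite dagger biproducts and dagger kernels, so the monoidal and biproduct structure come for free. Dagger equalisers were established in Proposition (quant-cat-properties)(2), and "every isometry is a kernel" is exactly Proposition (quant-cat-properties)(3). Well-pointedness is Proposition (quant-cat-properties)(5). So all of these are already proven.

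The only genuinely new thing to verify is the **bound** condition: for every $f \colon A \to B$ there is a scalar $s$ such that for every state $\psi \colon I \to A$ we have $\psi^{\dagger} \circ f^{\dagger} \circ f \circ \psi = (s^{\dagger} \circ s) \circ (\psi^{\dagger} \circ \psi) + r$ for some positive scalar $r$ (depending on $\psi$). This is where the work lies.

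Let me think about the structure. I want to bound the operator $f^\dagger \circ f$. Since $\catA$ is a quantum category, by Proposition (scalars_are_phased_ring) its scalars $S$ form a phased ring, and $R = S^{\pos}$ carries an order (from Lemma (square roots), once we know square roots exist — but we may not have square roots in general; we must be careful only to use properties available for arbitrary phased rings). The bound condition is essentially asking that $f^\dagger \circ f \le (s^\dagger \circ s) \cdot \id{A}$ as positive operators, expressed pointwise on states, where $\le$ means the difference is positive.

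My plan is to reduce to a diagonalization argument. First, I'll note that $p := f^\dagger \circ f$ is a positive endomorphism of $A$. I want to show $p$ is bounded above by a multiple of the identity. The natural strategy: use the biproduct/kernel structure to find an orthonormal "basis" of $A$ — more precisely, decompose $\id{A}$ into a sum of rank-one kernels, so that $A \simeq n \cdot I$ after a unitary (this is exactly the kind of decomposition carried out in the proof of Lemma (Boundedequivalence), using dagger normalisation together with Proposition (quant-cat-properties)(3),(4) to split off isometric states one at a time). However, I must be cautious: that decomposition used boundedness of $R$, which is not assumed here. So I cannot assume $A \simeq n\cdot I$.

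Instead I would argue more directly. For a fixed state $\psi$, write $\langle \psi \mid p \mid \psi\rangle := \psi^\dagger \circ p \circ \psi$ and $\langle\psi\mid\psi\rangle := \psi^\dagger \circ \psi$. The claim is that there is a single scalar $s$, independent of $\psi$, with $(s^\dagger\circ s)\langle\psi\mid\psi\rangle - \langle\psi\mid p\mid\psi\rangle \in R$ for all $\psi$. Writing $f\circ\psi =: \phi$, we have $\langle\psi\mid p\mid\psi\rangle = \phi^\dagger\circ\phi$, a positive scalar, so the condition is that $(s^\dagger\circ s)(\psi^\dagger\circ\psi) - \phi^\dagger\circ\phi$ is positive. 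I would try to take $s$ related to a bound on $f$ itself; the cleanest route is probably to exhibit, using the polar-type / homogeneity structure, that $f$ factors in a way making $f^\dagger f \le c \cdot \id{A}$.

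The honest difficulty, and the step I expect to be the main obstacle, is producing the bound **uniformly over all states $\psi$ with a single scalar $s$**, without invoking boundedness of $R$ and without assuming finite-dimensionality-via-$n\cdot I$. I would attempt the following: apply dagger normalisation and homogeneity to write $f$ through its coimage, reducing to the case where $f$ is "injective"; then use that $f^\dagger\circ f$ is a positive diagonalisable operator and invoke the phased-ring property $a^\dagger a + b^\dagger b = c^\dagger c$ (closure of $R$ under the relevant addition) to assemble a single dominating scalar. Concretely, if one can write $\id{A} = \sum_i k_i \circ k_i^\dagger$ for orthonormal kernel states $k_i$ and set $s_i := $ the "eigenvalue" scalar $k_i^\dagger\circ p\circ k_i$, then taking $s$ with $s^\dagger s$ dominating each $s_i$ (possible if the index set is finite) yields the bound; finiteness of that decomposition is exactly what needs justifying in the absence of boundedness, and I suspect the intended proof either restricts attention to this being automatic in a quantum category or supplies a separate finiteness argument. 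I would flag this finiteness/uniformity point as the crux and expect the remaining verifications (equalisers, isometries-as-kernels, well-pointedness) to be immediate citations of Proposition (quant-cat-properties).
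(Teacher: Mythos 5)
You are right that the dagger symmetric monoidal structure, finite dagger biproducts, dagger kernels, dagger equalisers, isometries-being-kernels, and well-pointedness all come for free from the definition of a quantum category together with Proposition~\ref{prop:quant-cat-properties}, and you are also right that the bound condition is where the real work lies. But your proposal stops exactly there: the route you sketch --- resolving $\id{A} = \sum_i k_i \circ k_i^{\dagger}$ over a finite orthonormal family of kernel states and choosing $s$ to dominate the finitely many scalars $k_i^{\dagger} \circ f^{\dagger} \circ f \circ k_i$ --- hinges on a finiteness claim that, as you yourself observe, is unavailable: such a total decomposition of $\id{A}$ is precisely what the proof of Lemma~\ref{lem:Boundedequivalence} constructs, and it requires boundedness of $R$, which is not assumed here. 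Flagging this as the crux is not the same as closing it, so the proposal has a genuine gap, and the particular repair you anticipate (``a separate finiteness argument'') does not exist in this generality.

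The missing idea is to use dagger compactness to manufacture a single dominating scalar without any basis: the trace. By dagger normalisation it suffices to prove the bound for isometric $\psi$ (writing a general state as $\sigma \circ r$ with $\sigma$ an isometry, both sides of the bound merely pick up the factor $r^{\dagger} \circ r$). An isometric state is a kernel by Proposition~\ref{prop:quant-cat-properties}~(3), so by part~(4), with $c = \coker(\psi)$, one has $\id{A} = \psi \circ \psi^{\dagger} + c^{\dagger} \circ c$. Inserting this identity under the trace of $f^{\dagger} \circ f$ (which exists since $\catA$ is compact) and using cyclicity gives
\[
\Tr(f^{\dagger} \circ f)
\;=\;
\psi^{\dagger} \circ f^{\dagger} \circ f \circ \psi
\;+\;
\Tr\bigl((f \circ c^{\dagger})^{\dagger} \circ (f \circ c^{\dagger})\bigr),
\]
where the second summand is a positive scalar. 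Since $\Tr(f^{\dagger} \circ f)$ is itself positive, dagger normalisation lets you write it as $s^{\dagger} \circ s$ for a scalar $s$, and this single $s$ --- manifestly independent of $\psi$ --- is the required bound. Note that only the one state $\psi$ under consideration is ever split off from $\id{A}$; no orthonormal basis, no finiteness, and no boundedness of $R$ enters. This is exactly how the paper argues, and it is the step your plan was missing.
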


\begin{proof}
By Proposition~\ref{prop:quant-cat-properties} it remains to verify that every morphism $f \colon A \to B$ has a bound $s$ as above. Thanks to dagger normalisation it suffices to consider when $\psi$ is an isometry and hence a kernel. Then letting $c = \coker(\psi)$ by Proposition~\ref{prop:quant-cat-properties}~\eqref{enum:decomp-property} we have $\id{A} = \psi \circ \psi^{\dagger} + c^\dagger \circ c$ so that:
\[
\scalebox{0.85}{\input{figures/Trace-pic.tikz}}

\]
since the right-hand scalar is positive, we may take $s$ as the left-hand scalar.
\end{proof}

By~\cite[Theorem~4]{heunen2009embedding} this means that whenever $\catA$ is a quantum category which is locally small and has that its ring of scalars $S$ is a field (as in Lemma~\ref{lem:no_nontriv_subob}) of at most continuum cardinality, there is a lax dagger monoidal embedding $\catA \hookrightarrow \Hilb$ up to some isomorphism of $S$. It would be interesting to further explore the connections between our result and Heunen's.

\end{document}

\end{document}